\newif\ifshort
\newenvironment{claimproof}{}{\hfill$\triangleleft$}
\title{Aligned Drawings of Planar Graphs \thanks{Partially supported by grant WA
	654/21-1 of the German Research Foundation (DFG).  A preliminary version of
	this paper appeared in the {\em Proceedings of the 25th International
		Symposium on Graph Drawing (GD '17)}. }}
\author{Tamara Mchedlidze\inst{1} \and Marcel Radermacher\inst{1} \and Ignaz
Rutter\inst{2} \and}
\institute{
		Department of Computer Science, Karlsruhe  Institute of Technology,
	Germany \and
	Department of Computer Science and Mathematics, University
	of Passau, Germany \\
	\email{mched@iti.uka.de}, \email{radermacher@kit.edu}, \email{rutter@fim.uni-passau.de}}
\begin{document}
\maketitle

\begin{abstract} 
	Let $G$ be a graph that is topologically embedded in the plane and let $\CurArr$ be
	an arrangement of pseudolines intersecting the drawing of $G$.  An
	\emph{aligned} drawing of $G$ and $\CurArr$ is a planar polyline drawing
	$\Gamma$ of $G$ with an arrangement $\Arr$ of lines so that $\Gamma$ and
	$\Arr$ are homeomorphic to $G$ and $\CurArr$.  We show that if	$\CurArr$ is
	stretchable and every edge $e$ either entirely lies on a pseudoline or it has
	at most one intersection with $\CurArr$, then $G$ and $\CurArr$ have a
	straight-line aligned drawing.  In order to prove this result, we
	strengthen a result of Da Lozzo et al.~\cite{DaLozzo2016}, and prove that
	a planar graph $G$ and a single pseudoline $\Cur$ have an aligned drawing
	with a prescribed convex drawing of the outer face.	We also study the less
	restrictive version of the alignment problem with respect to one line,
	where only a set of vertices is given and we
	need to determine whether they can be collinear. We show that the problem
	is $\cNP$-complete but fixed-parameter tractable.  \end{abstract}

\section{Introduction}
\label{sec:introduction}

Two fundamental primitives for highlighting structural properties of a
graph in a drawing are \emph{alignment} of vertices such that they are
collinear, and geometric \emph{separation} of unrelated graph parts, e.g.,
by a straight line.  Both
these techniques have been previously considered from a theoretical
point of view in the case of planar straight-line drawings.

Da Lozzo et al.~\cite{DaLozzo2016} study the problem of producing a planar
straight-line drawing of a given embedded graph $G=(V,E)$ (i.e., $G$ has a fixed
combinatorial embedding and a fixed outer face) such that a given set $S \subseteq V$ of
vertices is collinear.  It is clear that if such a drawing exists, then the line
containing the vertices in $S$ is a simple curve starting and ending at infinity that
for each edge $e$ of $G$ either fully contains $e$ or intersects $e$ in at most
one point, which may be an endpoint.  We call such a curve a \emph{pseudoline
with respect to $G$}.  
Da Lozzo et al.~\cite{DaLozzo2016} show that
this is a full characterization of the alignment problem, i.e., a planar straight-line
drawing where the vertices in $S$ are collinear exists if and only if there
exists a pseudoline $\mathcal L$ with respect to $G$ that contains the vertices
in $S$. However, the computational complexity of deciding whether such a
pseudoline exists is an open problem, which we consider in this paper.

Likewise, for the problem of separation, Biedl et al.~\cite{Biedl1998}
considered so-called $HH$-drawings where, given an embedded graph $G=(V,E)$ and
a partition $V = A \cupdot B$, one seeks a $y$-monotone planar polyline drawing
of $G$ with few bends in which $A$ and $B$ can be separated by a line.  Again,
it turns out that such a drawing exists if there exists a pseudoline
$\mathcal L$ with respect to $G$ such that the vertices in $A$ and $B$ are
separated by $\mathcal L$.  As a side-result Cano et
al.~\cite{doi:10.1137/130924172} extend the result of Biedl et al. to planar
straight-line drawings with a given star-shaped outer face.

The aforementioned results of Da Lozzo et al.~\cite{DaLozzo2016} show that given
a pseudoline $\mathcal L$ with respect to $G$ one can always find a planar
straight-line drawing of $G$ such that the vertices on $\mathcal L$ are
collinear and the vertices contained in the half-planes defined by $\mathcal L$
are separated by a line $L$.  In other words, a topological configuration
consisting of a planar embedded graph~$G$ and a pseudoline with respect to $G$
can always be stretched.  In this paper, we initiate the study of this
stretchability problem with more than one given pseudoline.

\begin{figure}[tb]
	\centering
	\subfloat[]{
		\includegraphics[page=5]{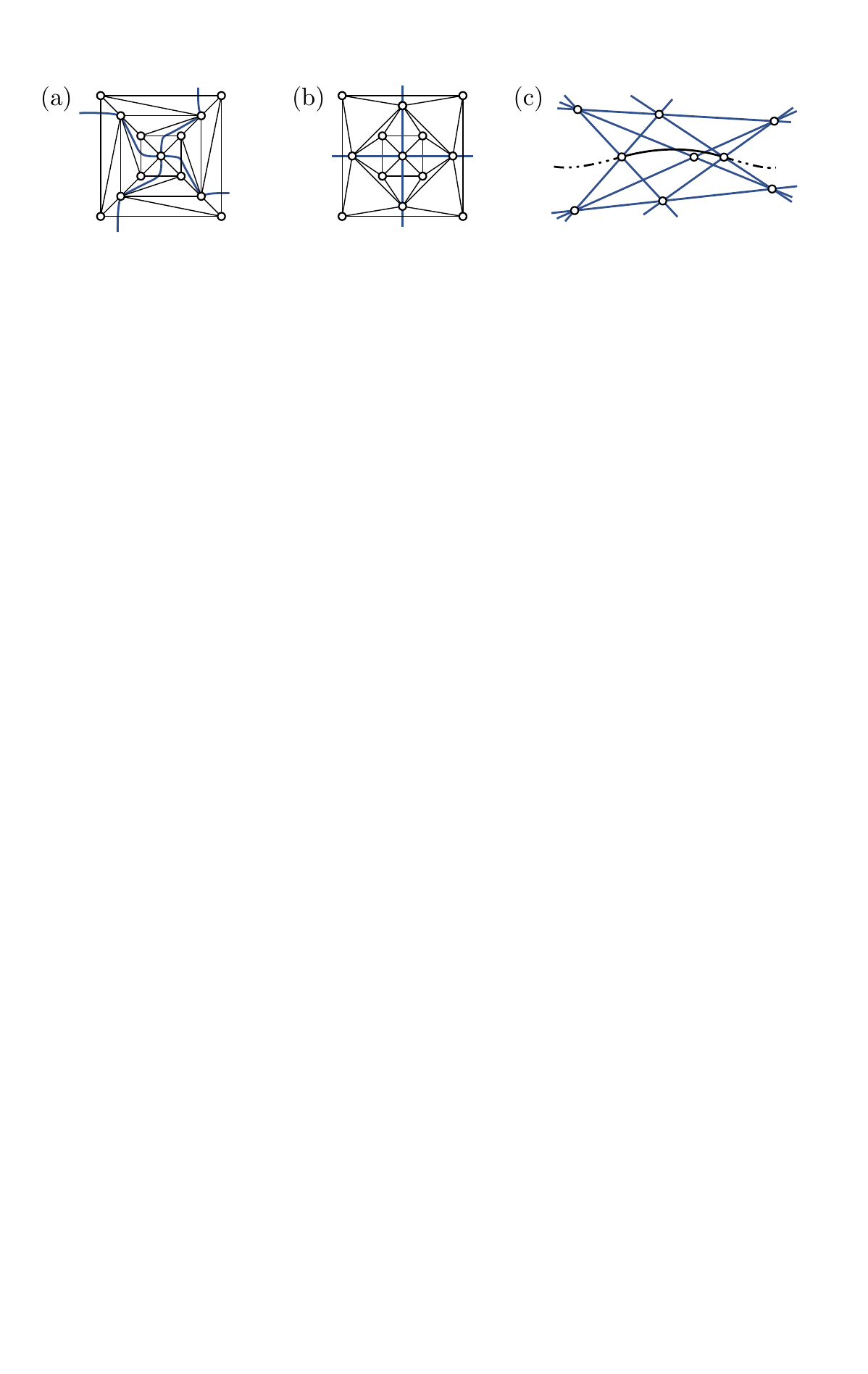}
	}
	\quad
	\subfloat[]{
		\includegraphics[page=6]{figures/aligned_drawings}
	}
	\quad
	\subfloat[\label{fig:intro:aligned_drawing:pappus}]{
		\includegraphics[page=4]{figures/aligned_drawings}
	}
	\caption{(Pseudo-) Lines are depicted as blue curves, edges are black. The
		color of the cells indicates the bijection $\phi$ between the cells of
		$\CurArr$ and $\Arr$.
		Aligned drawing (b) of a 2-aligned planar embedded graph~(a).  (c) A
		non-stretchable arrangement of $9$
pseudolines (blue and black), which can be seen as a stretchable arrangement of $8$ pseudolines
(blue) and an edge (black solid). }
	\label{fig:intro:aligned_drawing}
\end{figure}

More formally, a pair $(G, \CurArr)$ is a \emph{$k$-aligned graph} if $G=(V, E)$
is a planar embedded graph and $\CurArr = \{\Cur_1, \dots, \Cur_k\}$ is an
arrangement of (pairwise intersecting) pseudolines with respect to $G$. In case
that every pair of distinct pseudolines intersect at most once, we refer to
$\CurArr$ as a \emph{pseudoline arrangement}.  If the number $k$ of pseudolines is
clear from the context, we drop it from the notation and simply speak of
\emph{aligned graphs}. For $1$-aligned graphs we write $(G, \Cur)$ instead of
$(G,\{\Cur\})$.  Let $A = \{L_1, \dots, L_k\}$ be a line arrangement   and
$\Gamma$ be a planar drawing of $G$.
A tuple $(\Gamma,A)$ is an \emph{aligned drawing of $(G, \CurArr)$} if and only
if the arrangement of the union of $\Gamma$ and $A$ is homeomorphic to the
arrangement of the union of $G$ and $\CurArr$.
A (pseudo)-line arrangement divides the plane into a set of \emph{cells}
$\Cell_1, \Cell_2, \dots, \Cell_\ell$. If $A$ is homeomorphic to $\CurArr$, then
there is a bijection $\phi$ between the cells of $\CurArr$ and the cells of $\Arr$.
If  $(\Gamma, A)$ is an aligned drawing of $(G, \CurArr$), then it has
the following properties; refer to Fig.~\ref{fig:intro:aligned_drawing}(a-b).
\begin{inparaenum}[(i)]
	\item The arrangement of $A$ is homeomorphic to the arrangement of $\CurArr$
		(i.e., $\CurArr$ is \emph{stretchable} to $A$),
	\item $\Gamma$ is homeomorphic to the planar embedding of $G$,
	\item the intersection of each vertex $v$ and each edge $e$ with a cell $\Cell$ of
		$\CurArr$ is non-empty if and only if the intersection of $v$ and $e$
		 with $\phi(\Cell)$ in $(\Gamma, \Arr)$, respectively, is non-empty,
	\item if an edge $uv$ (directed from $u$ to $v$) intersects a sequence of cells $\Cell_1, \Cell_2, \dots, \Cell_r$
		in this order, then $uv$ intersects in $(\Gamma, \Arr)$ the cells
		$\phi(\Cell_1), \phi(\Cell_2), \dots, \phi(\Cell_r)$ in this order, and 
	\item each line $L_i$ intersects in $\Gamma$ the same vertices and edges as
	$\Cur_i$ in $G$, and it does so in the same order.
\end{inparaenum}
%
%
We focus on straight-line aligned drawings. For brevity, unless stated
otherwise, the term aligned drawing refers to a straight-line drawing throughout
this paper.

Note that the stretchability of $\CurArr$ is a necessary condition for the
existence of an aligned drawing. Since testing stretchability is
$\cNP$-hard~\cite{mnev1988universality,shor1991stretchability}, we assume that a
geometric realization $A$ of $\CurArr$ is provided.  Line arrangements
of size up to 8 are always stretchable~\cite{GOODMAN1980385}, and only starting
from nine lines non-stretchable arrangements exist; see the Pappus
configuration~\cite{levi1926teilung} in
Fig.~\ref{fig:intro:aligned_drawing:pappus}.
This figure also illustrates an example of an
$8$-aligned graph with a single edge that does not have an aligned drawing.
It is conceivable that in practical applications, e.g., stemming from user
interactions, the number of lines to stretch is small, justifying
the stretchability assumption.

The aligned drawing convention generalizes the problems studied by Da Lozzo et
al. and Biedl et al. who focused on the case of a single line.  We study a
natural extension of their setting and ask for alignment on general line
arrangements.  

In addition to the strongly related works mentioned above, there are several
other works that are related to the alignment of vertices in drawings. 
Ravsky and Verbitsky~\cite{10.1007/978-3-642-25870-1_27} used the fact that $2$-trees have a drawing with at least $n/30$ collinear vertices to show that at least $\sqrt{n / 30}$ vertices of a $2$-tree can be fixed to arbitrary positions. 
Dujmovi{\'{c}}~\cite{Dujmovic2015} shows that every
$n$-vertex planar graph $G=(V, E)$ has a planar straight-line drawing such that
$\Omega(\sqrt{n})$ vertices are aligned, and Da Lozzo et al.~\cite{DaLozzo2016}
show that in planar treewidth-3 and planar treewidth-$k$ graphs, one can align
$\Theta(n)$ and $\Omega(k^{2})$ vertices, respectively.
Chaplik et
al.~\cite{Chaplick2016} study the problem of drawing planar graphs such that all
edges can be covered by $k$ lines.
They show that it is $\cNP$-hard to decide
whether such a drawing exists. 
The computational complexity of deciding whether there exists a  drawing
where all vertices lie on $k$ lines is an open problem~\cite{Chaplick2017}.
Drawings of graphs on $n$ lines where a mapping between the vertices and the lines
is provided have been studied by Dujmovi{\'{c}} et
al.~\cite{Dujmovic2011,Dujmovic2013}.

\smallskip
\noindent
\textit{Contribution \& Outline.} 
After introducing notation in Section~\ref{sec:preliminaries}, 
we first study the topological setting
where we are given a planar graph $G$ and a set $S$ of vertices to align in
Section~\ref{sec:complexity}. 
We show that it is $\cNP$-complete to decide whether $S$ is alignable. On the
positive side, we prove that this problem is fixed-parameter tractable (FPT) with respect to~$|S|$. Afterwards, in
Section~\ref{sec:geometry}, we consider the geometric setting where we
seek an aligned drawing of an aligned graph. 
Based on our proof strategy in Section~\ref{sec:proof_strategy},
we strengthen the result of Da Lozzo et al. and Biedl et al. in
Section~\ref{sec:r_aligned}, and show that there exists a 1-aligned drawing of $G$ with a given convex drawing of the outer face.
In Section~\ref{sec:k_aligned} we consider $k$-aligned graphs with a stretchable
pseudoline arrangement, where every edge
$e$ either entirely lies on a pseudoline or intersects at most
one pseudoline, which can either be in the interior or an endpoint of $e$.  We
utilize the result of Section~\ref{sec:r_aligned} to prove that every such $k$-aligned graph has an
aligned drawing, for any value of $k$. In the preliminaries we define the
\emph{alignment complexity} of an aligned graph. It is a triple that indicates
how many intersections an edge has with the pseudoline arrangement depending on the number of
endpoints that lie on a pseudoline. Table~\ref{tab:solved_cases} summarizes
the results of our paper.

\begin{table}
	\centering
	\begin{tabular}{c | c | l}
		\textbf{alignment complexity} & \textbf{$k$} & \textbf{drawable} \\
		\hline
		$(0, \sentinel, \sentinel)$ & $\geq 1$ &
	 	$\checkmark$ -- Planarity\\
		
		$(0, 0, 0)$ &  $\geq 1$ & $\checkmark$ -- Theorem~\ref{lemma:k_aligned:simple} \\
		$(1, 0, \sentinel)$ & $\geq 1$ &
		$\checkmark$ -- Theorem~\ref{lemma:k_aligned:draw_triangulation} \\
		\hline	
		$(1,0, 0)$ & 2 & open -- Fig.~\ref{fig:2_aligned:2_anchored} \\
		\hline
		$(\sentinel, \sentinel, 2)$ & \multirow{3}{*}{$\geq 8$} & \multirow{3}{*
		}{\ding{55}
		 -- Fig.~\ref{fig:intro:aligned_drawing}(c) }\\
		$(\sentinel, 3, \sentinel)$ & \\
		$(4, \sentinel, \sentinel)$ & \\
	\end{tabular}
	
	\caption{Families of aligned graphs that always have an aligned
		drawing  are marked with $\checkmark$. The symbol \ding{55} indicates that for this
		particular class, there is an aligned graph that does not have an aligned
		drawing. }
	\label{tab:solved_cases}
\end{table}

\section{Preliminaries}
\label{sec:preliminaries}

Let $\CurArr$ be a pseudoline arrangement with $k$ pseudolines $\Cur_1,
\dots, \Cur_k$ and $(G, \CurArr)$ be an aligned graph with $n$ vertices. The set of cells in
$\CurArr$ is denoted by $\CurCells(\CurArr)$.  A cell is \emph{empty} if it does
not contain a vertex of $G$. Removing from a pseudoline its intersections with
other pseudolines gives its \emph{pseudosegments}.

Let $G=(V, E)$ be a planar embedded graph with vertex set $V$ and edge set~$E$.
We call $v \in V$ \emph{interior} if $v$ does not lie on the boundary of the
outer face of $G$. An edge $e\in E$ is \emph{interior} if $e$ does not lie
entirely on the boundary of the outer face of $G$. An interior edge is a
\emph{chord} if it connects two vertices on the outer face.  A point $p$ of an
edge $e$ is an \emph{interior} point of $e$ if $p$ is not an endpoint of $e$.  A
\emph{triangulation} is a biconnected planar embedded graph whose inner faces
are all triangles and whose outer face is bounded by a simple cycle.  A
\emph{triangulation} of a graph $G$ is a triangulation that contains $G$ as a
subgraph.  A \emph{$k$-aligned triangulation} of $(G, \CurArr)$ is a $k$-aligned
graph $(G_T, \CurArr)$ with $G_T$ being a triangulation of $G$.  A graph $G'$ is
a \emph{subdivision} of $G$ if $G'$ is obtained by placing \emph{subdivision
vertices} on edges of $G$.  For an abstract graph $G$ and an edge $e$ of $G$ the
graph~$G/e$ is obtained from $G$ by contracting $e$ and merging the resulting
multiple edges and removing self-loops.  Routing the edges incident to $e$ close
to $e$ yields a planar embedding of $G/e$ in case of a planar embedded graph
$G$.  A \emph{$k$-wheel} is a simple cycle $C$ with $k$ vertices on the outer
face and one additional interior vertex that has an edge to each vertex in $C$.
Let $\Gamma$ be a drawing of $G$ and let $C$ be a cycle in $G$. We denote with
$\Gamma[C]$ the drawing of $C$ in $\Gamma$. Let $T$ be a separating triangle in
$G$ and let $V_{\mathrm{in}}$ and $V_{\mathrm{out}}$ be the vertices in the
interior and exterior of $T$, respectively. We refer to the graphs induced by $T
\cup V_{\mathrm{in}}$ and $T \cup V_{\mathrm{out}}$ as the \emph{split
components of $T$} and denote them by $\gin$ and~$\gout$.

A vertex is \emph{$\Cur_i$-aligned} (or simply \emph{aligned} to $\Cur_i$) if it
lies on the pseudoline $\Cur_i$. A vertex that is not aligned is \emph{free}.
An edge $e$ is
\emph{$\Cur_i$-aligned} (or simply \emph{aligned}) if it completely lies on
$\Cur_i$.  Let $E_{\mathrm{aligned}}$ be the set of all aligned edges. An
\emph{intersection vertex} lies on the intersection of two pseudolines $\Cur_i$
and $\Cur_j$. 
A non-aligned edge is \emph{$i$-anchored} ($i=0,1,2$) if $i$ of its endpoints are aligned
to distinct pseudolines. 
An $\Cur$-aligned edge is \emph{$i$-anchored} ($i=0,1,2$) if $i$ of its endpoints are aligned
to distinct pseudolines which are different from $\Cur$. For example, the single
aligned edge in Fig.~\ref{fig:def_x_chromatic_graphs:a} is $1$-anchored.
Let $E_i$ be the set of $i$-anchored edges; note that, the
set of edges is the disjoint union $E_0 \cupdot E_1 \cupdot E_2$.  
An edge $e$ is (\emph{at most}) \emph{$l$-crossed} if  (at most) $l$ distinct
pseudolines intersect $e$ in its interior. 
A $0$-anchored $0$-crossed non-aligned edge is also called \emph{free}.
A non-empty edge set $A \subset E$
is $l$-crossed if $l$ is the smallest number such that every edge in $A$ is at
most $l$-crossed.

\begin{figure}[tb]
	\centering
	\subfloat[$(1, 0, \sentinel)$\label{fig:def_x_chromatic_graphs:a}]{
		\includegraphics[page=1]{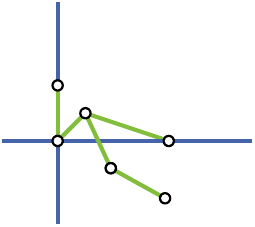}
	}
	\quad
	\subfloat[$(1, 0, 0)$]{
		\includegraphics[page=2]{figures/definition_x-chromatic_graphs}
	}
	\quad
	\subfloat[$(2, 1, 0)$]{
		\includegraphics[page=3]{figures/definition_x-chromatic_graphs}
	}

	\caption{Examples for the alignment complexity of an aligned graph.}
	\label{fig:def_x_chromatic_graphs}
\end{figure}

The \emph{alignment complexity} of an aligned graph 
describes how ``complex'' the relationship between the graph $G$ and the pseudoline
arrangement $ \Cur_1, \dots, \Cur_k$ is. It is formally defined as a triple
$(l_0, l_1, l_2)$, where $l_i$, $i=0,1,2$, indicates that $E_i$ is at most $l_i$-crossed or
has to be empty, if $l_i = \sentinel$.
For example, an aligned graph where every
vertex is aligned and every edge has at most $l$ interior intersections has the
alignment complexity $(\sentinel, \sentinel, l)$. For further examples, see
Fig.~\ref{fig:def_x_chromatic_graphs}.

\begin{theorem}
	\label{lemma:k_aligned:simple}
		Every $k$-aligned graph $(G, \CurArr)$ of alignment
		complexity $(0, 0, 0)$ with a stretchable pseudoline arrangement $\CurArr$
		has an aligned drawing.
\end{theorem}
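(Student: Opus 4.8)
The plan is to exploit the defining feature of alignment complexity $(0,0,0)$: no edge of $G$ crosses any pseudoline in its interior. Hence the interior of every non-aligned edge lies in a single open cell of $\CurArr$, while every aligned edge lies on a pseudoline. Consequently $G$ decomposes along the pseudolines into one subgraph per cell, and the only vertices shared between adjacent cells are the $\Cur_i$-aligned vertices lying on their common pseudosegments. This cell structure is precisely what I would draw cell by cell inside the stretched arrangement.

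First I would fix, using the stretchability assumption, a line arrangement $A = \{L_1,\dots,L_k\}$ homeomorphic to $\CurArr$ together with the induced cell bijection $\phi$. I would then place all aligned vertices: every intersection vertex goes to the corresponding crossing point of $A$ (which is determined by $A$ itself), and every vertex on exactly one pseudoline $\Cur_i$ is placed on $L_i$ so that the linear order of aligned vertices and crossing points along $L_i$ matches the order along $\Cur_i$; this is always possible because $L_i$ and $\Cur_i$ are order-isomorphic. Aligned edges become segments of the lines and are shared consistently by the two incident cells. Since these positions are fixed once and for all on the lines, any two cells sharing a pseudosegment automatically agree on the positions of their common vertices.

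The core step is then to draw, for each cell $\Cell$, the subgraph $G_\Cell$ induced by the vertices and edges of $G$ in the closure of $\Cell$ inside the convex region $\phi(\Cell)$, keeping the already-placed aligned boundary vertices fixed. To make the cell boundary a genuine convex outer face I would augment $G_\Cell$ by the missing corners of $\phi(\Cell)$ and by auxiliary segments joining consecutive aligned vertices along each bounding line, turning $\partial\phi(\Cell)$ into a convex polygon; for the unbounded cells I would first enclose the whole arrangement in a large auxiliary triangle so that every relevant region is bounded. I would then apply the straight-line drawing with prescribed convex outer face established in Section~\ref{sec:r_aligned} to realize $G_\Cell$ planarly inside $\phi(\Cell)$, and finally delete all auxiliary edges and vertices. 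Reassembling the cell drawings yields a single straight-line drawing $\Gamma$; by construction each edge stays within its cell, each line $L_i$ meets the same vertices and edges in the same order as $\Cur_i$, and every vertex and edge occupies the cell corresponding under $\phi$, so $(\Gamma, A)$ is homeomorphic to $(G,\CurArr)$ and hence an aligned drawing.

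The main obstacle is the per-cell drawing step. The prescribed outer boundary is only weakly convex: all aligned vertices on one bounding line are collinear, so the boundary has flat angles, which rules out a direct application of classical strictly-convex drawing methods such as Tutte's. Handling these collinear boundary vertices -- and ensuring that the augmentation produces an instance to which the convex-outer-face result of Section~\ref{sec:r_aligned} applies -- is exactly the delicate point; everything else (the stretching, the order-preserving placement on the lines, and the final homeomorphism check) is routine bookkeeping.
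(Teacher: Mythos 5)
Your proposal is essentially the paper's proof: fix the stretched arrangement $A$, place the aligned vertices on the lines in the prescribed order, augment so that each cell boundary becomes a cycle (the paper does this by adding vertices at the intersections and at infinity and inserting an aligned edge between any two consecutive aligned vertices on a pseudoline), and then draw each cell's subgraph independently inside the convex region $\phi(\Cell)$ with its boundary fixed. The only difference is the per-cell tool: the paper triangulates each $G_{\Cell}$ and invokes Tutte's theorem directly, whereas you invoke Theorem~\ref{theorem:r_aligned:drawing}; both are legitimate (the latter is not circular, since its proof does not rely on this theorem), and the ``delicate point'' you flag is not actually an obstacle---a convex aligned drawing of the outer face in the sense of Theorem~\ref{theorem:r_aligned:drawing} is permitted to have collinear aligned vertices on the bounding line, so the weakly convex cell boundary is exactly the kind of prescribed outer face that result (and the paper's application of Tutte) is designed to handle.
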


\begin{proof}
	We modify the graph $(G, \CurArr)$ as follows; see
	Fig.~\ref{fig:modify_simple_aligned}. We place a vertex
	on each intersection of two or more pseudolines (if the intersection is not
	already occupied). 
	In case that $k$ is at least two,  every unbounded cell $\Cell$ of $\CurArr$ has
	two pseudosegments of infinite length. We place a vertex on each of them at
	infinity and connect them by an edge routed through the interior of $\Cell$.

	\begin{figure}[tb]
		\centering
		\includegraphics{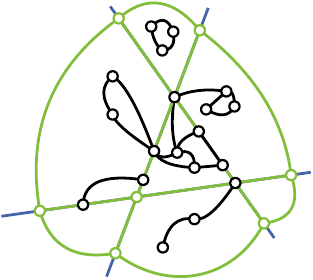}
		\caption{The black edges and vertices and the blue pseudoline arrangement is the
			input graph $(G, \CurArr)$. The green and black graph
			together depict the modified graph before the triangulation step.}
		\label{fig:modify_simple_aligned}
	\end{figure}

	Further, let $u$ and $v$ be two $\Cur$-aligned vertices, that are consecutive
	along $\Cur$. If $uv$ is not already an edge of $G$, we insert it into $G$ and
	route it on $\Cur$. Note that, since $(G, \Cur)$ does not contain edges that
	cross a pseudoline, the resulting graph is again an aligned graph of
	alignment complexity $(0,0,0)$. The boundary of every cell is covered by aligned
	edges. Thus, we can triangulate $(G, \CurArr)$ without introducing
	intersections between edges and a pseudoline.
	
	We obtain an aligned drawing of the modified graph as follows. Note that the
	only interaction between two cells are the aligned vertices and edges on their
	common boundary, i.e., there are no edges crossing the boundary. Hence, for
	every pseudosegments of $\CurArr$ we place the aligned vertices on it,
	arbitrarily (but respecting their order) on the corresponding line segment in
	$\Arr$.	Since, every cell is covered by aligned edges, we can draw the interior
	of two cells independently from each other. More formally, the vertex
	placements of the vertices of the pseudolines prescribes a convex drawing of
	the outer face of the graph $G_{\Cell}$, i.e., the graph induced by the vertices
	in the interior or on the boundary of a cell $\Cell$. Thus, we obtain a
	drawing $\Gamma$ of $G$ by applying the result of
	Tutte~\cite{doi:10.1112/plms/s3-13.1.743} to each graph $G_{\Cell}$,
	independently.
\end{proof}

\section{Complexity and Fixed-Parameter Tractability} \label{sec:complexity} 
In this section, we deal with the topological setting where we are given a planar
embedded graph $G=(V, E)$ and a subset $S \subseteq V$. We ask for a
straight-line drawing of $G$ where the vertices in $S$ are collinear. According
to Da Lozzo et al.~\cite{DaLozzo2016}, this problem is equivalent to deciding
the existence of a pseudoline $\Cur$ with respect to $G$ passing exactly through the vertices
in $S$.  We refer to this problem as \emph{pseudoline existence problem} and the
corresponding search problem is referred to as \emph{pseudoline construction
problem}.
Using techniques similar to Fößmeier and
Kaufmann~\cite{DBLP:conf/ciac/FossmeierK97}, we can show that the pseudoline
existence problem is $\cNP$-hard.

Let $\dual{G} + V$ be the graph obtained from the dual graph $\dual{G} =
(\dual{V}, \dual{E})$ of $G=(V, E)$
by placing every vertex $v \in V$ in its dual face $\dual{v}$ and connecting it
to every vertex on the boundary of the face $\dual{v}$.

\begin{lemma}
	\label{theorem:duality_hamiltonian_pseudoline}
	Let $G=(V,E)$ be a 3-connected 3-regular planar graph. There exists a pseudoline through $V$ with respect to the graph
	$\dual{G}+ V$ if and only if $G$ is Hamiltonian.
\end{lemma}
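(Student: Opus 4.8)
The plan is to establish both directions of the biconditional by understanding how a pseudoline through $V$ in $\dual{G}+V$ induces a cyclic traversal of the faces of $G$, and to match this traversal with a Hamiltonian cycle in $G$. The key structural observation is that in $\dual{G}+V$, each original vertex $v \in V$ sits inside its dual face $\dual{v}$ and is connected to all boundary vertices of $\dual{v}$, i.e., to all dual vertices $\dual{f}$ where $f$ is a face of $G$ incident to $v$. Since $G$ is $3$-regular, each vertex $v$ lies on exactly three faces, so $v$ has degree $3$ in the subgraph connecting it to the dual vertices. A pseudoline $\Cur$ with respect to $\dual{G}+V$ must pass exactly through the vertices in $V$ and must intersect every edge of $\dual{G}+V$ at most once (either crossing transversally or passing through an endpoint). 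Because $\Cur$ passes through every $v \in V$ but through no dual vertex $\dual{f}$, I would first argue that $\Cur$ visits the $v$'s in some cyclic order and that consecutive visited vertices $v_i, v_{i+1}$ must share a common face of $G$, so that the portion of $\Cur$ between them can be routed through the interior of a single dual face without crossing more than the allowed edges.

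For the forward direction ($\Rightarrow$), I would take a pseudoline $\Cur$ through $V$ and read off the order $v_1, v_2, \dots, v_n$ in which it meets the vertices of $V$. I would then show that consecutive vertices $v_i$ and $v_{i+1}$ are adjacent in $G$: the curve segment of $\Cur$ between $v_i$ and $v_{i+1}$ lies (after leaving $\dual{v_i}$ and before entering $\dual{v_{i+1}}$) and can only cross one dual edge, forcing $\dual{v_i}$ and $\dual{v_{i+1}}$ to be adjacent dual faces, i.e., $v_iv_{i+1}$ is an edge of $G$. Handling the pseudoline's behavior at infinity (it starts and ends at infinity and closes up cyclically in the projective sense) lets me conclude that $v_1 v_2 \cdots v_n v_1$ is a cycle visiting every vertex of $V$ exactly once, hence a Hamiltonian cycle of $G$.

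For the reverse direction ($\Leftarrow$), given a Hamiltonian cycle $H = v_1 v_2 \cdots v_n v_1$ of $G$, I would construct the pseudoline explicitly. Each edge $v_iv_{i+1}$ of $H$ is incident to exactly two faces of $G$; I would route $\Cur$ through $v_i$, then through one of the two faces incident to $v_iv_{i+1}$ (crossing into the neighboring dual face), and on to $v_{i+1}$, so that the curve stays inside dual faces and crosses each traversed dual edge at most once. The main thing to verify is that the resulting closed curve is genuinely a pseudoline with respect to $\dual{G}+V$, namely that it is simple and meets every edge of $\dual{G}+V$ in at most one point. I would check that $\Cur$ meets each edge $v\dual{f}$ (connecting an original vertex to a dual vertex) at most once by choosing the face-crossings consistently, and that it meets each dual edge of $\dual{E}$ at most once, using the fact that $H$ is a simple cycle so no face is entered and exited more than the topology allows.

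The main obstacle I expect is the reverse construction's planarity and single-crossing bookkeeping: ensuring that the routing of $\Cur$ through the sequence of faces prescribed by $H$ never forces the curve to cross the same edge of $\dual{G}+V$ twice, and that the curve remains simple (non-self-intersecting) as it weaves through the faces. This is delicate precisely because $v$ is joined to \emph{all} boundary vertices of $\dual{v}$, so the edges $v\dual{f}$ partition the interior of each dual face into sectors, and I must route $\Cur$ through the correct sector to avoid spurious second crossings. I would address this by exploiting $3$-regularity (each $v$ has only three incident $v\dual{f}$ edges, so only three sectors around $v$), which tightly constrains the local routing and makes the incidence $v_iv_{i+1} \in E(G) \Leftrightarrow$ ``adjacent dual faces'' clean, and by converting the curve-to-infinity requirement into the standard projective-line picture so that the cyclic Hamiltonian structure corresponds exactly to a pseudoline rather than a closed Jordan curve.
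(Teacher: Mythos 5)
Your proposal follows essentially the same route as the paper: in one direction you read the Hamiltonian cycle off the order in which the pseudoline visits the vertices of $V$ (equivalently, the dual faces of $\dual{G}$), and in the other you realize the pseudoline as the Hamiltonian cycle itself, routed so that each primal edge crosses only its own dual edge, and split in the unbounded face. The sector/bookkeeping worries you raise for the reverse direction are precisely what the paper's appeal to a simultaneous embedding of $G$ and $\dual{G}+V$ dispatches, so there is no substantive difference in approach.
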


\begin{proof} Recall that the dual of a $3$-connected
	$3$-regular graph is a triangulation with a single combinatorial embedding.

	Assume that there exists a pseudoline $\Cur$ through $V$
	with respect to $\dual{G}+ V$. Then the order of appearance of the vertices
	of $\dual{G}+ V$ on $\Cur$ defines a sequence of adjacent faces in $\dual{G}$, i.e.,
	vertices of the primal graph~$G$ that are connected via primal edges. This yields a
	Hamiltonian cycle in $G$.

	Let $C$ be a Hamiltonian cycle of $G$ and consider a
	simultaneous embedding of $G$ and $\dual{G} + V$ on the plane, where
	each pair of a primal and its dual edge intersects exactly once.  Thus, the
	cycle $C$ crosses each dual edge $e$ at most once and passes through
	exactly the vertices $V$. There is a vertex $v$ on the cycle $C$ such
	that $v$ lies in the unbounded face of $\dual{G} + V$.  Thus, the cycle $C$
	can be interpreted as a pseudoline $\Cur(V)$ in $\dual{G} + V$ through
	all vertices in $V$  by splitting it in the unbounded face of $\dual{G}
	+ V$.
\end{proof}

Since computing a Hamiltonian cycle in 3-connected 3-regular planar graphs is
$\cNP$-complete~\cite{doi:10.1137/0205049}, we get that the pseudoline
construction problem is
$\cNP$-hard. 
On the other hand, we can guess a sequence of
vertices, edges and faces of $G$, and then test in polynomial time whether this corresponds
to a pseudoline $\Cur$ with respect to $G$ that traverses exactly the vertices
in $S$. Thus, the pseudoline construction problem is in $\cNP$. This proves the
following theorem.

\begin{theorem}
	\label{theorem:hardness_s_aligned}
	The pseudoline existence problem is $\cNP$-complete.
\end{theorem}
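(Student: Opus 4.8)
The plan is to establish both $\cNP$-hardness and membership in $\cNP$.

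For hardness, I would reduce from deciding Hamiltonicity in $3$-connected $3$-regular planar graphs, which is $\cNP$-complete~\cite{doi:10.1137/0205049}. Given such a graph $G=(V,E)$, I would construct the auxiliary graph $\dual{G}+V$; this takes polynomial time, since computing the dual $\dual{G}$ and embedding each primal vertex into its dual face requires only linear work on a planar embedding. By Lemma~\ref{theorem:duality_hamiltonian_pseudoline}, the instance consisting of $\dual{G}+V$ and the vertex set $V$ is a yes-instance of the pseudoline existence problem if and only if $G$ is Hamiltonian. Hence this is a valid polynomial-time reduction and yields $\cNP$-hardness.

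For membership, I would exhibit a polynomial-size certificate that is verifiable in polynomial time. A pseudoline with respect to $G$ is captured by the combinatorial sequence of vertices, edges, and faces of the fixed embedding that it traverses in order. Because the curve crosses each edge at most once (or fully contains it), this trace has length linear in the size of $G$ and thus forms a polynomial certificate. The verifier then checks that the guessed sequence is consistent with the embedding (consecutive elements are incident), that the corresponding curve is simple and runs to infinity at both ends, and that the set of traversed vertices equals the prescribed set $S$; all of these are local checks running in polynomial time.

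The hardness direction poses essentially no further obstacle once Lemma~\ref{theorem:duality_hamiltonian_pseudoline} is available, so the step demanding the most care is the membership argument: one must argue that the combinatorial trace of a pseudoline indeed has polynomial size, and that the defining pseudoline property---especially the requirement that each edge is crossed at most once---can be certified from this discrete trace alone, without reconstructing an explicit geometric curve.
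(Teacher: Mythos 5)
Your proposal is correct and matches the paper's proof essentially verbatim: the hardness direction is the same reduction from Hamiltonicity in $3$-connected $3$-regular planar graphs via Lemma~\ref{theorem:duality_hamiltonian_pseudoline}, and the membership argument is the same certificate, namely the combinatorial trace of vertices, edges, and faces traversed by the pseudoline, which the paper likewise guesses and verifies in polynomial time. Your added remark that the trace has linear length because each edge is crossed at most once is a detail the paper leaves implicit, but it is the same argument.
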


\newcommand{\ep}{\mathrm{ep}}
\newcommand{\tr}{\mathrm{tr}}
In the following, we show that the pseudoline construction problem is
fixed-parameter tractable with respect to $|S|$. To this end, we
construct a graph $G^{\tr} = (V^{\tr},E^{\tr})$ and a set
$S^{\tr} \subseteq V^{\tr}$ with $|S^{\tr}| \le |S|+1$ such that
$G^{\tr}$ contains a simple cycle traversing all vertices in
$S^{\tr}$ if and only if there exists a pseudoline $\Cur$ that passes
exactly through the vertices in $S$ such that $(G,\Cur)$ is an aligned
graph.

We observe that if the vertices $S$ of a positive instance are not independent,
they can only induce a \emph{linear forest}, i.e., a set of paths, as
otherwise, there is no pseudoline through all the vertices in $S$ with respect
to $G$.  We call the edges on the induced paths \emph{aligned edges}.  An
edge that is not incident to a vertex in $S$ is called
\emph{crossable}, in the sense that  only crossable edges can be crossed by
$\Cur$, otherwise $\Cur$ is not a pseudoline with respect to $G$.
Let $S_{\ep} \subseteq S$ be the subset of vertices
that are endpoints of the paths induced by $S$ (an isolated vertex is
a path of length~0).  We construct $G^{\tr}$ in several steps; refer to
Fig.~\ref{fig:fpt:transformed_face}.

\begin{figure}[tb]
	\centering
	\includegraphics[page=4]{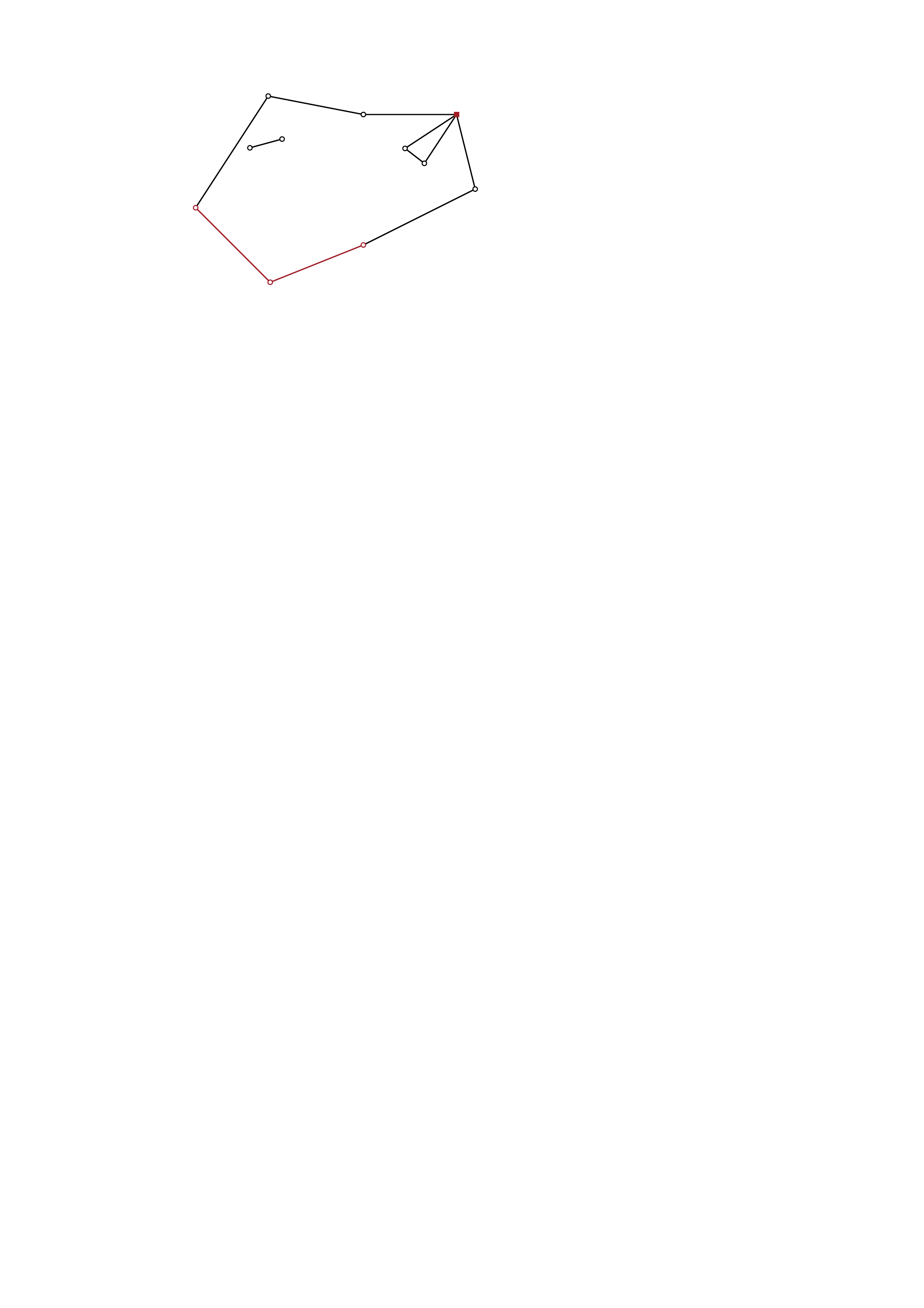}
	\caption{The black and red edges depict a single face of the input
		graph~$G$. Red and blue edges build the transformed graph $G^\tr$. Red round
		vertices are vertices in $S$, red squared
	vertices illustrate the set $S^\tr$, the filled red square is a vertex in $S$
	and $S^\tr$. Blue dashed edges sketch the 
	clique edges between clique vertices (filled blue). }
	\label{fig:fpt:transformed_face}
\end{figure}

\begin{description}
\item[Step~1] Let $G'$ be the graph obtained from $G$ by
  subdividing each aligned edge $e$ with a new vertex $u_{e}$ and let
  $S^{\tr}$ be the set consisting of all isolated vertices in $S$
  and the new subdivision vertices.  Additionally, we add to $G'$
  one new vertex $o$ that we embed in the outer face of $G$ and also
  add to $S^{\tr}$.  Observe that by construction
  $|S^{\tr}| \le |S|+1$.  Finally, subdivide each crossable edge $e$
  by a new vertex $v_{e}$.  
  We call these vertices \emph{traversal nodes} and
  denote their set by $T = S_{\ep} \cup \{v_{e} \mid e$ is
  crossable$\} \cup \{o\}$.  Intuitively, a curve
  will correspond to a path that uses the vertices in $S_{\ep}$ to hop
  onto paths of aligned edges and the subdivision vertices of
  crossable edges to traverse from one face to another. Moreover, the
  vertex $o \in S^\tr$ plays a similar role, forcing the curve to visit the
  outer face.
  
\item[Step~2] For each face $f$ of $G'$ we perform the following
  construction.  Let $T(f)$ denote the traversal nodes that are
  incident to $f$.  For each vertex $v \in T(f)$ we create two new
  vertices $v^{\inp}_{f}$ and $v^{\outp}_{f}$, add the edges $vv^{\inp}_{f}$ and
  $vv^{\outp}_{f}$ to $G'$, and draw them in the interior of $f$.  Finally, we
  create a clique $C(f)$ on the vertex set
  $\{v^{\inp}_{f},v^{\outp}_{f} \mid v \in T(f)\}$, and embed its edges in the
  interior of $f$.
  
\item[Step~3] To obtain $G^{\tr}$ remove all edges of $G'$ that correspond to
	edges of $G$ except those that stem from subdividing an aligned edge of $G$.
\end{description}

\begin{lemma}
  \label{lem:curve-cycle}
  There exists a pseudoline $\Cur$ traversing exactly the vertices in $S$
  such that $(G,\Cur)$ is an aligned graph if and only if there exists
  a simple cycle in $G^{\tr}$ that traverses all vertices in
  $S^{\tr}$.
\end{lemma}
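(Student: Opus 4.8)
The plan is to prove both directions of the equivalence by establishing a careful correspondence between pseudolines and simple cycles in the auxiliary graph $G^{\tr}$. The central idea is that the construction in Steps~1--3 is engineered so that a pseudoline, viewed as a curve in the plane, decomposes into (a) \emph{traversal segments} that cross the graph from one face to an adjacent one through a shared crossable edge, and (b) \emph{aligned segments} that run along maximal paths of aligned edges induced by $S$. The vertices $v^{\inp}_f, v^{\outp}_f$ together with the cliques $C(f)$ are designed precisely to encode the local routing of the curve through each face $f$, while the subdivision vertices of aligned edges record where the curve is forced to pass through vertices of $S$.

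First I would prove the forward direction. Given a pseudoline $\Cur$ with $(G,\Cur)$ an aligned graph passing exactly through $S$, I would trace $\Cur$ through the faces and edges of $G'$. Since $\Cur$ is a simple curve ending at infinity, I can close it off in the outer face, forcing it to pass through the special vertex $o$; this is why $o$ was added to $S^{\tr}$. Each time $\Cur$ enters a face $f$ through some crossable edge (subdivided by $v_e$) or arrives at an endpoint in $S_{\ep}$, and later leaves through another such traversal node, I would route the corresponding portion of the cycle via $v^{\inp}_f$, across a clique edge of $C(f)$, and out through $v^{\outp}_f$. Along a maximal aligned path, $\Cur$ visits the subdivision vertices $u_e \in S^{\tr}$ in order, and these are connected in $G^{\tr}$ by exactly the retained subdivided aligned edges from Step~3; I would verify that concatenating the per-face clique routes with these aligned-edge segments yields a single simple cycle visiting every vertex of $S^{\tr}$. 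The key points to check are that the cycle is genuinely simple (each traversal node is used at most once, guaranteed because $\Cur$ crosses each crossable edge at most once and is itself simple) and that it hits all of $S^{\tr}$ (the isolated vertices of $S$, all subdivision vertices $u_e$, and $o$).

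Second I would prove the backward direction. Given a simple cycle $K$ in $G^{\tr}$ through all of $S^{\tr}$, I would reverse the construction to read off a closed curve in the plane, then cut it open in the outer face at $o$ to obtain a pseudoline. Each clique edge of $C(f)$ used by $K$ corresponds to a chord of $f$ joining two traversal nodes, which I draw as a segment of the curve inside $f$; each retained aligned subdivided edge tells the curve to pass through the corresponding $u_e$, i.e.\ through the original aligned edge and hence through the $S$-vertices on it. I would argue that because $K$ is simple and visits $o$, the resulting curve is a single simple arc that, opened at the outer face, passes through exactly the vertices of $S$ and crosses each crossable edge at most once (simplicity of $K$ prevents reusing a traversal node, hence prevents double-crossing an edge). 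It then remains to confirm that this curve is a pseudoline \emph{with respect to} $G$: for every edge $e$ it either fully contains $e$ (the aligned case) or meets $e$ in at most one point, which follows from how traversal nodes subdivide crossable edges and from the fact that aligned edges are visited only through their retained subdivisions.

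The main obstacle I anticipate is the bookkeeping that ensures the correspondence preserves \emph{simplicity} in both directions and that the curve obtained in the backward direction is globally a single arc rather than several disjoint closed curves. In the forward direction one must ensure that distinct traversals of the same face produce distinct clique edges and that no traversal node is reused; in the backward direction one must rule out the possibility that a simple cycle in $G^{\tr}$ corresponds to a curve with self-intersections or to a multi-component configuration. I expect this to hinge on a clean invariant: the curve enters and leaves each face exactly via the paired vertices $v^{\inp}_f, v^{\outp}_f$ of the same traversal node, and the clique structure of $C(f)$ makes the in/out routing within a face completely flexible yet edge-disjoint, so that global connectivity of $K$ translates into a single connected curve. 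Verifying that the special vertex $o$ correctly forces the curve into the outer face (so that opening it there yields a proper pseudoline ending at infinity) is the remaining delicate point.
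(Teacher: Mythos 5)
Your two-directional correspondence is the same one the paper uses, and your pseudoline-to-cycle direction is essentially the paper's argument (decompose the closed-off curve into aligned-path pieces and face-traversal pieces, map the latter to paths of the form $uu^{\inp}_{f}v^{\outp}_{f}v$, reroute one outer-face piece through $o$). The genuine gap is in the cycle-to-pseudoline direction, exactly at the point you call the ``remaining delicate point'' and then resolve with a false invariant. You argue that since the clique edges of $C(f)$ used by the simple cycle $K$ are edge-disjoint, the realized curve inside $f$ has no self-intersections. This does not follow: $G^{\tr}$ is not a plane graph, because each clique $C(f)$ is embedded in the interior of $f$ with crossings. If $K$ uses two clique edges $ab$ and $cd$ of the same face whose endpoints interleave in the cyclic order along the boundary of $f$, the corresponding chords must cross, and a simple abstract cycle can certainly do this. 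Edge-disjointness in the abstract graph says nothing about crossings of the drawn chords, and connectivity of $K$ is not the same as simplicity of the curve. The paper does not attempt to prove the realized closed curve $\rho$ is simple; it accepts that $\rho$ may self-intersect and removes each self-intersection by a local surgery: take a small disk $D$ around the crossing that contains no vertex, no edge of $G$, and no other crossing, and reconnect the four points of $\rho\cap\partial D$ by two non-crossing arcs inside $D$. Because $D$ avoids $G$, this preserves every incidence of the curve with vertices and edges of $G$, after which the curve is cut open in the outer face to become a pseudoline. Without this (or an equivalent) uncrossing step, your backward direction is incomplete.

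A second, smaller omission in the same direction: before reading off the curve you need to normalize $K$. If $K$ contains a subpath $v^{\inp}_{f}\,v\,v^{\outp}_{f}$ where $v=v_{e}$ is the subdivision vertex of a crossable edge $e$, the curve would enter and leave $e$ from the same face $f$, i.e.\ touch $e$ without traversing to the adjacent face, which does not correspond to a proper crossing. The paper first shortcuts every such subpath by the clique edge $v^{\inp}_{f}v^{\outp}_{f}\in C(f)$; this removes no vertex of $S^{\tr}$ from the cycle and guarantees that every visited $v_{e}$ corresponds to a genuine crossing of $e$ into a different face. You should add this reduction explicitly.
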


\begin{proof}
  Suppose $C$ is a cycle in $G^{\tr}$ that visits all vertices in
  $S^{\tr}$.  Without loss of generality, we assume that there is no
  face $f$ such that $C$ contains a subpath from $v^{\inp}_{f}$ via
  $v$ to $v^{\outp}_{f}$ (or its reverse) for some vertex
  $v \in T(f) \setminus S_{\ep}$, as otherwise we simply
  shortcut this path by the edge $v^{\inp}_{f}v^{\outp}_{f} \in C(f)$.

  \begin{figure}[t]
	  \centering
	  \subfloat[]{
		  \includegraphics[page=1]{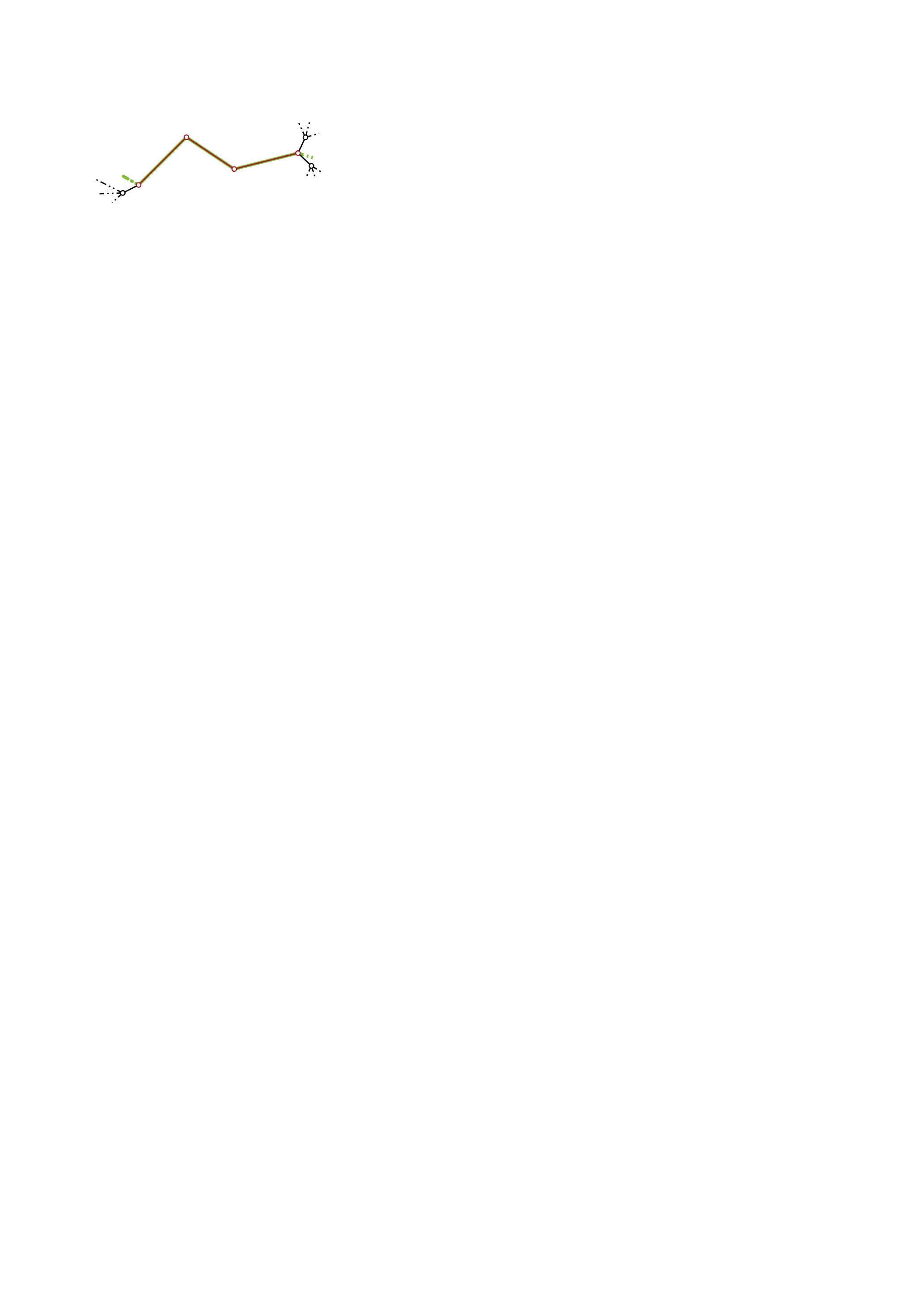}
	  }
	  \quad
	   \subfloat[]{
		  \includegraphics[page=2]{figures/fpt_proof.pdf}
	  }
	  \caption{(a) A pseudoline (thick green) traversing a path of aligned edges
		  (thin red). (b) A
	  path (thick green) in $G^\tr$ visiting consecutive vertices in $S^\tr$ (red squared).}
	  \label{fig:fpt:curve_path}
  \end{figure}

  Consider a path $P$ of aligned edges in $G$ that contains at least
  one edge; refer to Fig.~\ref{fig:fpt:curve_path}.  By definition, $C$ visits all the subdivision vertices
  $u_{e} \in S^{\tr}$ of the edges of $P$, and thus it enters $P$ on
  an endpoint of $P$, traverses $P$ and leaves $P$ at the other
  endpoint.  All isolated vertices of $S$ are contained in
  $S^{\tr}$, and therefore $C$ indeed traverses all vertices in $S$
  (and thus also all aligned edges).  As described above, $G^{\tr}$
  is indeed a topological graph, and thus $C$ corresponds to a closed
  curve $\rho$ that traverses exactly the vertices in $S$ and the aligned
  edges.  
 
  \begin{figure}[b]
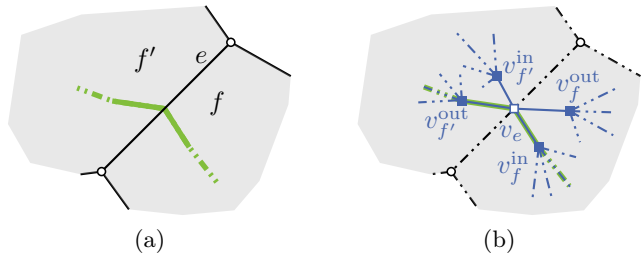

	  \centering
	  \subfloat[]{
		  \includegraphics[page=3]{figures/fpt_proof.pdf}
	  }
	  \qquad
	   \subfloat[]{
		  \includegraphics[page=4]{figures/fpt_proof.pdf}
	  }
	  \caption{(a) A pseudoline (thick green) passing through a non-aligned edge. (b)
	    A path (thick green) in $G^\tr$ traversing a subdivision vertex $v_e$
  		(blue non-filled square). Black (dashed) segments are edges of $G$.}
	  \label{fig:fpt:non_aligned}
  \end{figure}

	We now show that $\rho$ can be transformed to a pseudoline with respect
	to~$G$.  Let $e$ be a non-aligned edge of $G$ that has a common point with
	$\rho$ in its interior; see Fig.~\ref{fig:fpt:non_aligned}.
	Thus, $C$ contains the subdivision vertex $v_{e}$.  In particular, this
	implies that $e$ is crossable.  Moreover, from our assumption on $C$, it
	follows that $C$ enters $v_{e}$ via $v^{\inp}_{f}$ or $v^{\outp}_{f}$ and
	leaves it via $v^{\inp}_{f'}$ or $v^{\outp}_{f'}$, where $f$ and $f'$ are the
	faces incident to $e$, and it is $f \ne f'$ as we could shortcut $C$
	otherwise.  Therefore, $\rho$ indeed intersects $e$ and uses it to traverse to
	a different face of $G$.  Moreover, since $e$ has only a single subdivision
	vertex in $G^{\tr}$ and $C$ is simple, it follows that $e$ is intersected only
	once.  Thus $\rho$ is a curve that intersects all vertices in $S$, traverses
	all aligned edges, and crosses each edge of $G$ (including the endpoints) at
	most once.  Moreover, $\rho$ traverses the outer face since $C$ contains $o$.  
   
  The only reasons why $\rho$ is not necessarily a pseudoline with respect to
  $G$ are that it is a closed curve and it may cross itself.  However, we can
  break $\rho$ in the outer face and route both ends to infinity, and remove
  such self-intersections locally as follows; see
  Fig.~\ref{fig:resolve_crossing}. 
  Consider a circle~$D$ around an intersection~$I$ that neither contains a
  second self-intersection nor a vertex, nor an edge of $G$. Let $\alpha, \beta,
  \gamma, \delta$ be the intersections of $D$ with $\Cur$. We replace the
  pseudosegment $\alpha\gamma$ with a pseudosegment $\alpha \beta$, and
  $\beta\delta$ with a pseudosegment $\gamma\delta$.  We route the
  pseudosegments $\alpha \beta$ and $\gamma\delta$ through the interior of $D$
  such that they do not intersect.  Thus, we obtain a pseudoline $\Cur$ with
  respect to $G$ that contains exactly the vertices in $S$.

\begin{figure}[tb]
	\centering
	\includegraphics{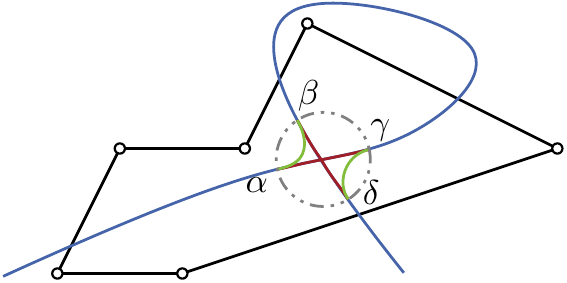}
	\caption{Resolving an intersection by exchanging the intersecting segments (red) with
	non-intersecting segments (green). }
	\label{fig:resolve_crossing}
\end{figure}

  For the converse assume that $\Cur$ is a pseudoline that traverses exactly the
  vertices in $S$ such that $(G,\Cur)$ is an aligned graph. The pseudoline
  $\Cur$ can be split into three parts $\Cur_1$, $\Cur_2$ and $\Cur_3$
  such that $\Cur_1$ and $\Cur_3$ have infinite length and do not intersect
  with $G$, and $\Cur_2$ has its endpoints in the outer face of $G$.  We
  transform $\Cur$ into a closed curve $\Cur'$ by removing $\Cur_1, \Cur_3$
  and adding a new piece connecting the endpoints of $\Cur_2$ without 
  intersecting $\Cur_2$ or $G$. Additionally, we choose an arbitrary
  direction for $\Cur'$ in order to determine an order of the crossed edges and
  vertices.

  We show that $G^{\tr}$ contains a simple cycle traversing the
  vertices in $S^{\tr}$.  By definition $\Cur'$ consists of two
  different types of pieces, see Fig.~\ref{fig:fpt:curve_path}.  The first type traverses a path of
  aligned edges between two vertices in $S_{\ep}$.  The other type
  traverses a face of $G$ by entering and exiting it either via an
  edge or from a vertex in $S_{\ep}$; see Fig.~\ref{fig:fpt:face_taversal}.  We show how to map these pieces
  to paths in $G^{\tr}$; the cycle $C$ is obtained by concatenating
  all these paths.

  \begin{figure}[b]
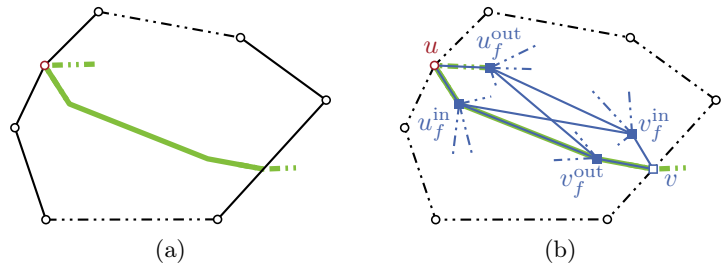

	  \centering
	  \subfloat[]{
		  \includegraphics[page=5]{figures/fpt_proof.pdf}
	  }
	  \qquad
	   \subfloat[]{
		  \includegraphics[page=6]{figures/fpt_proof.pdf}
	  }
	  \caption{(a) A pseudoline piece $\pi$ (thick green) passing through a face $f$.
	  (b) Path (thick green) in $G^\tr$ corresponding to $\pi$.}
	  \label{fig:fpt:face_taversal}
  \end{figure}

	Each piece of the first type indeed corresponds directly to a path in
	$G^{\tr}$; see Fig.~\ref{fig:fpt:curve_path}.  Consider now a piece $\pi$ of
	the second type traversing a face $f$; refer to
	Fig.~\ref{fig:fpt:face_taversal}.  The piece $\pi$ enters $f$ either from a
	vertex in $S_{\ep}$ or by crossing a crossable edge~$e$.  In either case,
	$T(f)$ contains a corresponding traversal node $u$.  Likewise, $T(f)$ contains
	a traversal node $v$ for the edge or vertex that $\Cur'$ intersects next.  We
	map $\pi$ to the path $uu^{\inp}_{f}v^{\outp}_{f}v$ in $G^{\tr}$.  By
	construction, paths corresponding to consecutive pieces of $\Cur'$ share a
	traversal node, and therefore concatenating all paths yields a cycle $C$ in
	$G^{\tr}$.  Moreover, $C$ is simple, since $\Cur'$ intersects each edge and
	each vertex at most once.  Note that $C$ contains at least one edge of the
	outer face (as $\Cur'$ traverses the outer face), and we modify $C$ so
	that it also traverses the special vertex $o$.

  It remains to show that $C$ contains all vertices in $S^{\tr}$.
  There are three types of vertices in $S^{\tr}$; the subdivision
  vertices of aligned edges, the isolated vertices in $S$, and the
  special vertex $o$.  The latter is in $C$ by the last step of the
  construction.  The isolated vertices in $S$ are traversed by $\Cur'$
  and contained in $S_{\ep}$, and they are therefore visited also by
  $C$.  Finally, the subdivision vertices of aligned edges are
  traversed by the paths corresponding to the first type of pieces,
  since $\Cur'$ traverses all aligned edges.
\end{proof}

\begin{theorem}[Wahlstr{\"o}m~\cite{Wahlstrom13}]
\label{theorem:K-cycle}
Given an $n$-vertex graph $G=(V,E)$ and a subset $S\subseteq V$, it can be tested in
$O(2^{|S|}\mathrm{poly}(n))$ time whether a simple cycle through the vertices in
$S$ exists. If affirmative the cycle can be reported within the same asymptotic
time. 
\end{theorem}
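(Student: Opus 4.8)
The plan is to recover this bound through the algebraic sieving technique for cycle detection, whose running time is governed by a single exponential factor in the number of terminals. I would first encode the target objects algebraically: assign to every edge $e$ an indeterminate $x_e$ and work over an extension field of characteristic two. The goal is to build a polynomial $P$ in the $x_e$, obtained symbolically as a determinant or a Pfaffian of a matrix of edge variables (a Tutte-matrix style construction), that enumerates closed walks of $G$, so that each monomial records the multiset of edges traversed by a walk.

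To force a candidate walk to meet \emph{every} vertex of $S = \{s_1, \dots, s_k\}$, I would attach a fresh label variable $y_i$ to each terminal $s_i$ and weight a walk by the product of the labels of the terminals it visits. Extracting the contribution of the walks that touch all of $S$ is then a coefficient extraction over the $y_i$: the finite-difference operator $\sum_{Y \subseteq S} P|_{y_i = 1\ (i \in Y),\ y_i = 0\ (i \notin Y)}$ isolates exactly those walks in which every $y_i$ occurs at least once, i.e.\ in which every terminal is visited. Since this is a signed sum whose signs disappear in characteristic two, it amounts to $2^{k}$ evaluations, and each evaluation is a single matrix computation costing $\mathrm{poly}(n)$, so the whole sieve runs in $O(2^{|S|}\mathrm{poly}(n))$ time.

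It remains to discard walks that are not simple cycles, and here I would use the parity cancellation that is characteristic of these methods. Over characteristic two one sets up a fixed-point-free involution on the set of \emph{bad} walks, namely those that repeat a vertex or an edge, for instance by reversing the first repeated closed sub-walk; paired walks carry identical monomials and therefore cancel, whereas a genuine simple cycle through $S$ survives. Consequently $P$, after the sieve, fails to be the zero polynomial exactly when an $S$-cycle exists, and evaluating it at independently random elements of a large enough field $\mathrm{GF}(2^{\ell})$ together with the Schwartz--Zippel--DeMillo--Lipton lemma yields a one-sided-error Monte Carlo decision procedure within the claimed time.

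The most delicate point will be the joint design of the matrix, the label variables, and the involution so that the two filtering mechanisms cooperate: the finite difference must retain precisely the walks covering $S$, while the involution must annihilate exactly the non-simple ones, leaving simple $S$-cycles as the sole survivors \emph{without} accidental cancellations among them, in particular without the two traversal orientations of a good cycle cancelling each other. Once the decision version is established, reporting a witness is routine self-reduction: one tentatively deletes or contracts edges and re-tests whether an $S$-cycle persists, keeping the edges that prove essential; this peels off a concrete cycle with only a polynomial-factor overhead, so the reporting version stays within $O(2^{|S|}\mathrm{poly}(n))$.
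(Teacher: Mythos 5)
This statement is not proved in the paper at all: it is quoted verbatim as a black-box result of Wahlstr\"om~\cite{Wahlstrom13}, so there is no in-paper argument to compare yours against. Judged on its own terms, your sketch correctly identifies the family of techniques that yields such bounds (algebraic sieving over a field of characteristic two, a $2^{|S|}$ finite-difference sum to enforce coverage of the terminals, Schwartz--Zippel for randomized evaluation, and self-reduction for witness extraction), and this is indeed close in spirit to how Wahlstr\"om and the related work of Bj\"orklund, Husfeldt, and Taslaman proceed.

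However, as a proof it has two genuine gaps. First, the algebraic object you start from is misspecified: a determinant or Pfaffian of an edge-variable matrix enumerates perfect matchings or cycle covers, not closed walks; walk-generating polynomials come from traces of matrix powers or from a dynamic program. Wahlstr\"om's actual argument goes through the Tutte matrix and matchings of an auxiliary graph, which is a different encoding from the labelled-walk one you describe, and the two cannot be freely interchanged mid-proof. Second, everything that makes the theorem true is concentrated in the step you explicitly defer: designing the fixed-point-free involution so that (a) every non-simple labelled walk is paired with a distinct walk carrying the same monomial, (b) no simple $S$-cycle is paired with anything, and (c) the two traversal orientations of a simple cycle do not cancel each other in characteristic two. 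Point (c) in particular is a real obstruction (the two orientations of a cycle traverse the same edge multiset), and the standard fix --- anchoring the cycle at a distinguished terminal and a distinguished incident edge, or labelling directed rather than undirected traversals --- must be built into the polynomial from the start, not added afterwards. Without these constructions carried out, the proposal is a plan for a proof rather than a proof; since the theorem is imported from the literature here, the honest course is to cite it rather than to re-derive it.
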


\begin{theorem}
  The pseudoline construction problem is solvable in
  $O(2^{|S|}\mathrm{poly}(n))$ time, where $n$ is the number of vertices.
\end{theorem}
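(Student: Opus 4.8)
The plan is to combine the reduction of Lemma~\ref{lem:curve-cycle} with Wahlström's algorithm (Theorem~\ref{theorem:K-cycle}) in the direct way; the only genuine work is to check that the reduction is computable in polynomial time and that it inflates the parameter by at most an additive constant.

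First I would build $G^{\tr}$ and $S^{\tr}$ from the input $(G,S)$ by carrying out Steps~1--3 of the construction. Each step is plainly polynomial: Step~1 subdivides every edge once and adds the single outer vertex $o$; Step~2 adds, for each face $f$ and each traversal node $v \in T(f)$, the two vertices $v^{\inp}_{f}, v^{\outp}_{f}$ together with the clique $C(f)$ on them; Step~3 only deletes edges. To see that $G^{\tr}$ stays polynomial in size I would use that the number of vertex--face incidences of a planar graph is $\sum_f \deg(f) = 2|E| = O(n)$, so only $O(n)$ vertices of the form $v^{\inp}_{f}, v^{\outp}_{f}$ are created, while the clique edges sum to $\sum_f |T(f)|^2 = O(n^2)$. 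Hence both $|V^{\tr}|$ and $|E^{\tr}|$ are polynomial in $n$, and the construction runs in polynomial time.

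Next I would invoke Theorem~\ref{theorem:K-cycle} on the graph $G^{\tr}$ with terminal set $S^{\tr}$, testing whether $G^{\tr}$ has a simple cycle through all of $S^{\tr}$ and, if so, reporting one such cycle $C$. By Lemma~\ref{lem:curve-cycle} this cycle exists if and only if there is a pseudoline with respect to $G$ passing exactly through $S$, which already settles the decision version. For the \emph{construction} problem I would then translate a returned cycle $C$ back into a pseudoline exactly as in the forward direction of the proof of Lemma~\ref{lem:curve-cycle}: interpret $C$ as the closed curve $\rho$, cut it open in the outer face and route both ends to infinity, and remove the finitely many self-intersections locally as in Fig.~\ref{fig:resolve_crossing}. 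Since $C$ crosses each edge at most once, there are only polynomially many such local repairs, so the whole translation is polynomial.

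Finally I would account for the running time. Step~1 guarantees $|S^{\tr}| \le |S|+1$, hence $2^{|S^{\tr}|} \le 2\cdot 2^{|S|}$, and as $G^{\tr}$ has polynomial size the bound from Theorem~\ref{theorem:K-cycle} becomes $O(2^{|S^{\tr}|}\,\mathrm{poly}(|V^{\tr}|)) = O(2^{|S|}\,\mathrm{poly}(n))$; the polynomial-time construction and back-translation are absorbed into the $\mathrm{poly}(n)$ factor. I do not expect a real obstacle here: the single point requiring care is verifying that the parameter grows by only an additive constant and that $G^{\tr}$ remains polynomial, both of which are already ensured by the construction, so the result is essentially immediate bookkeeping on top of the two cited results.
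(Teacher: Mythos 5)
Your proposal is correct and follows essentially the same route as the paper: construct $G^{\mathrm{tr}}$ and $S^{\mathrm{tr}}$ in polynomial time, note that $|S^{\mathrm{tr}}| \le |S|+1$ and that $G^{\mathrm{tr}}$ has size $O(n^2)$, and apply Wahlstr\"om's algorithm together with Lemma~\ref{lem:curve-cycle}. The only difference is that you spell out the back-translation of the reported cycle into an actual pseudoline, which the paper leaves implicit; this is a harmless (and arguably welcome) addition for the construction version.
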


\begin{proof}
  Let $G=(V,E)$ with $S \subseteq V$ be an instance of the pseudoline
  construction problem.  By Lemma~\ref{lem:curve-cycle}
  the pseudoline construction problem is equivalent to determining
  whether $G^{\tr}$ contains a simple cycle visiting all vertices in
  $S^{\tr}$.  Since the size of $G^{\tr}$ is $O(n^{2})$ and it can
  be constructed in $O(n^{2})$ time, and $|S^{\tr}| \le |S|+1$,
  Theorem~\ref{theorem:K-cycle} can be used to solve the latter
  problem in the desired running time.
\end{proof}

We note that indeed the construction of $G^{\tr}$ only allows 
leaving a path of aligned edges at an endpoint in $S_{\ep}$.  Therefore,
a single vertex in $S^{\tr}$ for each path of aligned edges would be
sufficient to ensure that $C$ traverses the whole path.  Thus, by
removing for each path all but one vertex from $S^{\tr}$ we
obtain an algorithm that is FPT with respect to the number of paths
induced by $S$.

\begin{theorem} The pseudoline construction problem is solvable in
	$O(2^{P}\mathrm{poly}(n))$ time, where $n$ is the number of vertices and $P$
	is the number of paths induced by the vertex set $S$ to be aligned.
\end{theorem}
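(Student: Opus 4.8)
The plan is to reduce the FPT-by-number-of-paths statement to the already-established machinery rather than reproving anything from scratch. The key observation, stated informally in the remark preceding the theorem, is that the transformed graph $G^{\tr}$ only permits a cycle to enter and exit a path of aligned edges through its two endpoints in $S_{\ep}$; the subdivision vertices $u_e$ in the interior of such a path have degree two in $G^{\tr}$ (they retain only the two subdivided aligned-edge segments), so any simple cycle passing through one interior subdivision vertex of a path is forced to traverse the entire maximal path of aligned edges from one endpoint to the other. Thus requiring the cycle to visit even a single vertex of each aligned path already guarantees it visits all of them.

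Concretely, I would define a reduced terminal set $\hat S^{\tr} \subseteq S^{\tr}$ by keeping, for each of the $P$ paths induced by $S$, exactly one subdivision vertex $u_e$ (dropping the others), retaining all isolated vertices of $S$, and retaining the special vertex $o$. This yields $|\hat S^{\tr}| \le P + 1$: one terminal per path, plus $o$, with isolated vertices counted among the $P$ paths (recall an isolated vertex is a path of length $0$). I would then argue that $G^{\tr}$ contains a simple cycle through all of $S^{\tr}$ if and only if it contains a simple cycle through all of $\hat S^{\tr}$. The backward direction is the content of the degree-two argument above: any simple cycle hitting the one retained vertex of a path is forced through the whole path and hence hits every $u_e$ on it; combined with the isolated vertices and $o$ being retained, such a cycle visits all of $S^{\tr}$. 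The forward direction is trivial since $\hat S^{\tr} \subseteq S^{\tr}$.

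With this equivalence in hand, I would invoke Lemma~\ref{lem:curve-cycle} to conclude that the pseudoline construction problem is equivalent to finding a simple cycle through $\hat S^{\tr}$ in $G^{\tr}$, and then apply Theorem~\ref{theorem:K-cycle} with terminal set $\hat S^{\tr}$ of size at most $P+1$. Since $G^{\tr}$ has size $O(n^2)$ and is constructible in $O(n^2)$ time, Theorem~\ref{theorem:K-cycle} runs in $O(2^{|\hat S^{\tr}|}\mathrm{poly}(n)) = O(2^{P+1}\mathrm{poly}(n)) = O(2^{P}\mathrm{poly}(n))$ time, as claimed.

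The main obstacle — and the only point needing real care — is the backward direction of the terminal-reduction equivalence, i.e., verifying precisely that in $G^{\tr}$ the interior subdivision vertices of an aligned path have no clique edges attached and thus force full-path traversal. I would double-check this against Step~1 and Step~2 of the construction: only the endpoints $S_{\ep}$ of a path become traversal nodes and receive $v^{\inp}_f, v^{\outp}_f$ gadgets and clique edges, whereas the interior subdivision vertices $u_e$ are \emph{not} in $T$, so Step~2 adds nothing to them and Step~3 preserves their two aligned-segment edges, leaving them of degree exactly two. This is exactly the structural fact that makes a single terminal per path suffice, and it is essentially already asserted in the paragraph preceding the theorem, so the proof is short.
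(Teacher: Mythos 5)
Your proposal is correct and follows exactly the paper's argument: the paper's justification is precisely the remark that interior vertices of an aligned path receive no clique gadgets in $G^{\tr}$, so keeping a single terminal per path forces full-path traversal, after which Lemma~\ref{lem:curve-cycle} and Theorem~\ref{theorem:K-cycle} give the $O(2^{P}\mathrm{poly}(n))$ bound. You have merely written out the degree-two forcing argument in more detail than the paper does.
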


\section{Drawing Aligned Graphs}
\label{sec:geometry}
We show that every aligned graph where each edge either entirely lies on a
pseudoline or is intersected by at most one pseudoline, i.e., alignment
complexity $(1, 0, \sentinel)$, has an aligned drawing. For $1$-aligned graphs
we show the stronger statement that every $1$-aligned graph has an aligned
drawing with a given aligned convex drawing of the outer face. We first present
our proof strategy and then deal with $1$- and $k$-aligned graphs.

\subsection{Proof Strategy}
\label{sec:proof_strategy}

Our general strategy for proving the existence of aligned drawings of an aligned
graph $(G, \CurArr)$ is as
follows.
First, we show that we can triangulate $(G, \CurArr)$ by adding vertices
and edges without invalidating its properties.  We can thus assume that our
aligned graph $(G, \CurArr)$ is an aligned triangulation. Second, we show that unless
$G$ has a specific structure (e.g., a $k$-wheel or a triangle), it contains an
aligned or a free edge. Third, we exploit the
existence of such an edge to reduce the instance.
Depending on whether the edge is contained in a
separating triangle or not, we either decompose along that triangle or contract
the edge.  In both cases the problem reduces to smaller instances that are
almost independent. In order to combine solutions, it is, however, crucial to use
the same arrangement of lines $\Arr$ for both of them.

In the following, we introduce the necessary tools used for all three steps
on $k$-aligned graphs of alignment complexity $(1, 0, \sentinel)$. Recall, that
for this class
\begin{inparaenum}[(i)]
	\item every non-aligned edge is at most $1$-crossed,
	\item every $1$-anchored edge is $0$-crossed, and 
	\item there is no edge with its endpoints on two pseudolines.
\end{inparaenum}

Lemmas~\ref{lemma:bi_connected} -- \ref{lemma:k_aligned:proper_triangulation}
show that every aligned graph of alignment complexity $(1, 0, \sentinel)$ has an
aligned triangulation with the same alignment complexity.  If $G$ contains a
separating triangle, Lemma~\ref{lemma:k_aligned:separating_triangle} shows that
$(G, \CurArr)$ admits an aligned drawing if both split components have an
aligned drawing. Finally, with Lemma~\ref{lemma:k_aligned:contraction} we obtain
a drawing of $(G, \CurArr)$ from a drawing of the aligned graph $(G/e, \CurArr)$
where one particular edge $e$ is contracted.



\begin{lemma}
	\label{lemma:bi_connected}
	Let $(G, \CurArr)$ be a $k$-aligned $n$-vertex graph of alignment complexity $(1,0,
	\sentinel)$.
	Then there exists a biconnected $k$-aligned graph $(G', \CurArr)$ that contains $G$ as a
	subgraph. The set $E(G') \setminus E(G)$ has alignment complexity $(1, 0,
	\sentinel)$ and does not contain aligned edges.
	The size of $E(G') \setminus E(G)$ is in $O(nk + k^3)$.
\end{lemma}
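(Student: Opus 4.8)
The plan is to augment $G$ to a biconnected graph $G'$ by inserting only non-aligned edges (together with auxiliary free vertices) that respect the cell structure of $\CurArr$, so that the new edge set has alignment complexity $(1,0,\sentinel)$ and contains no aligned edge. The central device is a \emph{virtual chord}: to join two vertices $u$ and $v$ lying on the boundary of a common face $f$, I would not insert the edge $uv$ directly (it could coincide with a pseudoline, cross several of them, or be $2$-anchored), but instead route a path $u=w_0,w_1,\dots,w_t=v$ through the interior of $f$, where each internal $w_i$ is a newly inserted \emph{free} vertex placed in the interior of one of the cells the route traverses. I would perturb the route so that it crosses the pseudolines one at a time, inserting an extra free vertex in the small cell between two pseudolines whenever the route passes near an intersection point. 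Then every internal edge $w_iw_{i+1}$ is $0$-anchored and at most $1$-crossed, the two terminal edges incident to an aligned $u$ or $v$ are $1$-anchored and $0$-crossed, and no edge is aligned or $2$-anchored; hence the new edges have alignment complexity $(1,0,\sentinel)$ as required. Since at most $k$ pseudolines pass through $f$, each virtual chord consists of $O(k)$ edges and vertices.

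With this gadget in hand I would carry out the two standard stages of connectivity augmentation, but phrased so that every inserted edge is a virtual chord. First, for connectivity I would join the components of $G$: whenever two components share a face, I link them by a virtual chord, so that after $O(n)$ such steps the graph is connected. Because the subsequent triangulation argument needs structure inside every region of the arrangement, I would additionally place one free vertex in each of the $O(k^2)$ cells of $\CurArr$ (and vertices at infinity on the $O(k)$ unbounded pseudosegments, as in the proof of Theorem~\ref{lemma:k_aligned:simple}) and link cell-adjacent representatives by single-crossing edges, obtaining a connected scaffold spanning all cells. Second, for biconnectivity I would eliminate cut vertices face by face, using the fact that a connected plane graph is biconnected exactly when every face boundary is a simple cycle: for each face whose boundary walk repeats a vertex I would add virtual chords between the blocks meeting at that vertex until the boundary becomes simple. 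Each such operation is again realized by an $O(k)$-edge virtual chord.

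For the size bound I would argue that the number of augmentation operations is $O(n+k^2)$: the connectivity and biconnectivity steps for $G$ contribute $O(n)$ operations (one per block or component, by the count of leaves in the block--cut tree), while the cell scaffold and its biconnection contribute $O(k^2)$ operations (one per cell and per shared pseudosegment). Multiplying the $O(n+k^2)$ operations by the $O(k)$ edges per virtual chord yields $|E(G')\setminus E(G)| = O((n+k^2)k) = O(nk+k^3)$.

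I expect the main obstacle to be verifying that the virtual-chord routing can always be realized while simultaneously (i) staying inside the current face, (ii) crossing each pseudoline at most once per edge and never running along a pseudoline, and (iii) avoiding the creation of a $2$-anchored edge or of an edge that some pseudoline meets twice. In particular, near an intersection vertex of two pseudolines and near an aligned endpoint one must choose the incidences of the intermediate free vertices carefully so that the terminal edges stay $0$-crossed; showing that such a route always exists inside a face that is arbitrarily subdivided by the arrangement, and that concatenating these chords removes all cut vertices without introducing new ones, is the technical heart of the argument.
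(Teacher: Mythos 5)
Your high-level plan coincides with the paper's: place auxiliary free vertices in cells, connect components by paths whose edges are at most $1$-crossed (subdividing where a direct edge would be $2$-anchored or aligned), and then repair cut vertices locally. However, there is a genuine gap in the step that carries the whole size bound, namely the claim that each ``virtual chord'' consists of $O(k)$ edges because ``at most $k$ pseudolines pass through $f$.'' The number of edges of such a chord is governed not by the number of pseudolines meeting the face $f$, but by the number of pseudoline \emph{arcs} of $\CurArr$ inside $f$ that separate the two endpoints you are trying to join. A single pseudoline $\Cur$ may enter and leave a face of $G$ linearly many times (it crosses each edge at most once, but a face can have linearly many boundary edges), and the arcs of $\Cur\cap f$ are disjoint chords of the disk $f$, so two prescribed boundary vertices can be separated by $\Theta(n)$ of them; every route between them inside $f$ must then cross $\Cur$ that many times, forcing $\Theta(n)$ edges on one chord and destroying the $O(nk+k^{3})$ bound (in the worst case you get $O(n^{2})$). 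The same over-optimism affects your scaffold: an edge joining the free representatives of two adjacent cells must also avoid the edges of $G$, and when the two representatives lie in different faces of $G$ no single $1$-crossed edge between them exists.

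The paper escapes both problems by never routing a long chord between two prescribed vertices. Working cell by cell, it takes an arbitrary curve $\rho$ between two components inside a cell $\Cell$ and observes that two edges $e_{1},e_{2}$ of distinct components that are \emph{consecutive along $\rho$} must, because every edge is at most $1$-crossed, have endpoints that are mutually visible inside $\Cell$; these endpoints are joined by a single edge (subdivided once if both are aligned). Adjacent cells are then linked by one $1$-crossed edge, again between endpoints of existing edges rather than between fixed representatives. Thus every inserted connector has $O(1)$ edges, and the $O(k)$ factor enters only at cut vertices, where a ring of $O(k+\deg v)$ new vertices is placed in the faces of the local arrangement around $v$. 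To repair your argument you would need to replace the free choice of chord endpoints by this ``hop onto the nearest existing edge'' argument, or otherwise prove that endpoints and routes can always be chosen so that each pseudoline is crossed $O(1)$ times per chord; as stated, that is exactly the point your proposal assumes rather than proves.
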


\begin{proof}
	Our procedure works in two steps. First, we connect disconnected components.
	Second, we assure that the graph is biconnected by inserting edges around a
	cut-vertex. Initially, we place a vertex in every cell that does not contain a
	vertex in its interior. 
	
	Consider a cell $\Cell$ of $\CurArr$ that contains two vertices $u$ and $v$
	that belong to distinct connected components $G_u$ and $G_v$.  We refer to two
	vertices $u,v$ that lie in the interior or on the boundary of $\Cell$ as
	\emph{$\Cell$-visible} if there is a curve in the interior of $\Cell$ that
	connects $u$ to $v$ and that does not intersect $G$ except at its endpoints.
	In the following, we exhaustively connect $\Cell$-visible pairs of vertices of
	distinct connected components of $G$.  If $u$ and $v$ are $\Cell$-visible, we
	simply connect them by an edge $e$.  In case that both vertices are aligned,
	we have to subdivide the edge $e$ with a vertex to avoid introducing
	$2$-anchored edges to the graph.  Assume that $u,v$ are not $\Cell$-visible.
	Consider any curve $\rho$ in the interior of $\Cell$ that connects $u$ and
	$v$.
	Then $\rho$ intersects a set of edges of $G$ either in their interior
	or in a vertex. Thus, there are two edges $e_1$ and $e_2$ consecutive along
	$\rho$, that belong two distinct connected components.  Since $e_1$ and $e_2$
	are at most $1$-crossed, there is an endpoint of $e_1$ and an endpoint of $e_2$ that are
	$\Cell$-visible and thus can be connected by an edge.  Overall it is
	sufficient to add a linear number of edges to join distinct connected
	components that have vertices in a common cell.
		
	By construction, every cell contains at least one free vertex.  Thus, in order
	to connect the graph we consider two cells $\Cell_1, \Cell_2$ with a common
	boundary. Assume that there is a vertex $u$ on the common boundary. In this
	case, the previous step ensures that there is a path from $u$ to every vertex
	that lies in the interior or on the boundary of $\Cell_1$ or $\Cell_2$. Hence,
	consider the case where no vertex lies on the common boundary of the two
	cells.  Moreover, the common boundary does also not contain an edge, since
	this edge would be $2$-anchored or $l$-crossed, $l\geq 2$.  Similar to the
	previous step, we can connect two arbitrary vertices of $\Cell_1$ and
	$\Cell_2$ with a curve $\rho$ that intersects the common boundary.  If this
	curve does not intersect an edge we can simply connect the two vertices with
	an edge. Otherwise, at least in one cell $\Cell' \in \{\Cell_1, \Cell_2\}$ the
	curve intersects at least one edge. Therefore, there is an edge $e'$ that
	comes immediately before the intersection of $\rho$ with the boundary of
	$\Cell'$.  Since every edge is at most $1$-crossed, there are two vertices in
	$\Cell_1$ and $\Cell_2$ that can be connected by an edge.  Due to the previous
	step, we can assume that the vertices in the interior of each cell are
	connected by a path.  Thus, we add at most one edge for each pair of adjacent
	cells. 
	Since there are $O(k^2)$ cells we add $O(k^2)$ vertices and edges to $G$,
	i.e., the size of $G$ is $O(n + k^2)$.

	\begin{figure}[tb]
		\centering
		\subfloat[]{
			\includegraphics[page=1]{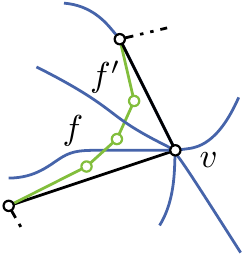}
		}
		\quad
		\subfloat[]{
			\includegraphics[page=3]{figures/biconnected.pdf}
		}

		\caption{ Green edges and vertices are added around a cut-vertex $v$ to connect the
			connected components (black) incident to $v$.
		(a) $v$ is an intersection vertex. (b) $v$ is a free vertex.}
		\label{fig:biconnected}
	\end{figure}

	We now assume that $G$ is connected but not biconnected and has $n' \in
	O(n+k^2)$ vertices.  Consider a single cut vertex $v$; refer to
	Fig.~\ref{fig:biconnected}. We consider the common arrangement $\mathcal F$ of
	$\CurArr$ and $G$, i.e., a face can be restricted by pseudosegments of
	$\CurArr$ and edges of $G$.  Let $\mathcal F_v$ be the set of faces in $\mathcal
	F$ with $v$ on their boundary. We place a vertex $v_f$ in every face $f$ of
	$\mathcal F_v$. Let $f$ and $f'$ be two distinct faces of $\mathcal F_v$ with
	a common edge $\epsilon$ on their boundary.  If $\epsilon$ is an edge $uv$ of
	$G$, we insert the edges $uv_f$ and $uv_{f'}$.  Since $uv$ is at most
	$1$-crossed, the new edges are as well at most $1$-crossed.  If $\epsilon$
	corresponds to a pseudosegment, we insert the edge $v_fv_{f'}$ such that it
	crosses $\epsilon$. Since $v_f$ and $v_{f'}$ are free vertices, the edge is by
	construction $1$-crossed. 
	
	This procedure adds $O(k + \deg v)$
	vertices and edges around $v$, since at most $k$ pseudolines intersect in a
	single point. The degree of vertices adjacent to $v$ is increased by at most
	$2$. Thus, the size of $G$ increases to $O(n'k)$.  Thus, we have that the size
	of $G$ is $O(nk + k^3)$.
\end{proof}

\begin{figure}[tb]
	\centering
	\subfloat[\label{fig:triangulation:intersection}]{
		\includegraphics[page=8]{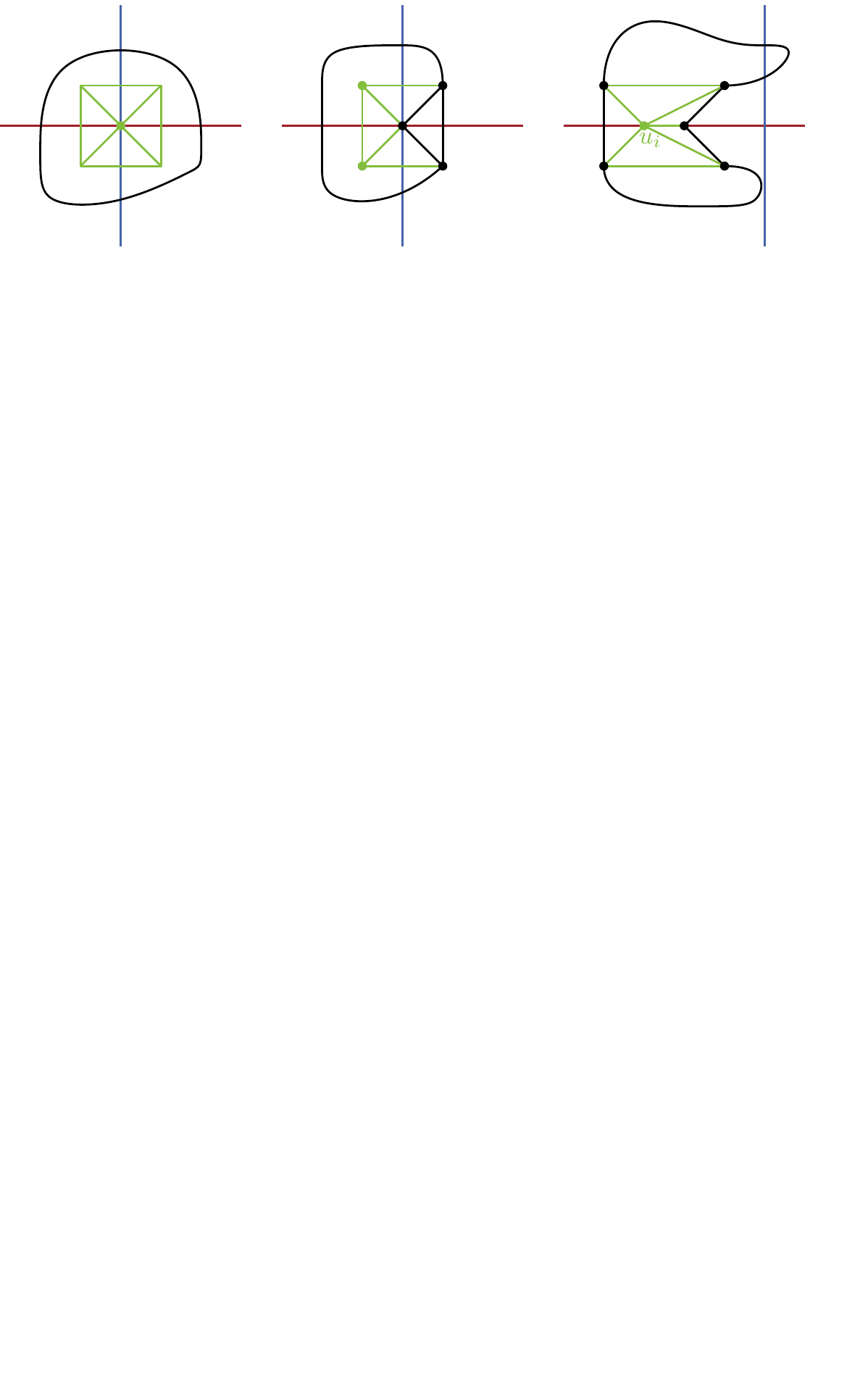}
	}
	\quad
	\subfloat[\label{fig:triangulation:vertex}]{
		\includegraphics[page=10]{figures/monochrome_triangulation.pdf}
	}
	\quad
	\subfloat[\label{fig:triangulation:edge}]{
		\includegraphics[page=9]{figures/monochrome_triangulation.pdf}
	}
	\quad
	\subfloat[\label{fig:triangulation:segment}]{
		\includegraphics[page=11]{figures/monochrome_triangulation.pdf}
	}
	\caption{
		Black lines indicate a face $f$ of $G$. Light green edges or vertices are newly
		added into $f$. Blue lines denote the pseudoline arrangement. (a)
		Isolation of an intersection. (b-c) Isolation of an aligned
		vertex or edge. (d) Isolation of a pseudosegment. }
	\label{fig:triangulation}
\end{figure}

\begin{lemma} \label{lemma:k_aligned:triangulation}
	Let $(G, \CurArr)$ be a biconnected $k$-aligned $n$-vertex graph of alignment complexity $(1,
	0, \sentinel)$.	There exists a $k$-aligned triangulation $(G_T=(V_T, E_T),
	\CurArr)$ of $f$ whose size is $O(nk+k^3)$.  The set $E(G_T) \setminus
	E(G)$ has alignment complexity $(1, 0, \sentinel)$ and does not contain
	aligned edges.
\end{lemma}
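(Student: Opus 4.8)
The goal is to triangulate a biconnected aligned graph $(G,\CurArr)$ of alignment complexity $(1,0,\sentinel)$ while preserving this alignment complexity and without creating new aligned edges. The plan is to work in the common arrangement $\mathcal F$ of $G$ and $\CurArr$, and to triangulate each face of $G$ by first inserting the relevant pseudoline features and then filling in the remaining triangles, following the case analysis sketched in Fig.~\ref{fig:triangulation}.

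Let me describe the main steps. First I would fix a face $f$ of $G$. Inside $f$, the pseudoline arrangement $\CurArr$ induces a collection of pseudosegments, intersection points (where two or more pseudolines meet), and portions of aligned vertices and edges on the boundary of $f$. The key idea is to \emph{isolate} each of these features by surrounding it with a small triangulated region whose added edges are all free or $1$-crossed, so that the alignment complexity is preserved. Concretely, (i) around each intersection point of pseudolines interior to $f$, I would place free vertices and connect them so that the intersection is enclosed in a cell bounded by free edges, as in Fig.~\ref{fig:triangulation:intersection}; (ii) around an aligned vertex or aligned edge on the boundary of $f$, I would add free vertices on either side and route the connecting edges so that each new edge crosses at most one pseudoline and no new $2$-anchored or aligned edge is introduced, as in Fig.~\ref{fig:triangulation:vertex} and Fig.~\ref{fig:triangulation:edge}; and (iii) around each pseudosegment passing through the interior of $f$, I would place free vertices on its two sides and connect them with edges each crossing exactly that one pseudosegment, as in Fig.~\ref{fig:triangulation:segment}.

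After isolating all pseudoline features, the face $f$ is subdivided into subregions whose boundaries no longer contain any pseudoline feature in their interior other than what has been controlled. Each such subregion lies entirely within a single cell of $\CurArr$ (or touches the arrangement only along an already-controlled boundary), so it can be triangulated arbitrarily using only \emph{free} edges, none of which cross a pseudoline. Since every added edge is either free or crosses exactly one pseudoline, and no added edge has both endpoints on distinct pseudolines, the set $E(G_T)\setminus E(G)$ has alignment complexity $(1,0,\sentinel)$ and contains no aligned edge, as required. For the size bound, note that $\CurArr$ has $O(k^2)$ intersection points and $O(k^2)$ cells, and each face of $G$ contributes $O(k)$ new features along each of its bounding pseudosegments; summing over all faces and using biconnectivity to bound the total boundary length, the number of added vertices and edges stays in $O(nk+k^3)$, matching Lemma~\ref{lemma:bi_connected}.

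The main obstacle I anticipate is the \emph{bookkeeping around aligned vertices and edges on the face boundary}: one must guarantee that the isolating edges never accidentally connect two vertices lying on distinct pseudolines (which would create a forbidden $2$-anchored edge) and never run along a pseudoline (which would create a new aligned edge). This requires a careful local argument showing that at each such feature there is always enough ``room'' on a single side of a single pseudoline to route the new edges as free or $1$-crossed edges; the biconnectivity hypothesis and the $(1,0,\sentinel)$ assumption (in particular that $1$-anchored edges are $0$-crossed and that no $2$-anchored edges exist) are precisely what make this local routing always possible. Once the local isolation gadgets are verified to respect the alignment complexity, the global triangulation and the size estimate follow routinely.
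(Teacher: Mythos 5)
Your overall plan --- isolate the features of $\CurArr$ inside each face (intersections, aligned vertices/edges on the boundary, pseudosegments) and then triangulate the remaining pseudoline-free regions with free edges --- is the same four-case strategy the paper uses, and your treatment of pseudosegments and of the final free triangulation is essentially consistent with it. However, your step (i) for intersections contains a genuine error. You propose to surround each intersection point of two or more pseudolines interior to $f$ by free vertices ``so that the intersection is enclosed in a cell bounded by free edges.'' This is impossible: each pseudoline through the intersection is a bi-infinite simple curve, so any closed curve enclosing the intersection must cross every such pseudoline at least twice, and hence cannot consist of free ($0$-crossed) edges. More importantly, even if you allowed the enclosing edges to be $1$-crossed, the enclosed region still contains the intersection in its interior, and no triangulation of that region can respect alignment complexity $(1,0,\sentinel)$: any triangle containing the intersection of two pseudolines in its interior must have a $2$-anchored edge, a $1$-anchored $\ell$-crossed edge with $\ell\geq 1$, or an $\ell$-crossed edge with $\ell\geq 2$ on its boundary (this is exactly the argument of Claim~1 in the proof of Lemma~\ref{lemma:k_aligned:base_case}). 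The missing idea is that you must place a \emph{new vertex on each intersection point} lying in the interior of a face (restoring biconnectivity afterwards via Lemma~\ref{lemma:bi_connected}); only then can the subsequent cases proceed.

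A secondary weakness is that you defer the ``bookkeeping around aligned vertices and edges'' to an unspecified ``careful local argument,'' but this is precisely the technical core of the paper's proof: for an aligned vertex or edge $w$ on the boundary of a face, one places a vertex $q_i$ in the interior of each of the at most $2k$ cells incident to $w$, joins consecutive $q_i$ by $1$-crossed edges forming a path between the two boundary neighbours $x$ and $y$ of $w$, and adds the $1$-anchored $0$-crossed edges $wq_i$; the fact that at most one endpoint of an aligned edge can be an intersection vertex (no $2$-anchored edges) is what bounds the number of incident cells and guarantees the fan of triangles closes up. Without spelling out this gadget, the claim that the isolation ``is always possible'' and the resulting $O(nk+k^3)$ size bound are not established.
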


\begin{proof}%
	We call a face \emph{non-triangular} if its boundary contains more than three
	vertices.
	An aligned vertex $v$ or an aligned edge $e$ is \emph{isolated} if
	all faces with $v$ or $e$ on their boundaries are triangles. A pseudosegment
	$s$ is \emph{isolated} if $s$ does not intersect the interior of a
	simple cycle.
	Our proof distinguishes four cases. Each case is applied exhaustively in
	this order. 
	\begin{compactenum}
	\item If the interior of $f$ contains the intersection of two or more pseudolines, we split the
		face so that there is a vertex that lies on the intersection.
	\item If the
		boundary of a face has an aligned vertex or an aligned edge, we isolate
		the vertex or the edge from $f$.
	\item If the interior of a face
		$f$ intersects a pseudoline $\Cur$, then it subdivides $\Cur$
		into a set of pseudosegments. We isolate each of the pseudosegments
		independently.
  	\item Finally, if none of the previous cases apply,
		i.e., neither the boundary nor the interior of $f$
		contains parts of a pseudoline, the face $f$ can be
		triangulated with a set of additional free edges.
	\end{compactenum}

	Let $\CurArr_f$ be the arrangement of $\CurArr$ restricted to the interior of $f$.

	\begin{compactenum}%
	\item 
		Let $f$ be a non-triangular face whose interior contains an intersection of two
		or more pseudolines; see Fig.~\ref{fig:triangulation:intersection}.
		We
		place a vertex on every intersection in the interior of $f$. We obtain a
		biconnected graph $G_1$ with the application of Lemma~\ref{lemma:bi_connected}.
		Since there are $O(k^2)$ intersections, the size of $G_1$ is $O\left(
		(n+k^2)k + k^3\right) = O(nk + k^3)$.
	\item Let $f_1$ be a non-triangular face of $G_1$ with an aligned vertex or an aligned
		edge $uv$ on its boundary. Further, the interior of $f_1$ does not contain the
		intersection of a set of pseudolines; see
		Fig.~\ref{fig:triangulation:vertex} and \ref{fig:triangulation:edge}. In
		case of an aligned vertex we simply assume $u=v$. Since $G$ is biconnected,
		there exist two edges $xu$, $vy$ on the boundary of $f_1$. 
		 Let $\Cell_1, \dots, \Cell_l \in \CurCells(\CurArr_{f_1})$ be cells with $u$ or
		 $v$ on their boundary, such that $\Cell_i$ is adjacent to $\Cell_{i+1}, i <
		 l$.  Since $f_1$ does not contain $2$-anchored edges, at most one of
		 the vertices $u$ and $v$ can be an intersection vertex. Thus, $l$ is at
		 most $2k$.  We construct an aligned graph $(G_2, \CurArr)$ from $(G_1,
		 \CurArr)$ as follows.  We place a vertex $q_i$ in the interior of each cell
		 $\Cell_i, i\leq l$.  Let $q_0=x$ and $q_{l+1} = y$. We insert edges $e_i =
		 q_iq_{i+1}, i =0,\dots, l$ in the interior of $f_1$ so that the interior
		 of $e_i$ crosses the common boundary of $\Cell_i$ and $\Cell_{i+1}$ exactly
		 once and it crosses no other boundary.  Thus, if the edge $e_i$ is either incident
		 to $x$ or to $y$, it at most $1$-anchored and $0$-crossed. Otherwise, it is
		 $0$-anchored and $1$-crossed. The added path splits $f$ into two faces $f',
		 f''$ with a unique face $f'$ containing $u$ and $v$ on its boundary. If $w
		 \in \{u,v \}$ is on the boundary of cell $\Cell_i$, we insert an edge
		 $wq_i$. Each edge $wq_i$ is $1$-anchored and $0$-crossed.  Let $\Cell_i$
		 and $\Cell_{i+1}$ be two cells incident to $w$. Then, the vertices $w, q_i,
		 q_{i+1}$ form a triangle. If $u \not=v$, there is a unique cell $\Cell_i$
		 incident to $u$ and $v$.  Hence, the vertices $u,v, q_i$ form a triangle.
		 Moreover, for $1 \leq i \leq l$, every edge $uq_i$ and $vq_i$ is incident
		 to two triangles.  Therefore, $f'$ is triangulated. By construction, we do
		 not insert aligned vertices and edges, thus the number of aligned
		 edges and aligned vertices of $f''$ is one less compared to $f_1$. Hence, we
		 can inductively proceed on $f''$.

		Assume the aligned vertex $v$ is an intersection vertex. 
		Thus, isolating $v$ uses $O(k)$ additional vertices and
		edges. Therefore, all intersection vertices can be isolated with $O(k^3)$
		vertices and edges.
		
		Now consider an aligned vertex $v$ that is not an intersection vertex.  In
		this case $v$ is incident to at most two cells.  We can isolate all such
		vertices with $O(n)$ vertices and edges. The same bound holds for aligned
		edges. Finally, we obtain an aligned graph
		$(G_2, \CurArr)$ of size $O(nk +k^3)$. 
			
	\item Let $f_2$ be a non-triangular face of $G_2$ whose interior intersects a
		pseudoline $\Cur$ and has no aligned edge and no aligned vertex on its
		boundary. Further, the interior of $f_2$ does not contain the intersection of
		two or more pseudolines.  Then the face $f_2$ subdivides $\Cur$ into a set of
		pseudosegments; see Fig.~\ref{fig:triangulation:segment}.  We iteratively
		isolate such a pseudosegment $\CurSeg$. 
		Since $f_2$ does not contain the intersection of two or more pseudolines in
		its interior, there are two distinct cells $\Cell_1 \in
		\CurCells(\CurArr_f)$ and $\Cell_2 \in \CurCells(\CurArr_f)$ with $\CurSeg$
		on their boundary.  Since $f_1$ neither contains an aligned vertex nor an
		aligned edge and $G$ is biconnected, there are exactly two edges $e_1=vw$
		and $e_2=xy$ with the endpoints of $\CurSeg$ in the interior of these edges
		and $v,x$  and $w, y$ on the boundaries of $\Cell_1$ and $\Cell_2$,
		respectively.
		Since $f_2$ does not have an $l$-crossed edge, $l \geq 2$, and every
		$1$-crossed edge is $0$-anchored, the vertices $v$, $w$, $x$, $y$ are free.
		We construct a graph $G'$ by placing a vertex $u$ on $s$ and inserting edges
		$uv$, $uw$, $ux$ $uy$, $vx$ and $wy$.  We route each edge so that the
		interior of an edge does not intersect the boundary of a cell $\Cell_i,
		i=1,2$. Thus, the edges $vx$ and $wy$ are free and the others are
		$1$-anchored and $0$-crossed.

		Every edge in $G_2$ is at most $1$-crossed, thus the number of
		pseudosegments is linear in the size of $G_2$. Therefore, we add a number
		of vertices and edges that is linear in the size of $G_2$.
	
		Thus, we obtain an aligned graph $(G_3, \CurArr)$ of size
		$O(nk + k^3)$.  
\item If none of the cases above applies to a non-triangular face $f_4$ of $G_3$, then neither
	the interior nor the boundary of the face intersects a pseudoline $\Cur_i$.
	Thus, we can triangulate $f_4$ with a number of free edges linear in the size
	of $f_4$.  Thus, in total we obtain an aligned triangulation $(G_T, \CurArr)$
	of $(G, \CurArr)$ of size $O(nk+k^3)$.
\end{compactenum}
\end{proof}

Observe that the correctness of the previous triangulation procedure only relies
on the fact that every non-triangular face contains at most $1$-crossed edges. 
While Lemma~\ref{lemma:k_aligned:triangulation} is sufficient for our purposes,
for the sake of generality, we show how to
isolate $l$-crossed edges. This allows us to triangulate biconnected aligned graphs without increasing the alignment complexity. 

\begin{figure}
	\centering
		\includegraphics[page=3]{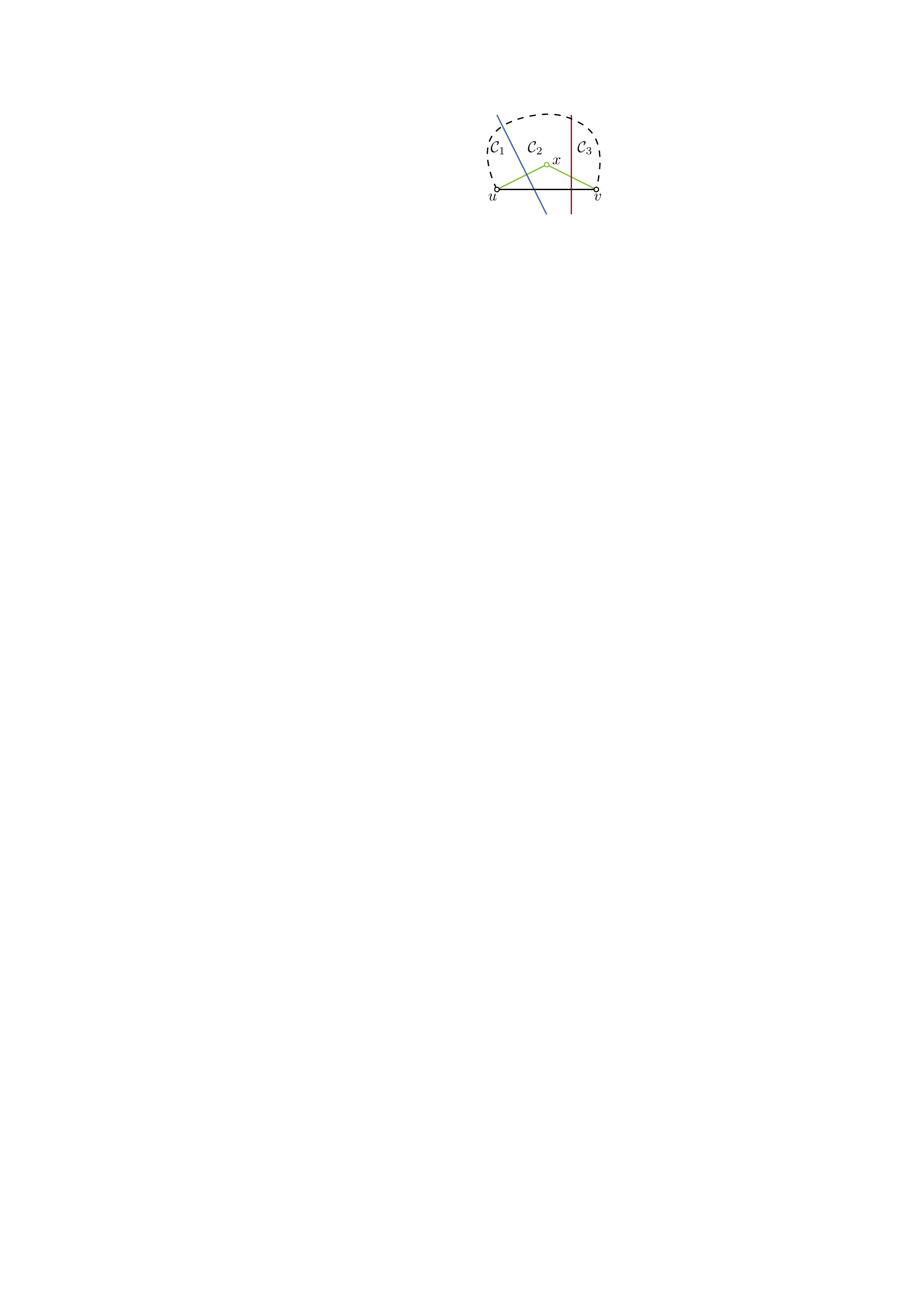}
	\caption{An $l$-crossed edge $uv$ in a (grey) face $f$ and a pseudoline
		arrangement (blue). The green edges isolate the edge $uv$.}
	\label{fig:isolate_l_crossed_edges}
\end{figure}

\begin{theorem}
	Every biconnected $k$-aligned $n$-vertex graph $(G, \CurArr)$ of alignment
	complexity $(l_0, l_1, l_2)$ has an aligned triangulation $(G_T, \CurArr)$.
	The alignment complexity of $E(G_T) \setminus E(G) $ is $(\max\{l_0, 1 \},
	l_1, l_2)$ and the size of this set is \mbox{$O(nk + k^3)$}.
\end{theorem}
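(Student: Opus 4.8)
The plan is to generalize the proof of Lemma~\ref{lemma:k_aligned:triangulation} by replacing Case~3, which isolates a pseudosegment bounded by two $1$-crossed edges, with a more general procedure that isolates a pseudosegment bounded by two $l$-crossed edges. The overall four-case structure remains intact: first we place a vertex on every interior intersection of two or more pseudolines and restore biconnectivity via Lemma~\ref{lemma:bi_connected}; second we isolate every aligned vertex and aligned edge, which is unaffected by the higher crossing number since that step only uses the cell-adjacency structure around the aligned feature; third we isolate every pseudosegment crossing the interior of a non-triangular face; and fourth we triangulate the remaining pseudoline-free faces with free edges. Since Cases~1, 2, and 4 are verbatim as before, the entire work is in redescribing Case~3 for $l$-crossed edges.

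For the generalized Case~3, consider a non-triangular face $f$ whose interior is crossed by a pseudosegment $\CurSeg$ with endpoints in the interiors of two edges $e_1 = vw$ and $e_2 = xy$, where these edges may now be $l$-crossed rather than $1$-crossed; refer to Fig.~\ref{fig:isolate_l_crossed_edges}. The key idea is to place a vertex $u$ on $\CurSeg$ inside $f$ and connect $u$ to the four endpoints $v, w, x, y$ of the two bounding edges, together with the two ``side'' edges $vx$ and $wy$ closing off the region on either side of $\CurSeg$, exactly as in the original construction. I would route the edges $uv, uw, ux, uy$ so that each stays within the union of the cells touched by its target endpoint and crosses the same pseudolines that $\CurSeg$ separates, so that $uv$ and $uw$ inherit at most $l$ crossings (since $v, w$ are endpoints of the $l$-crossed edge $e_1$), and similarly for $ux, uy$; the side edges $vx$ and $wy$ are routed to remain inside a single cell adjacent to $\CurSeg$ and are therefore free. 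Iterating over all pseudosegments crossing $f$ isolates $\CurSeg$ and reduces the number of pseudosegments in $f$ by one, so exhaustive application removes all crossing pseudosegments.

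The main obstacle, and the place where the argument must be stated carefully, is controlling the alignment complexity of the newly inserted edges so that the claimed bound $(\max\{l_0, 1\}, l_1, l_2)$ holds. The edges $uv, uw, ux, uy$ are $0$-anchored because $u$ lies on a pseudoline but is not an intersection vertex and the endpoints $v, w, x, y$ are free (being endpoints of $0$-anchored crossed edges), so these new edges fall into $E_0$ and may carry up to $\max\{l_0, 1\}$ crossings, which is consistent with the $0$-entry of the output complexity. The side edges $vx, wy$ are free, hence also $0$-anchored and $0$-crossed. Crucially, the procedure never introduces an aligned edge (so $l_1$ and $l_2$ are untouched) and never increases the anchoring of any edge, so the $1$-anchored and $2$-anchored classes $E_1, E_2$ keep their respective crossing bounds $l_1, l_2$. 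The appearance of $\max\{l_0, 1\}$ rather than $l_0$ accounts precisely for the case $l_0 = 0$ (alignment complexity $(0, \cdot, \cdot)$), where Cases~1 and~2 may still force the insertion of $1$-crossed $0$-anchored edges to restore biconnectivity around newly placed intersection vertices, matching the behavior already seen in Lemma~\ref{lemma:k_aligned:triangulation}.

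Finally, the size bound $O(nk + k^3)$ follows as before: each of the $O(k^2)$ interior intersections costs $O(k)$ vertices and edges via Lemma~\ref{lemma:bi_connected} through Case~1, the isolation of aligned features costs $O(n)$ plus $O(k^3)$ for intersection vertices, the number of pseudosegments is linear in the current graph size because every edge is at most $\max\{l_0, l_1, l_2\}$-crossed (a constant with respect to $n$) and each pseudosegment contributes $O(1)$ new vertices and edges in the generalized Case~3, and the final triangulation in Case~4 adds a linear number of free edges. Summing these contributions yields the stated $O(nk + k^3)$ bound.
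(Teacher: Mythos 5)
There is a genuine gap. Your plan keeps the four-case structure of Lemma~\ref{lemma:k_aligned:triangulation} and only rewrites Case~3, but the generalized Case~3 does not preserve the claimed alignment complexity. The vertex $u$ you place on the pseudosegment $\CurSeg$ is an \emph{aligned} vertex, so by the paper's definitions the edges $uv,uw,ux,uy$ are $1$-anchored, not $0$-anchored as you assert. Moreover, when the bounding edges $e_1=vw$ and $e_2=xy$ are $l$-crossed with $l\ge 2$, their endpoints need not lie in the two cells adjacent to $\CurSeg$; to reach, say, $v$, the edge $uv$ must cross up to $l-1$ further pseudolines. Your construction therefore creates $1$-anchored edges with up to $l-1$ interior crossings, which violates the bound $l_1$ on the set $E_1$ in the claimed complexity $(\max\{l_0,1\},l_1,l_2)$ --- already fatal when $l_1=0$ or $l_1=\sentinel$, as in the paper's main application $(1,0,\sentinel)$. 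For the same reason Case~2 is not ``verbatim as before'': the edges $xq_1$ and $q_ly$ there are routed under the assumption that $x$ and $y$ lie in cells adjacent to those surrounding the aligned vertex or edge, which fails once the boundary edges of the face may be $l$-crossed.

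The paper takes a different route that sidesteps all of this: \emph{before} running the triangulation lemma, it isolates each $l$-crossed edge $uv$ of a non-triangular face by placing a free vertex $x$ in the cell $\Cell_{l-1}$ close to $uv$ and inserting the edges $ux$ and $xv$ routed alongside $uv$ (with $x$ placed on the cell boundary if both $u$ and $v$ are aligned). This cuts off the triangle $uvx$ and replaces $uv$ on the face boundary by a $1$-crossed edge $xv$ and an $(l-1)$-crossed edge $ux$; iterating $l-2$ more times leaves a path of at most $1$-crossed edges, and only $0$-anchored edges are created, which is exactly what the $\max\{l_0,1\}$ in the first coordinate accounts for. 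Once every non-triangular face is bounded by at most $1$-crossed (and no $2$-anchored) edges, Lemma~\ref{lemma:k_aligned:triangulation} is invoked as a black box. To salvage your in-place generalization you would need essentially this shadowing step anyway, in order to bring the endpoints of the bounding edges into the cells adjacent to $\CurSeg$ before placing $u$.
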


\begin{proof}%
For $l \geq 1$, we iteratively isolate $l$-crossed edges $uv$ from a
non-triangular face~$f$ as sketched in Fig.~\ref{fig:isolate_l_crossed_edges}.
Let $\Cell_0, \Cell_1, \dots, \Cell_l \in \CurCells(\CurArr)$ be the cells in
$f$ that occur in this order along $uv$.  If one of these vertices is free, say
$v$,  we place a new vertex $x$ in the interior of $\Cell_{l - 1}$.  We insert
the two edges $ux, xv$ and route both edges close to $uv$. This isolates the
edge $uv$ from $f$.
Notice that the edge
$xv$ is $0$-anchored and $1$-crossed and the edge $ux$ $(l-1)$-crossed.  In case that $l_0 \geq 1$, the alignment
complexity of the new aligned graph is $(l_0, l_1, l_2)$. Otherwise, the
alignment complexity is $(1, l_1, l_2)$.  If $u$ and $v$ are aligned, we place
$x$ on the boundary of $\Cell_{l-1}$ and $\Cell_{l}$ and route the edges $ux$
and $vx$ as before. The alignment complexity is not affected by this operation.
The face $uvx$ is triangular and therefore the edge $uv$ is processed as above
at most twice.

This procedure introduces a new $(l-1)$-crossed
edge. Repeating the process $l-2$ times generates a new face $f'$ from $f$ where edge
$uv$ is substituted by a path of at most $1$-crossed edges.
To isolate all $l$-crossed edges in $(G, \CurArr)$, we add $O(kn)$ vertices and
edges.

By isolating all $l$-crossed edges in this way,  we obtain an aligned graph
where every non-triangular face is bounded by at most $1$-crossed edges. The
proof of Lemma~\ref{lemma:k_aligned:triangulation} handles all non-triangular
faces independently.  For the correctness of the triangulation it is sufficient
to ensure that every non-triangular face does neither contain $2$-anchored edges
nor $l$-crossed edges. Thus, we can apply the methods used in the proof of
Lemma~\ref{lemma:k_aligned:triangulation} to triangulate $(G, \CurArr)$ with
$O(nk + k^3)$ additional vertices and edges. 
\end{proof}

We now return to the treatment of aligned graphs with alignment complexity $(1,
0, \sentinel)$.
To simplify the proofs, we augment the input graph with an additional
cycle in the outer face that contains all intersections of $\CurArr$
in its interior, and we add subdivision vertices on the intersections of
$\Cur_i$-aligned edges with pseudolines $\Cur_j$, $i \not=j$.
%
A $k$-aligned graph is \emph{proper} if 
\begin{inparaenum}[(i)]
\item every aligned edge is $0$-crossed,
\item for $k\geq 2$, every edge on the outer face is $1$-crossed,
\item the boundary of the outer face intersects every pseudoline exactly twice, and 
\item the outer face does not contain any intersection of $\CurArr$.
\end{inparaenum}

An aligned graph $(G_\mathrm{rs}, \CurArr)$ is a \emph{rigid subdivision} of an aligned
graph $(G, \CurArr)$ if and only if $G_\mathrm{rs}$ is a subdivision of $G$ and every
subdivision vertex is an intersection vertex with respect to $\CurArr$.  We show
that we can extend every $k$-aligned graph $(G, \CurArr)$ to a proper $k$-aligned
triangulation.

\begin{lemma} \label{lemma:k_aligned:proper_triangulation} 
	For every $k\geq 2$ and every $k$-aligned $n$-vertex graph $(G, \CurArr)$ of alignment
	complexity $(1, 0, \sentinel)$, let $(G_\mathrm{rs}, \CurArr)$ be a rigid
	subdivision of $(G, \CurArr)$.  Then there exists a proper $k$-aligned
	triangulation $(G', \CurArr)$ of alignment complexity $(1, 0, \sentinel)$ such
	that $G_\mathrm{rs}$ is a subgraph of $G'$.  The size of $G'$ is in $O(nk^2
	+ k^{4})$. The set $E(G') \setminus E(G_\mathrm{rs})$ has alignment complexity
	$(1, 0, \sentinel)$ and does not contain aligned edges.
\end{lemma}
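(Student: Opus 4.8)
The plan is to carry out, in a controlled way, exactly the augmentation sketched in the paragraph preceding the lemma, and then to reduce properness and triangularity to Lemmas~\ref{lemma:bi_connected} and~\ref{lemma:k_aligned:triangulation}. I would proceed in three phases: (a) make every aligned edge $0$-crossed, (b) install a suitable outer boundary cycle, and (c) biconnect and triangulate the enclosed disk by invoking the two preceding lemmas as black boxes.

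First I would establish property~(i). Since $(G,\CurArr)$ has alignment complexity $(1,0,\sentinel)$, an aligned edge is either $1$-anchored, and hence already $0$-crossed, or $0$-anchored, and hence at most $1$-crossed. In the latter case its single interior crossing with a pseudoline $\Cur_j$ is an intersection point of two pseudolines, so subdividing the edge there produces a vertex that is an intersection vertex; performing all such subdivisions keeps the subdivision rigid, enlarges $G_\mathrm{rs}$ by only $O(n)$ vertices, and leaves the alignment complexity at $(1,0,\sentinel)$. Second, I would add a simple cycle $C_0$ in the outer face enclosing all of $G_\mathrm{rs}$ and all intersections of $\CurArr$. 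Each pseudoline then crosses $C_0$ exactly twice, because outside a disk containing all intersections each pseudoline consists of two unbounded arcs; this yields properties~(iii) and~(iv). Placing the $2k$ vertices of $C_0$ so that exactly one of the $2k$ crossings falls into the interior of each boundary edge, and keeping these vertices free, makes every boundary edge $0$-anchored and exactly $1$-crossed, which gives property~(ii) (for $k\ge 2$), alignment complexity $(1,0,\sentinel)$, and no aligned boundary edge.

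Third, treating $C_0$ as the outer boundary, I would apply Lemma~\ref{lemma:bi_connected} to $G_\mathrm{rs}\cup C_0$ and then Lemma~\ref{lemma:k_aligned:triangulation} to the resulting biconnected graph, restricting both constructions to the bounded region so that $C_0$ survives as the outer face and properties~(ii)--(iv) are undisturbed. Each of these lemmas guarantees that the edges it inserts have alignment complexity $(1,0,\sentinel)$ and contain no aligned edge; hence property~(i) is untouched, the set $E(G')\setminus E(G_\mathrm{rs})$ is $(1,0,\sentinel)$ and aligned-edge-free, and $(G',\CurArr)$ is an aligned triangulation of complexity $(1,0,\sentinel)$ that contains $G_\mathrm{rs}$. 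The size bound follows by composing the two lemmas: from the $O(n+k)$ vertices present after phases (a) and (b), Lemma~\ref{lemma:bi_connected} yields $O(nk+k^3)$ vertices, and feeding this biconnected graph into Lemma~\ref{lemma:k_aligned:triangulation} yields $O(nk^2+k^4)$, matching the claim.

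The main obstacle I expect lies in the second phase and its interaction with the third. I must choose $C_0$ and the placement of its $2k$ vertices so that all four properness conditions hold simultaneously, and then argue carefully that applying the two black-box lemmas inside the disk neither disturbs the outer boundary (so that (ii)--(iv) persist) nor creates an aligned or a $2$-anchored edge along $C_0$. In particular, checking that the biconnection step places no auxiliary vertices in the unbounded cells outside $C_0$, and that every newly inserted boundary-incident edge still respects $(1,0,\sentinel)$, is the delicate bookkeeping on which the argument hinges.
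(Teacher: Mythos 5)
Your proposal follows essentially the same route as the paper's proof: construct the rigid subdivision by subdividing aligned edges at pseudoline intersections, install an outer cycle with one vertex per unbounded cell (so each boundary edge is free and exactly $1$-crossed and each pseudoline meets the boundary twice), then invoke Lemma~\ref{lemma:bi_connected} and Lemma~\ref{lemma:k_aligned:triangulation} as black boxes and compose their size bounds to get $O(nk^2+k^4)$. The paper's proof is exactly this three-step composition (it is even terser about the properness bookkeeping you rightly flag as the delicate point), so there is nothing to add.
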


\begin{proof}
	We construct a rigid subdivision $(G_\mathrm{rs}, \CurArr)$ from $(G,
	\CurArr)$ by placing subdivision vertices on the intersections of
	$\Cur_i$-aligned edges with pseudolines $\Cur_j, i\not=j$.
	The number  $n_{\mathrm{rs}}$ of vertices of $G_\mathrm{rs}$ is in $O(n + k^2)$.

	We obtain a proper biconnected $k$-aligned graph $(G_b, \CurArr)$ by embedding
	a simple cycle $C$ in the outer face of $G_\mathrm{rs}$ and applying
	Lemma~\ref{lemma:bi_connected}. In order to construct $C$, we
	place a vertex $v_c$ in each unbounded cell $c$ of $\CurArr$ and connect two
	vertices $v_c$ and $v_{c'}$ if the boundaries of the cells $c$ and $c'$
	intersect.
	The size $n_b$ of $G_b$ is
	$O(n_\mathrm{rs}k + k^3)  = O(nk + k^3)$.
	We obtain a proper $k$-aligned
	triangulation $(G', \CurArr)$ of $G_b$ with the application of 
	Lemma~\ref{lemma:k_aligned:triangulation}.  The size $n'$ of $G'$ is in
	$O(n_bk + k^3) = O((nk + k^3)k + k^3) = O(nk^2 + k^4)$.
\end{proof}

The following two lemmas show that we can reduce the size of the aligned graph
and obtain a drawing by merging two drawings or by geometrically uncontracting 
an edge.

\newcommand{\lemmaSeparatingTriangle}{
	Let $(G, \CurArr)$ be a $k$-aligned triangulation.
	Let $T$ be a separating triangle splitting $G$
	into subgraphs $\gin, \gout$ so that $\gin \cap \gout=T$ and $\gout$
	contains the outer face of $G$. Then, \begin{inparaenum}[(i)]
		\item $(\gout, \CurArr)$ and $(\gin, \CurArr)$ are $k$-aligned
			triangulations, and
		\item $(G, \CurArr)$ has an aligned drawing if and only if 
			there exists a common line arrangement $A$  such that
			$(\gout, \CurArr)$ has an
			aligned drawing $(\Dout,  \Arr)$ and $(\gin, \CurArr)$ has an aligned
			drawing $(\Din, \Arr)$ with the outer face drawn \mbox{as
			$\Dout[T]$}.
	\end{inparaenum}
}
\begin{lemma}
	\label{lemma:k_aligned:separating_triangle}
	\lemmaSeparatingTriangle
\end{lemma}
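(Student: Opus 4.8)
The plan is to establish the two parts of the lemma in sequence, treating the structural claim~(i) as a prerequisite for the more substantial equivalence in~(ii).

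For part~(i), I would first argue that both split components $\gin$ and $\gout$ inherit the structure of $k$-aligned triangulations. Since $T$ is a separating triangle in the triangulation $G$, every inner face of $G$ is a triangle, and these faces are partitioned between $\gin$ and $\gout$ with $T$ itself being the shared boundary. The outer face of $\gin$ is exactly the triangle $T$, which is bounded by a simple cycle, so $\gin$ is a triangulation; $\gout$ retains the original outer face of $G$ and has $T$ as one of its inner faces, so it too is a triangulation. For the alignment structure, I would observe that the pseudoline arrangement $\CurArr$ restricted to each component is still an arrangement of pseudolines with respect to that subgraph: an edge of $\gin$ or $\gout$ is a subset of an edge of $G$, so the alignment complexity $(1,0,\sentinel)$ and the aligned/free classification of edges and vertices is preserved. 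Hence $(\gin, \CurArr)$ and $(\gout, \CurArr)$ are indeed $k$-aligned triangulations.

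For part~(ii), the forward direction is straightforward: given an aligned drawing $(\Gamma, A)$ of $(G, \CurArr)$, I would simply restrict $\Gamma$ to $\gin$ and to $\gout$. Because $T$ is drawn as a triangle $\Gamma[T]$ in both restrictions and $A$ is a common line arrangement, the two restricted drawings $(\Din, A)$ and $(\Dout, A)$ are aligned drawings with $\Din$ and $\Dout$ agreeing on the drawing of $T$, so in particular $\Din[T] = \Dout[T]$, satisfying the stated condition. The reverse direction is the crux of the argument and where I expect the main obstacle to lie. Given aligned drawings $(\Din, A)$ and $(\Dout, A)$ that share the same line arrangement $A$ and agree on the drawing of the triangle $T$ (i.e.\ $\Din[T] = \Dout[T]$), I would glue them along $T$: place $\Dout$ in the plane with its line arrangement $A$, and then draw $\gin$ inside the triangular region bounded by $\Dout[T]$ using $\Din$, which is possible precisely because the two drawings of $T$ coincide geometrically. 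The resulting drawing $\Gamma$ is a planar straight-line drawing of $G$, since $\gin$ lives entirely inside the bounded triangular region of $T$ and $\gout$ outside it, with no spurious crossings.

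The delicate point is verifying that the glued object is a genuine \emph{aligned} drawing of $(G, \CurArr)$, i.e.\ that properties~(i)--(v) of the aligned-drawing definition hold for the combined arrangement of $\Gamma \cup A$. The interior of $T$ may be crossed by several lines of $A$, so when I import $\Din$ into the triangular cell, I must check that the cells of $\CurArr$ inside $T$ map correctly to the cells of $A$ inside $\Dout[T]$, and that every edge of $\gin$ meets the lines in the same cyclic/linear order as prescribed by the homeomorphism. The key enabling fact is that both drawings use the \emph{same} arrangement $A$, so the portion of $A$ inside the triangle $\Dout[T]$ is identical to the portion inside $\Din[T]$; the homeomorphism types of the two sub-arrangements agree on $T$, and the cell bijection $\phi$ composes consistently across the shared boundary. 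I would therefore argue that the union of the two homeomorphisms (one for $\gout$, one for $\gin$), which coincide on the common triangle $T$ and its incident line segments, extends to a single homeomorphism witnessing that $\Gamma \cup A$ is homeomorphic to $G \cup \CurArr$. The main obstacle is precisely this compatibility bookkeeping at the boundary $T$, ensuring no pseudoline enters or exits $T$ in a way that is inconsistent between the inside and outside realizations; this is exactly what the shared arrangement $A$ and the requirement $\Din[T] = \Dout[T]$ are designed to guarantee.
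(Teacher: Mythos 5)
Your proposal is correct and follows essentially the same route as the paper's proof: restrict the drawing for the forward direction, and for the converse glue $\Din$ into the triangular region $\Dout[T]$, relying on the common line arrangement $\Arr$ and the matching drawing of $T$ to guarantee the merged drawing is aligned. Your discussion of the compatibility bookkeeping at $T$ is more explicit than the paper's (which asserts the merge works in one sentence), but it is the same argument.
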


\begin{proof}
	It is easy to verify that $(\gout, \CurArr)$ and $(\gin, \CurArr)$ are
	aligned triangulations.
	An aligned drawing $(\Gamma, \Arr)$ of $(G,\CurArr)$
	immediately implies the existence of an aligned drawing
	$(\Dout,\Arr)$ of $(\gout, \CurArr)$ and $(\Din, \Arr)$ of $(\gin, \CurArr)$.

	Let $(\Dout, \Arr)$ be an aligned drawing of
	$(\gout, \CurArr)$.
	Since $(\Dout, \Arr)$ is an aligned drawing, $(\Dout[T], \Arr)$ is an aligned
	drawing of $(T, \CurArr)$. 
	Let $(\Din, \Arr)$ be an aligned drawing of $(\gin, \CurArr)$ with the outer
	face drawn as $\Dout[T]$.
	Let $\Gamma$ be the drawings obtained by merging the drawing
	$\Dout$ and $\Din$. Since $(\Dout, \Arr)$ and $(\Din, \Arr)$ are aligned
	drawings on the same line arrangement $\Arr$, $(\Gamma, \Arr)$ is an aligned drawing of
	$(G, \CurArr)$.
\end{proof}

\newcommand{\lemmaContraction}{
	Let $(G, \CurArr)$ be a
	proper $k$-aligned triangulation of alignment complexity $(1,0,\sentinel)$ and let
	$e$ be an interior $0$-anchored aligned edge or an interior free edge of $G$ that does not
	belong to a separating triangle and is not a chord.
	Then $(G/e, \CurArr)$ is
	a proper $k$-aligned triangulation of alignment complexity $(1,0,\sentinel)$.
	Further, $(G, \CurArr)$ has an aligned drawing if $(G/e, \CurArr)$ has
	an aligned drawing.
}

\begin{lemma} \label{lemma:k_aligned:contraction}
\lemmaContraction
\end{lemma}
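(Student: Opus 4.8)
The plan is to prove the two assertions separately: first the purely combinatorial claim that $(G/e,\CurArr)$ is again a proper $k$-aligned triangulation of alignment complexity $(1,0,\sentinel)$, and then the geometric claim that a drawing of $(G/e,\CurArr)$ can be ``uncontracted'' into a drawing of $(G,\CurArr)$ on the \emph{same} line arrangement $\Arr$.

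For the combinatorial part, I would first invoke the standard fact that in a triangulation an edge is contractible, i.e.\ its contraction yields again a simple triangulation, precisely when it is not contained in a separating triangle (equivalently, when its endpoints have exactly the two common neighbors that form its incident faces). Since $e$ is neither in a separating triangle nor a chord and is interior, $G/e$ is a simple triangulation whose outer face coincides with that of $G$; hence the properness conditions concerning the outer face (every boundary edge $1$-crossed, the boundary meeting every pseudoline exactly twice, no intersection in the outer face) are inherited verbatim. Writing $w$ for the vertex resulting from contracting $e=uv$, I observe that $w$ has the same alignment type as $u$ and $v$: in the free case all three are free, and in the $0$-anchored aligned case $u$, $v$ and $w$ all lie on a single pseudoline $\Cur_i$ and on no other. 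Consequently every edge incident to $w$ keeps the anchoring it had at $u$ or at $v$; no $2$-anchored edge can appear, since that would require an endpoint lying on a second pseudoline, which the complexity $(1,0,\sentinel)$ (with $l_2=\sentinel$) forbids already in $G$. Finally, because $e$ is $0$-crossed (it is free, or aligned and hence $0$-crossed by properness) it lies in the closure of a single cell, so routing the edges incident to $e$ close to $e$ preserves the cell-crossing sequence of each of them; thus crossing numbers are unchanged, aligned edges remain $0$-crossed, and the complexity stays $(1,0,\sentinel)$.

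For the geometric part I would start from an aligned drawing $(\Gamma',\Arr)$ of $(G/e,\CurArr)$ and split the vertex $w$ back into $u$ and $v$ by a local perturbation. Let $p=\Gamma'(w)$ and choose a disk $D$ around $p$ so small that it contains no other vertex and is met only by the edges incident to $w$ (and, in the aligned case, by the line $L_i$ stretching $\Cur_i$). Let $a$ and $b$ be the two common neighbors of $u$ and $v$ in $G$, i.e.\ the apexes of the two triangular faces bounding $e$; the two edges $wa$, $wb$ cut the rotation around $w$ into the arc of $u$'s remaining neighbors and the arc of $v$'s remaining neighbors. I would then place $u$ and $v$ inside $D$ --- on $L_i$, to either side of $p$, in the aligned case so that $e=uv$ stays on $L_i$, and anywhere near $p$ within the single cell containing $p$ in the free case --- and draw straight edges from each of $u,v$ to its assigned arc together with the short edge $uv$, recreating the two triangles $uva$ and $uvb$. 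Since outside $D$ nothing changes and the outer face is untouched, the result is an aligned drawing of $(G,\CurArr)$ on the same arrangement $\Arr$, with the prescribed outer boundary preserved.

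The main obstacle is exactly this placement step: I must choose the positions of $u$ and $v$ inside $D$ so that the straight segments to their respective neighbors neither cross one another nor the edge $uv$, so that the two recreated faces $uva$ and $uvb$ are non-degenerate and correctly oriented, and --- in the aligned case --- so that $u$ and $v$ lie on $L_i$ (keeping $e$ aligned) while every incident edge retains its crossing sequence with $\Arr$. The geometric facts that make this possible are that the faces around $w$ are all triangles and that, in the aligned case, $a$ and $b$ lie strictly on opposite sides of $L_i$ (the two faces of the aligned edge $e$ lie on the two sides of $\Cur_i$), so the two new triangles can be formed with $uv$ along $L_i$; shrinking $D$ forces the straight edges to the far neighbors to stay arbitrarily close to their original rays from $p$, which guarantees planarity.
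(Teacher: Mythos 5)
Your proof is correct and follows essentially the same route as the paper's: first verify combinatorially that the contraction preserves properness, the triangulation property and the alignment complexity (using that $e$ is $0$-crossed and not in a separating triangle), then uncontract geometrically in a small neighbourhood of the contracted vertex, relying on the facts that the incident faces are triangles and that, in the aligned case, the two apexes of the faces incident to $e$ lie on opposite sides of the line. The only real difference is cosmetic: the paper pins $u$ at the exact position of the contracted vertex and places only $v$, in the kernel of the resulting star-shaped face (which also keeps the outer boundary literally unchanged when $u$ happens to lie on it), whereas you perturb both endpoints inside a small disk and argue by continuity.
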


\begin{proof} 
	We first prove that $(G/e, \CurArr)$ is a proper $k$-aligned triangulation.
	Consider a topological drawing of the aligned graph $(G, \CurArr)$.	
	Let $c$ be the vertex in $G/e$ obtained from contracting the edge
	$e=uv$.  We place $c$ at the position of $u$. Thus, all the edges incident to
	$u$ keep their topological properties. We route the edges incident to $v$
	close to the edge $uv$ within the cell from which they arrive to $v$ in $(G,
	\CurArr)$.	
	Since $e$ is not an edge of a separating triangle, $G/e$ is simple and triangulated. 

	Consider a free edge $e$.  Observe that the triangular faces incident to $e$
	do not contain an intersection of two pseudolines in their interior, since 
	$(G, \CurArr)$ does not contain  $l$-crossed edges, for $l\geq 2$.  Therefore, $(G/e, \CurArr)$
	is an aligned triangulation.  Since $e$ is not a chord, $(G/e,
	\CurArr)$ is proper.  Further, $u$ and $v$ lie in the interior of the same
	cell, thus, the edges incident to $c$ have the same alignment complexity as in
	$(G, \CurArr)$.
	
	If $e$ is aligned, it is also $0$-crossed, since $(G, \CurArr)$ is proper.
	Since $e$ is also $0$-anchored, the triangles incident to $e$ do not contain an intersection of two 
	pseudolines and therefore $(G/e, \CurArr)$ is a proper aligned triangulation. The routing of the edges
	incident to $c$, as described above, ensures that the alignment complexity
	is $(1, 0 , \sentinel)$.



\begin{figure}[tb] 
	\centering
	\subfloat[]{
		\includegraphics[page=2]{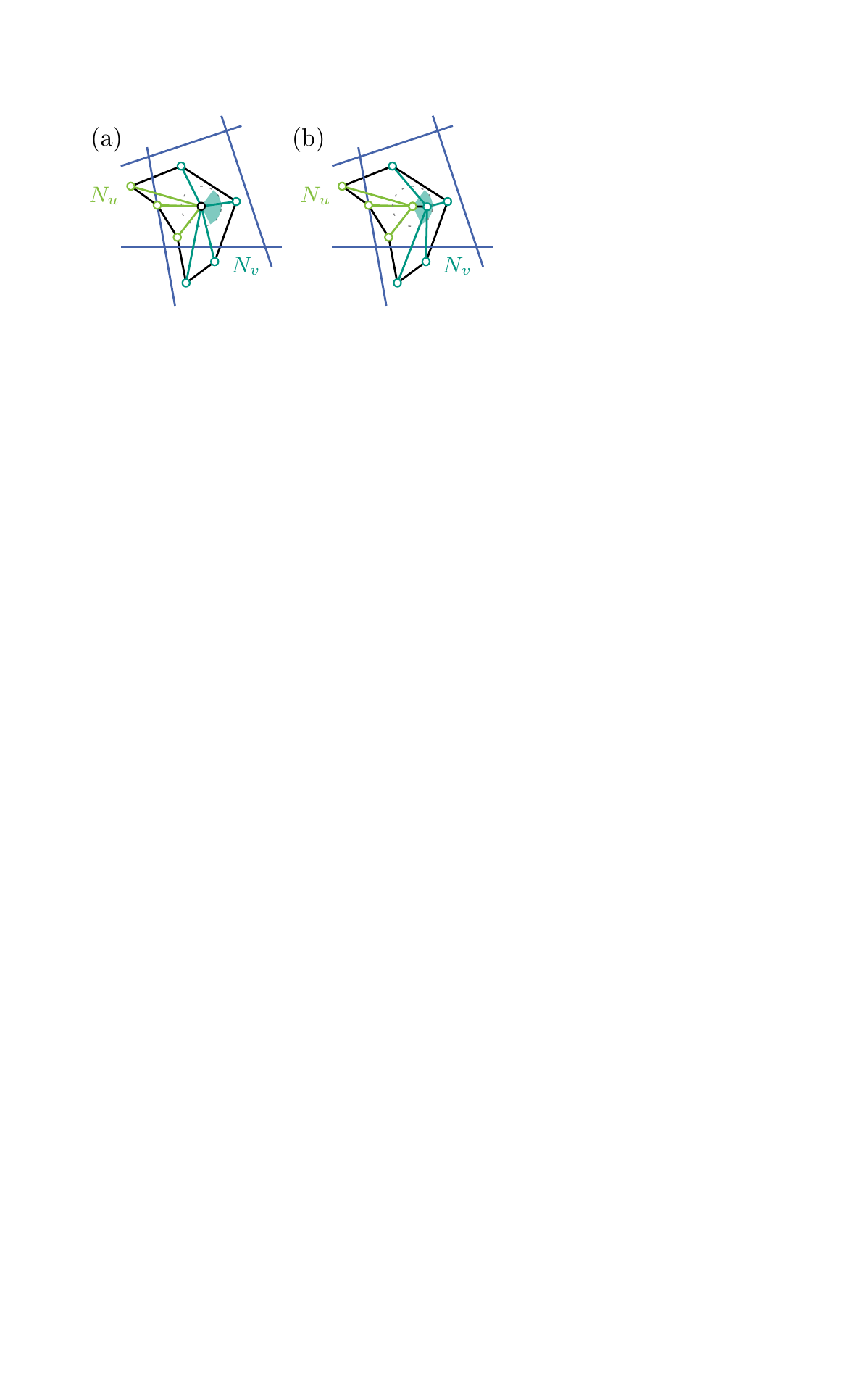} 
	}
	\quad
	\subfloat[]{
		\includegraphics[page=3]{figures/uncontract_skeleton.pdf} 
	}
	\caption{Unpacking
	an edge in a drawing $\Gamma'$ of $G/e$ (a) to obtain a drawing $\Gamma$
of~$G$ (b). } \label{fig:k_aligned:uncontract_edge}
\end{figure}


	Let $(\Gamma', \Arr)$ be an aligned drawing of $(G/e, \CurArr)$.  We now
	prove that $(G,\CurArr)$ has an aligned drawing.  Let $\Gamma''$ denote the
	drawing obtained from $\Gamma'$ by removing $c$ together with its incident
	edges and let $f$ denote the face of $\Gamma''$ where $c$ used to lie.
	Since $G/e$ is triangulated and $e$ is an interior edge
	and not a chord, $f$ is star-shaped and $c$ lies inside the kernel of $f$; see
	Fig.~\ref{fig:k_aligned:uncontract_edge}.  We construct a drawing $\Gamma$
	of $G$ as follows.  If one of  vertices $u$ and $v$ lies on the outer face, we
	assume, without loss of generality, that vertex to be $u$.  First, we place
	$u$ at the position of $c$ and insert all edges incident to $u$.
	This results in a drawing of the face $f'$ in which we have to place $v$.
	Since $u$ is placed in the kernel of $f$, $f'$  is star-shaped.  If $e$ is
	a free edge, the vertex $v$ has to be placed in the same cell as $u$.	We
	then place $v$ inside $f'$ sufficiently close to $c$ so that it lies inside
	the kernel of $f'$ and in the same cell as $u$. All edges incident to $v$
	are at most $1$-crossed, thus, $(\Gamma, \Arr)$ is an aligned drawing of
	$(G, \CurArr)$.  
	
	Likewise, if $e$ is an $\Cur$-aligned edge, then $v$ has to be placed on the
	line $L \in \Arr$ corresponding to $\Cur$. In this case, also  $c$ and therefore
	$u$ lie on $L$. 
	Since $e$ is an interior edge, there exist two triangles $uv, vx, xu$
	and $uv, vy, yu$ sharing the edge $uv$. Since, $e$ is not part of a
	separating triangle, $x$ and $y$ are on different sides of $L$.  Therefore
	the face $f'$ contains a segment of the line $L$ of positive length that is
	within the kernel of $f'$.  Thus, we can place $v$ close to $u$ on the line
	$L$ such that the resulting drawing is an aligned drawing of $(G, \CurArr)$.
\end{proof}

	Note that contracting a $1$-anchored aligned edge can result in a graph $(G/e,
	\CurArr)$ with an alignment complexity that does not coincide with the
	alignment complexity of $(G, \CurArr)$. Further, for general alignment
	complexities there is an aligned graph $(G, \CurArr)$ and an $1$-anchored
	aligned edge $e$ such that $(G/e, \CurArr)$ is not an aligned graph.
	
\subsection{One Pseudoline}
\label{sec:r_aligned}

We show that every $1$-aligned graph $(G, \CurR)$ has an aligned drawing
$(\Gamma, R)$, where $\CurR$ is a single pseudoline and $R$ is the corresponding
straight line. Using the techniques from the previous section, we can assume that $(G,
\CurR)$ is a proper $1$-aligned triangulation. We show that unless $G$ is very
small, it contains an edge with a certain property. This allows for an inductive proof to construct
an aligned drawing of $(G, \CurR)$.

\begin{lemma}
	\label{lem:one-line-triangle}
	Let $(G,\CurR)$ be a proper $1$-aligned triangulation without chords and
	with $k$ vertices on the outer face. If $G$ is neither a
	triangle nor a $k$-wheel whose center is aligned, then $(G, \CurR)$ contains an interior aligned or
	an interior free edge.
\end{lemma}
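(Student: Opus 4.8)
The plan is to argue by contradiction, or rather by structural dichotomy: assume $(G,\CurR)$ is a proper $1$-aligned triangulation without chords, and suppose it contains neither an interior aligned edge nor an interior free edge; I will show that $G$ must then be a triangle or an aligned-center $k$-wheel. Since $(G,\CurR)$ is proper, the pseudoline $\CurR$ crosses the outer boundary exactly twice, the outer face contains no intersection (trivially true here as $k=1$), and every aligned edge is $0$-crossed. The key observation is that the absence of both interior aligned and interior free edges forces \emph{every} interior edge to be what remains: a $1$-anchored $0$-crossed edge or a $0$-anchored $1$-crossed edge. I would want to classify the interior vertices and edges according to their relationship with $\CurR$ and show this classification is extremely rigid.

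\textbf{Analyzing interior vertices.}
First I would examine the interior vertices and how $\CurR$ passes through $G$. The line $\CurR$ enters and leaves through the outer face, crossing or passing through a sequence of edges and vertices. Let me split $G$ into the part above $\CurR$ and the part below. An interior free vertex strictly on one side, if it has an incident interior edge to another free vertex on the same side, would give a free edge (contradiction), so I expect to argue that free interior vertices are heavily constrained. The crucial step is to show that any interior vertex not on $\CurR$ forces a neighbor configuration producing a forbidden edge: since the triangulation is locally dense, an interior free vertex $w$ has a full cycle of neighbors forming its link, and among those neighbors or the edges of the link one should be able to extract a free edge (both endpoints free, $0$-crossed) unless $w$'s entire neighborhood lies on $\CurR$. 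This is where the $k$-wheel with aligned center emerges: the center lies on $\CurR$, and all its incident edges are $1$-anchored (center on $\CurR$, rim off it) and $0$-crossed, hence none is free or aligned.

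\textbf{The main obstacle.}
The hard part will be ruling out all intermediate configurations — graphs larger than a triangle that are not wheels yet still avoid interior free and interior aligned edges. I expect the decisive tool to be the no-chord hypothesis combined with triangulation density: without chords, every interior edge's two triangular faces are "genuinely interior," so each interior edge sits in a diamond $uv,vx,xu,vy,yu$, and I can play off the four vertices $u,v,x,y$ and their alignment status against each other. If $\CurR$ meets $G$ only in isolated interior points on crossed edges (no aligned vertices), then I should be able to find two adjacent free vertices giving a free edge; so there must be aligned vertices, and then I push to show the aligned vertices together with the forced $1$-anchored edge pattern collapse $G$ to a wheel. I would structure this as: (1) if there are no interior aligned vertices, derive an interior free edge; (2) if there is at least one interior aligned vertex, use the $0$-crossed aligned-edge condition and the triangulation to show the aligned vertices form a single path along $\CurR$, and that any such path of length $\geq 1$ yields an interior aligned edge unless it is a single vertex, which is exactly the wheel center. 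Carefully handling the boundary interaction — ensuring the outer-face vertices and the two crossing points of $\CurR$ don't secretly supply a chord or an interior edge of the wrong type — is the delicate bookkeeping I would need to complete, but the conceptual core is the vertex-link argument forcing the wheel.
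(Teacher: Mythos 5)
Your overall plan follows the same case split as the paper (interior free vertex versus all interior vertices aligned, with the aligned-center wheel as the exceptional case), but the decisive step is missing. The crux is to show that an interior free vertex $x$ is incident to an \emph{interior free} edge, and your proposed local argument — inspect the link of $x$ and extract a free edge ``unless $x$'s entire neighborhood lies on $\CurR$'' — states the wrong dichotomy. A neighbor of $x$ can also be a free vertex on the \emph{opposite} side of $\CurR$, joined to $x$ by a $0$-anchored $1$-crossed edge, which is neither free nor aligned; so the bad configuration is ``every neighbor of $x$ is aligned or lies on the other side of $\CurR$,'' and nothing in your sketch rules this out. The paper handles exactly this case with a global argument: it perturbs $\CurR$ into a curve $\CurR'$ avoiding all vertices (rerouting locally around each aligned vertex, through the edge $vx$ when $v$ is adjacent to $x$), observes that $\CurR'$ is a simple cycle in the dual and hence an edge cut, and notes that the component of $G$ minus that cut containing $x$ consists only of free edges on $x$'s side. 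If $x$ had no free neighbor on its side, that component would be $\{x\}$ alone, so $\CurR'$ would visit only the interior faces around $x$ — contradicting that the curve must traverse the outer face. This topological step is the content of the lemma and is absent from your proposal.

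A second, smaller gap: your treatment of the all-aligned case presupposes that the interior aligned vertices ``form a path along $\CurR$,'' i.e., that consecutive aligned vertices are adjacent in $G$. That needs the paper's Claim~1: if $u$ and $v$ are consecutive among the vertices and edges met by $\CurR$, then the edge $uv$ could be added without crossings, hence is already present because $G$ is a triangulation, and it must be aligned because a non-aligned edge has at most one endpoint on $\CurR$. Combined with the no-chord hypothesis (which guarantees two \emph{interior} aligned vertices are consecutive on $\CurR$ when there are at least two interior vertices), this yields the interior aligned edge. You gesture at the conclusion but do not supply the maximality argument that makes it true.
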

\begin{proof}
	We first prove two useful claims.

	\smallskip
	\noindent
	\textit{Claim  1.}
		Consider the order in which $\CurR$ intersects the vertices and
		edges of $G$.  If vertices $u$ and $v$ are consecutive on $\CurR$,
		then the edge $uv$ is in $G$ and aligned.
	\smallskip
	
	\begin{claimproof}
		Observe that the edge $uv$ can be inserted into $G$ without creating
		crossings.  Since $G$ is a triangulation, it therefore already contains $uv$, and
		further, since every non-aligned edge has at most one of its endpoints on
		$\CurR$, it follows that indeed $uv$ is aligned.  This proves the claim.
	\end{claimproof}

	\smallskip
	\noindent
	\textit{Claim  2.}
		If $(G,\CurR)$ is an aligned triangulation without aligned edges and
	  $x$ is an interior free vertex of $G$, then $x$ is incident to
		a free edge.

	\smallskip

	\begin{claimproof}
	\begin{figure}[tb]
		\centering
		\subfloat[]{
			\includegraphics[page=2]{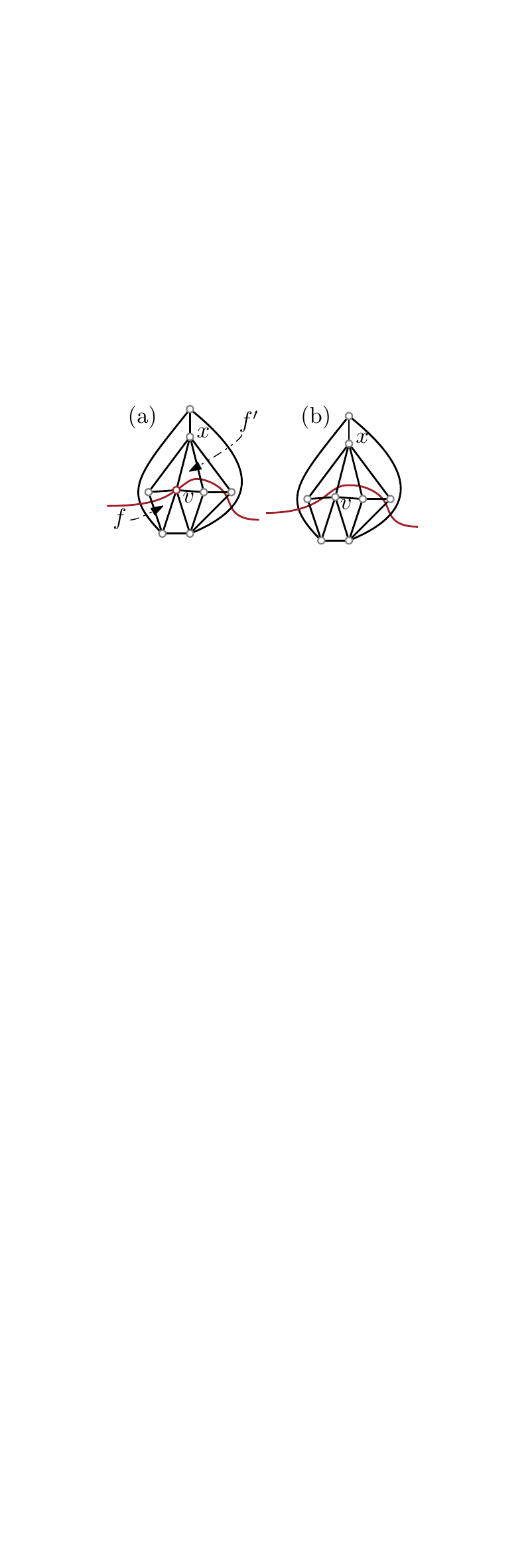}
		}
		\quad
		\subfloat[]{
			\includegraphics[page=3]{figures/red_curve_trafo}
		}
		\caption{Transformation from a red vertex~(a) to a gray
		vertex~(b).}
		\label{fig:red_to_gray_vertex}
	\end{figure}		
	Assume for a contradiction that all neighbors of $x$ lie either on $\CurR$
	or on the other side of $\CurR$.  First, we slightly modify
	$\CurR$ to a curve $\CurR'$ that does not contain any vertices.
	Assume $v$ is an aligned vertex; see Fig.~\ref{fig:red_to_gray_vertex}.  Since there are no aligned edges, $\CurR$
	enters $v$ from a face $f$ incident to $v$ and leaves it to a
	different face $f'$ incident to $v$.  We then reroute $\CurR$ from
	$f$ to $f'$ locally around $v$.  If $v$ is incident to $x$, we
	choose the rerouting such that it crosses the edge $vx$. 
	
	Notice that if an edge $e$ intersects $\CurR$ in its endpoints, then $\CurR'$
	either does not intersect it or intersects it in an interior point.
	Moreover, $e$ cannot intersect $\CurR'$ twice as in such a case
	$\CurR$ would pass through both its endpoints. 
	Now, since $G$ is a triangulation and the outer face of $G$ is proper,
	$\CurR'$ corresponds to a simple
	cycle in the dual $\dual{G}$ of $G$, and hence corresponds to a cut $C$ of
	$G$.  Let $H$ denote the connected component of $G-C$ that contains $x$ and
	note that all edges of $H$ are free.  By the assumption and the construction
	of $\CurR'$, $x$ is the only vertex in $H$.  Thus, $\CurR'$ intersects only
	the faces incident to $x$, which are interior. This contradicts the
	assumption that $\CurR'$ passes through the outer face of $G$ and 
	finishes the proof of the claim.
\end{claimproof}

	We now prove the lemma. Assume that $G$ is neither a triangle nor a
	$k$-wheel whose center is aligned. If $G$ is a $k$-wheel whose center is free,
	we find a free edge by Claim~2.
	Otherwise, $G$ contains at least two interior vertices. If one of
	these vertices is free, we find a free edge by Claim~2. 
	Otherwise, all interior vertices are aligned.
	Since $G$ does not contain any chord, there is a pair of aligned
	vertices
	consecutive along $\CurR$. Thus by Claim~1 the instance $(G, \CurR)$ has an
	aligned edge.
\end{proof}

\newcommand{\theoremRAlignedDrawing}{
	Let $(G, \CurR)$ be a proper aligned
	graph and let $(\Gamma_O, R)$ be a convex aligned drawing of the
	aligned outer face $(O, \CurR)$ of $G$.
	There exists an aligned drawing $(\Gamma, R)$ of $(G,
	\CurR)$ with the same line $R$ and the outer face drawn as $\Gamma_O$.
}
\begin{theorem}
	\label{theorem:r_aligned:drawing}
	\theoremRAlignedDrawing
\end{theorem}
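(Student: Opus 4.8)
The plan is to prove Theorem~\ref{theorem:r_aligned:drawing} by induction on the number of vertices of $G$, using the structural results established above. First I would reduce to the base cases. By the preprocessing established in Lemmas~\ref{lemma:bi_connected}--\ref{lemma:k_aligned:proper_triangulation}, I may assume $(G,\CurR)$ is a proper $1$-aligned triangulation; the convex drawing $\Gamma_O$ of the outer face $(O,\CurR)$ is given and must be respected. If $G$ is a triangle, then $G=O$ and the drawing $\Gamma_O$ itself is the desired aligned drawing. If $G$ is a $k$-wheel whose center is aligned, then the outer cycle is already drawn as $\Gamma_O$, and it remains to place the single interior center vertex on the line $R$ inside the kernel of the (star-shaped, since $\Gamma_O$ is convex) outer face; I would argue that the portion of $R$ inside the convex polygon $\Gamma_O$ is a nonempty segment lying in the kernel, so such a placement exists and yields a straight-line aligned drawing.

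For the inductive step I would first dispose of chords. If $G$ has a chord, it induces a separating triangle (since $G$ is a triangulation), so I would invoke Lemma~\ref{lemma:k_aligned:separating_triangle}: split $G$ along the separating triangle $T$ into $\gin$ and $\gout$, recursively draw $(\gout,\CurR)$ with outer face $\Gamma_O$ to obtain a drawing realizing $T$ as some triangle $\Dout[T]$, and then recursively draw $(\gin,\CurR)$ with its outer face prescribed to be $\Dout[T]$. The subtlety is that the recursive call on $\gin$ needs $\Dout[T]$ to be a \emph{convex aligned} drawing of the outer face $(T,\CurR)$ of $\gin$; since $T$ is a triangle this is automatic, and the separating-triangle lemma then merges the two drawings on the common line $R$. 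Thus I may assume $G$ has no chords.

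With $G$ chordless and neither a triangle nor an aligned-centered $k$-wheel, Lemma~\ref{lem:one-line-triangle} guarantees an interior aligned or interior free edge $e$ that is not a chord and (being interior in a chordless triangulation) does not belong to a separating triangle. I would then apply Lemma~\ref{lemma:k_aligned:contraction} to $e$: the contracted graph $(G/e,\CurR)$ is again a proper $1$-aligned triangulation of complexity $(1,0,\sentinel)$ with strictly fewer vertices and the \emph{same} outer face $O$, so $\Gamma_O$ is still a valid convex aligned drawing of its outer face. By induction $(G/e,\CurR)$ has an aligned drawing with outer face $\Gamma_O$, and the ``only if'' direction of Lemma~\ref{lemma:k_aligned:contraction} produces from it an aligned drawing of $(G,\CurR)$; crucially, the uncontraction in that lemma is purely local (it reinserts the second endpoint of $e$ near $c$ inside the star-shaped face and on $R$ when $e$ is aligned), so it does not disturb the prescribed drawing of the outer face.

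The main obstacle I anticipate is ensuring that every recursive call continues to satisfy the hypotheses of the theorem, in particular that the prescribed outer-face drawing remains convex and aligned and that contraction never alters the outer face. The separating-triangle decomposition is the delicate point: I must verify that $e$ chosen by Lemma~\ref{lem:one-line-triangle} is genuinely \emph{interior} and not in any separating triangle so that Lemma~\ref{lemma:k_aligned:contraction} applies, and that after splitting along a chord the inner component $\gin$ inherits a convex aligned triangular outer face from $\Dout[T]$. Checking that these invariants are preserved through both the contraction branch and the decomposition branch, together with confirming the aligned center of the $k$-wheel lands in the kernel of the convex polygon, is where the real work lies; the remaining steps are direct appeals to the already-established lemmas.
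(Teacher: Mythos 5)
Your overall inductive skeleton (base cases, structural edge from Lemma~\ref{lem:one-line-triangle}, contraction via Lemma~\ref{lemma:k_aligned:contraction}) matches the paper, but two of your case reductions rest on false claims, and together they leave a genuine gap. First, a chord does \emph{not} in general induce a separating triangle: in a triangulated quadrilateral the single chord lies on two inner faces, neither of which is separating. More importantly, even when a chord does lie on a separating triangle, Lemma~\ref{lemma:k_aligned:separating_triangle} is the wrong tool, because a chord splits $G$ into two components that \emph{both} contain part of the outer cycle, whereas the separating-triangle decomposition requires one split component to lie entirely inside the triangle. The paper handles chords directly: the chord $e$ joins two vertices of the convex polygon $\Gamma_O$, so drawing it as a straight segment splits $\Gamma_O$ into two convex aligned sub-polygons, and one recurses on $(G_1,\CurR)$ and $(G_2,\CurR)$ with $G_1\cap G_2=e$, merging the two drawings over the common line $R$. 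Your proposal has no mechanism for this situation.

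Second, your justification that the edge $e$ from Lemma~\ref{lem:one-line-triangle} ``(being interior in a chordless triangulation) does not belong to a separating triangle'' is incorrect: a chordless triangulation can perfectly well contain a separating triangle all of whose edges are interior, and Lemma~\ref{lem:one-line-triangle} gives no guarantee that the edge it produces avoids such triangles. Since Lemma~\ref{lemma:k_aligned:contraction} explicitly requires $e$ not to lie on a separating triangle (otherwise $G/e$ is not simple/triangulated), your contraction step is not licensed. The paper closes this hole by inserting a separate inductive case \emph{before} invoking Lemma~\ref{lem:one-line-triangle}: if $G$ contains a separating triangle $T$, split into $\gin$ and $\gout$, recursively draw $(\gout,\CurR)$ with outer face $\Gamma_O$, then recursively draw $(\gin,\CurR)$ with outer face $\Dout[T]$ (which is a convex aligned triangle, so the induction hypothesis applies), and merge via Lemma~\ref{lemma:k_aligned:separating_triangle}. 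Only after chords and separating triangles are both excluded does the contraction argument go through. You flagged this verification as ``where the real work lies,'' but the verification as you propose it would fail; you need the extra case, not a proof that it cannot occur.
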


\begin{proof}
	Given an arbitrary proper aligned graph $(G, \CurR)$, we first complete it to a biconnected graph and then triangulate it by applying
	Lemma~\ref{lemma:bi_connected} and Lemma~\ref{lemma:k_aligned:triangulation}, respectively.

	We prove the claim by induction on the size of $G$.  If $G$ is just a
	triangle, then clearly $(\Gamma_{O},R)$ is the desired drawing.  If $G$ is
	the $k$-wheel whose center is aligned, placing the vertex on the line in the interior of
	$\Gamma_{O}$ yields an aligned drawing of $G$.  This finishes the base case.

	If $G$ contains a chord $e$, then $e$ splits $(G, \CurR)$ into two graphs
	$G_1, G_2$ with $G_1 \cap G_2 = e$. It is easy to verify that $(G_i, \CurR)$
	is an aligned graph.  Let $(\Gamma^i_{O}, R)$ be a drawing of the face of
	$\Gamma_O \cup {e}$ whose interior contains $G_i$.  By the inductive
	hypothesis, there exists an aligned drawing of $(\Gamma_i, R)$ with the outer
	face drawn as $(\Gamma_O^i, R)$. We obtain a drawing $\Gamma$ by merging the
	drawings $\Gamma_1$ and $\Gamma_2$. The fact that both $(\Gamma_1, R)$ and
	$(\Gamma_2, R)$ are aligned drawings with a common line $R$ and compatible
	outer faces implies that $(\Gamma,R)$ is an aligned drawing of $(G, \CurR)$.

	If $G$ contains a separating triangle $T$, let $\gin$ and $\gout$ be the
	respective split components with $\gin \cap \gout = T$. By
	Lemma~\ref{lemma:k_aligned:separating_triangle}, the graphs $(\gin, \CurR)$
	and $(\gout, \CurR)$ are aligned graphs. By the induction hypothesis there
	exists an aligned drawing $(\Dout, R)$ of the aligned graphs $(\gout,
	\CurR)$ with the outer face drawn as $(\Gamma_O,R)$.  Let $\Gamma[T]$ be the
	drawing of $T$ in $\Dout$.  Further, $(\gin, \CurR)$ has
	by induction hypothesis an aligned drawing with the outer
	face drawn as $\Gamma[T]$. Thus, by
	Lemma~\ref{lemma:k_aligned:separating_triangle} we obtain an aligned
	drawing of $(G, \CurR)$ with the outer face drawn as $\Gamma_O$.

     If $G$ is neither a
    triangle nor a $k$-wheel, by Lemma~\ref{lem:one-line-triangle}, it contains an interior aligned or
    an interior free edge $e$.    
	Since $e$ is not a chord and does not belong to a separating triangle,
	by Lemma~\ref{lemma:k_aligned:contraction},
	$(G/e, \CurR)$ is an aligned graph and by the induction hypothesis it has an
	aligned drawing $(\Gamma', R)$ with the outer face drawn as $\Gamma_O$. It
	thus follows by Lemma~\ref{lemma:k_aligned:contraction} again that $(G, \CurR)$
	has an aligned drawing with the outer face drawn as $\Gamma_O$.
\end{proof}

\subsection{Alignment Complexity \texorpdfstring{$(1,0,\sentinel)$}{(1,0,-)}}
\label{sec:k_aligned}

We now consider $k$-aligned graphs $(G, \CurArr)$ of alignment complexity
$(1,0,\sentinel)$, i.e., every edge with two free endpoints intersects at most
one pseudoline, every 1-anchored edge has no interior intersection with a
pseudoline, and $2$-anchored edges are entirely forbidden.  In this section, we
prove that every such $k$-aligned graph has an aligned drawing. As before we can
assume that $(G, \CurArr)$ is a proper aligned triangulation. We show that if
the structure of the graph is not sufficiently simple, it contains an edge with
a special property. Further, we prove that every graph with a sufficiently
simple structure indeed has an aligned drawing. Together this again enables an
inductive proof that $(G, \CurArr)$ has an aligned drawing.
Fig.~\ref{fig:kaligned_base} illustrates the statement of the following lemma.

\begin{figure}[bt]
	\centering
	\includegraphics{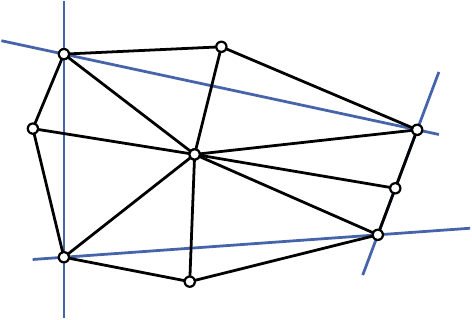}
	\caption{All possible variations of vertices and edges in
		Lemma~\ref{lemma:k_aligned:base_case}.}
	\label{fig:kaligned_base}
\end{figure}

\begin{lemma} \label{lemma:k_aligned:base_case}
	For $k\geq 2$ let $(G, \CurArr)$ be a proper $k$-aligned
	triangulation of alignment complexity $(1, 0, \sentinel)$ that neither
	contains a free edge, nor a $0$-anchored aligned edge, nor a separating
	triangle.
	Then \begin{inparaenum}[(i)]
	\item\label{item:intersection} every intersection contains a vertex,
		\item\label{item:cell} every cell of the pseudoline arrangement contains exactly one
			free vertex,
		\item \label{item:ps_cover} every pseudosegment is either covered by two aligned edges or
			it intersects a single~edge.
	\end{inparaenum}
\end{lemma}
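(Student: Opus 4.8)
The plan is to first read off, from the three hypotheses, a complete list of the admissible edge types, and then use this list together with the triangulation property to pin down the local structure around each intersection, each pseudosegment, and each cell, in that order.

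Since free edges and $0$-anchored aligned edges are excluded by hypothesis and $2$-anchored edges are excluded by the alignment complexity $(1,0,\sentinel)$, every edge of $G$ is of exactly one of three types: (A) a $1$-anchored non-aligned edge joining an aligned vertex to a free vertex inside a single cell; (B) a $0$-anchored $1$-crossed edge joining two free vertices of adjacent cells across a single pseudosegment; or (C) a $1$-anchored aligned edge joining an intersection vertex to an aligned (non-intersection) vertex along a pseudoline. For \eqref{item:intersection}, I would argue by contradiction: if an intersection $p=\Cur_i\cap\Cur_j$ carried no vertex, then $p$ could not lie in the interior of an edge $e$, since then both $\Cur_i$ and $\Cur_j$ would cross $e$, making $e$ $2$-crossed (excluded for non-aligned edges, and excluded by properness for aligned edges, which are $0$-crossed). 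Hence $p$ lies in the interior of a triangular face $\Delta=xyz$, and both pseudolines traverse $\Delta$ and cross at $p$. Each edge of $\Delta$ is crossed by at most one pseudoline (else it is $2$-crossed), and each pseudoline meets $\partial\Delta$ either at two edge-interiors or at one edge-interior and one vertex. A short case analysis of how the two arcs enter and leave $\Delta$ then produces, in every case, a forbidden edge: either an edge crossed twice, or an edge whose endpoint lies on a pseudoline while its interior is crossed (a $1$-anchored $1$-crossed edge), or an edge joining two vertices on distinct pseudolines (a $2$-anchored edge). I expect this topological case analysis to be the main obstacle.

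Next I would show that every pseudosegment is crossed by at most one edge: two type-(B) edges crossing the same pseudosegment consecutively would, between their crossing points, bound a triangular face of $G$ whose two crossed boundary edges share a vertex, so its third side joins two free vertices of one cell — a free edge, excluded. Using this bound I would analyze a single cell $C$. Subdividing every type-(B) edge at the point where it meets $\partial C$ turns $\bar C$ into a disk $D_C$ whose boundary vertices are the intersection vertices (corners, present by \eqref{item:intersection}), the aligned vertices on $\partial C$, and the new crossing-point vertices, and whose interior vertices are the free vertices of $C$. The at-most-one-crossing bound forces every face of $G$ crossed by a boundary pseudosegment to be crossed from a vertex to the opposite edge, so $D_C$ is genuinely triangulated.

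The key point is that $D_C$ has no chords: an interior edge joining two boundary vertices would join two pseudoline-vertices and hence be $2$-anchored (the only same-pseudoline pairs are already joined along $\partial C$ by type-(C) edges, or are the two corners of a pseudosegment, still on distinct pseudolines), while a crossing-point vertex sends its unique interior edge to a free vertex and thus bounds no chord. Since the free interior vertices are pairwise non-adjacent (no free edges), any interior vertex of $D_C$ has its whole link on $\partial C$; with no chords this link must be all of $\partial C$, so $D_C$ is a wheel with a single free center. Conversely, if $C$ had no free vertex, $D_C$ would be a triangulated polygon on at least six boundary vertices (each pseudosegment contributes a corner and, being neither a forbidden $2$-anchored edge $ab$ nor empty, a middle vertex or a crossing point), which necessarily has a diagonal, i.e.\ a forbidden chord. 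This yields \eqref{item:cell}. Finally \eqref{item:ps_cover} follows by combining \eqref{item:intersection}, the crossing bound, and \eqref{item:cell}: a pseudosegment carrying a middle aligned vertex has, by the edge characterization, two type-(C) aligned edges to its two corners and hence exactly one middle vertex (the ``two aligned edges'' case); a pseudosegment carrying none is crossed, hence by exactly one edge (the ``single edge'' case); and the mixed possibility — a middle vertex with an aligned edge on one side and a crossing on the other — is excluded because, by \eqref{item:cell}, such a vertex would have degree three (its corner together with the unique free centers of the two incident cells) and thus lie inside a separating triangle, contradicting the hypothesis.
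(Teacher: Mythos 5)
Your proposal is correct and establishes all three properties, and for properties~(i) and~(iii) it essentially follows the paper's own route: property~(i) is obtained in both cases by locating the vertexless intersection inside a triangular face and extracting a forbidden edge type from its boundary, and your degree-three/separating-triangle argument excluding the ``mixed'' pseudosegment is the same mechanism as the paper's Claim~4, which builds the $4$-cycle $u,w,x,y$ of intersection vertices and free cell-vertices around a flexible aligned vertex. The genuine divergence is in property~(ii). The paper proves ``at least one'' and ``at most one'' separately; for the latter it reroutes the pseudolines around the vertices and edges on the cell boundary so that the perturbed cell becomes a cut of $G$ whose inner component, being connected with at least two vertices, must contain a free edge. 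You instead subdivide the crossing edges on the cell boundary and argue that the resulting triangulated disk has no chords and pairwise non-adjacent interior vertices, hence is a wheel; this yields ``exactly one free vertex'' in a single combinatorial stroke and avoids the curve surgery, at the price of the chord analysis. Your explicit sub-lemma that each pseudosegment is crossed by at most one edge (two consecutive crossings would bound a triangle whose third side is a free edge) is left implicit in the paper and is a worthwhile addition. One small imprecision: in the no-chord step you dismiss an interior edge joining two vertices of the same pseudoline as ``$2$-anchored''; the correct reason such an edge cannot exist is that a pseudoline with respect to $G$ meets every edge in at most one point, so a non-aligned edge cannot have both endpoints on one pseudoline --- the same fact the paper itself uses without comment in its Claim~2. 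This does not affect the validity of your argument.
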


\begin{proof}

	The statement follows from the following sequence of claims.
	We refer to an aligned vertex that is not an intersection vertex as a
	\emph{flexible aligned} vertex. 

  \smallskip
	\noindent
	\textit{Claim 1.}
		Every intersection contains a vertex.
	\smallskip

	Assume that there is an intersection $I$ that does not contain a vertex. Since
	$(G, \CurArr)$ is proper,  every aligned edge of $G$ is $0$-crossed. Thus, no
	edge of $G$ contains $I$ in its interior. Moreover, since $(G, \CurArr)$ is a
	proper triangulation, the outer face of $G$ does not contain intersections of
	$\CurArr$. Hence, there is a triangular face $f$ of $G$ that is not the outer
	face and that contains $I$.  Thus, $f$ either has a $2$-anchored edge, a
	$1$-anchored $l_1$-crossed edge, $l_1\geq 1$, or an $l_0$-crossed edge,
	$l_0\geq 2$, on its boundary.  This contradicts that $(G, \CurArr)$ has
	alignment complexity $(1,0, \sentinel)$.

\begin{figure}[b]
	\centering
	\quad
	\subfloat[\label{fig:k_aligned:free_edges:cell}]{
		\includegraphics[page=5]{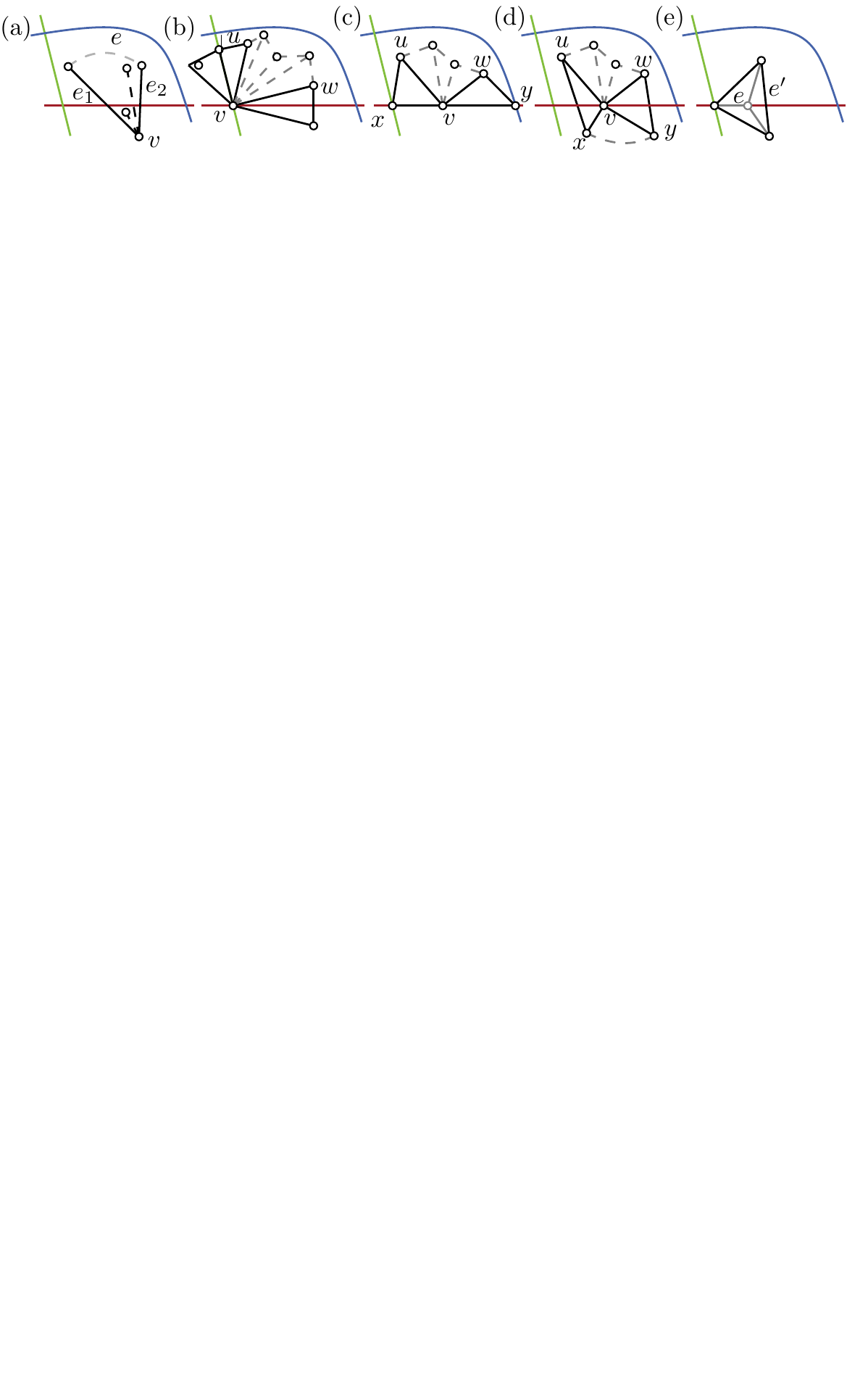}
	}
	\quad
	\subfloat[\label{fig:k_aligned:free_edges:trafo}]{
		\includegraphics[page=6]{figures/k_aligned_base_case_proof}
	}
	\quad
	\subfloat[\label{fig:k_aligned:free_edges:sep_triangle}]{
		\includegraphics[page=7]{figures/k_aligned_base_case_proof}
	}
	\caption{Illustrations for the proof Lemma~\ref{lemma:k_aligned:base_case}.}
	\label{fig:k_aligned:free_edges}
\end{figure}

  	\smallskip%
	\noindent%
	\textit{Claim 2.} %
	Every cell contains at least one free vertex.
	\smallskip%

	\begin{claimproof}	
	Let $\Cell$ be a cell of $\CurArr$. Assume that the boundary of $\Cell$ is
	neither covered by $1$-aligned edges nor crossed by an edge. Since $(G,
	\CurArr)$ is proper, there is a face $f$ of $G$ that entirely contains $\Cell$
	in its interior. Further, $G$ is triangulated and therefore, $f$ is a
	triangle. But every triangle that contains a cell $\Cell$ in its interior
	either has a $2$-anchored edge, a $1$-anchored $l_1$-crossed edge,
	$l_1\geq 1$, or an $l_0$-crossed edge, $l_0\geq 2$, on its boundary. The alignment
	complexity of $(G, \CurArr)$ excludes these types of edges,	thus, there is
	either a $1$-crossed edge with an interior intersection with the boundary of
	$\Cell$, or $\Cell$ is covered by $1$-anchored aligned edges.
	
	If there is an edge $e$ with an interior intersection with the boundary of
	$\Cell$, one endpoint of $e$ lies in the interior of $\Cell$.  Thus, in the
	following we can assume that no such edges exist.  Therefore, the boundary of
	$\Cell$ is covered by $1$-anchored aligned edges.  There are two possibilities
	to triangulate the interior of the cell, either by edges routed through the
	interior of $\Cell$ with endpoints on the boundary of $\Cell$ or with interior
	vertices. The former is not possible, since such a non-aligned edge would
	either be $2$-anchored or have both of its endpoints on the same pseudoline.
	Since $(G, \CurArr)$ is an aligned graph of alignment complexity $(1, 0,
	\sentinel)$, it does not contain such edges.  Thus, every proper aligned
	triangulation of the graph induced by edges on the boundary of $\Cell$
	contains a vertex in the interior of $\Cell$.
	%
\end{claimproof}

	\smallskip
	\noindent
	\textit{Claim 3.}
	Every cell contains at most one free vertex.
	\smallskip
	
	\begin{claimproof} The following proof is similar to Claim~2 in the proof of
		Lemma~\ref{lem:one-line-triangle}.  Let $\Cell$ be a cell and assume for the
		sake of a contradiction that $\Cell$ contains more than one vertex in its
		interior; see Fig.~\ref{fig:k_aligned:free_edges:cell}.  These vertices are
		connected by a set of edges to adjacent cells.  If $\Cell$ contains a vertex $v$ or
		an edge $e$ on its boundary, we reroute the corresponding pseudolines close
		to $v$ and $e$, respectively, such that $v$ and $e$ are now outside of
		$\Cell$; refer to Fig.~\ref{fig:k_aligned:free_edges:trafo}.  Let $\Cell'$
		be the resulting cell, it represents a cut in the graph with two components
		$A$ and $B$, where $\Cell'$ contains $B$ in its interior.  It is not
		difficult to see that the modified pseudolines are still pseudolines with
		respect to $G$.  Since $(G, \CurArr)$ neither contains $2$-anchored edges,
		nor $1$-anchored $l_1$-crossed edges, $l_1 \geq 1$,
		nor $l_0$-crossed edges, $l_0 \geq 2$, every edge of $(G, \CurArr')$ intersects
		the boundary of $\Cell'$ at most once.  Further, $G$ is a triangulation and
		therefore, $B$ is connected and since it contains at least two vertices it
		also contains at least one free edge, contradicting our initial assumption.
	\end{claimproof}
	
	\smallskip
	\noindent
	\textit{Claim  4.}
	Every flexible aligned vertex is incident to two $1$-anchored aligned edges.%
	\smallskip

	\begin{claimproof}
	Let $v$ be a flexible aligned vertex that lies on a pseudosegment $\CurSeg$ of
	$\CurArr$; refer to Fig.~\ref{fig:k_aligned:free_edges:sep_triangle}. Since
	$k\geq 2$, $\CurSeg$ is either incident to one or two intersection vertices.
	Let $u$ be an intersection vertex incident to $\CurSeg$ and let $\CurSeg$ be
	on the boundary of the cells $\Cell_1, \Cell_2$. First, we will show that $u$
	is adjacent to a vertex $x$ in the interior of $\Cell_1$ and a vertex $y$ in
	the interior of $\Cell_2$, respectively.  Depending on whether $\CurSeg$ is
	incident to one or two intersection vertices, the edge $ux$ helps to find
	either a separating triangle or a $4$-cycle that each contains $v$ in its
	interior.

	We initially show that the graph contains the edge $ux$.  Since $G$ is triangulated
	there is a fan of triangles around $u$. Further, all edges in $(G,\CurArr)$
	are at most $1$-crossed, hence we find a vertex $x'$ in the interior of
	$\Cell_1$. Due to Claim~3 and Claim~4 the vertex in the interior of $\Cell_1$
	is unique. Thus, we have that $x'$ is equal to $x$ and therefore $G$ contains
	the edge $ux$. Correspondingly, we find a vertex $y$ in the interior of
	$\Cell_2$ adjacent to $u$.
	
	Consider the case where $\CurSeg$ contains only a single intersection
	vertex, i.e, $\CurSeg$ intersects the outer face of $G$.  Since $(G, \CurArr)$
	is proper (edges on the outer face are $1$-crossed), $G$ contains the edge
	$xy$. Thus, we find a triangle with the vertices $x, y$ and $u$ that contains
	$v$ in its interior.  This contradicts the assumption that $G$ does not have
	a separating triangle.  Therefore, if $\CurSeg$ is incident to a single
	intersection, there is no flexible aligned vertex that lies in the interior of
	$\CurSeg$.
	
	Now consider the case where $\CurSeg$ is incident to two intersection vertices
	$u$ and $w$.
	As shown before, the vertices $u, w$ are each adjacent to the free vertices $x$
	and $y$. Therefore, vertices $u, w, x, y$ build a $4$-cycle containing $v$ in its
	interior.  Since $G$ does not contain a separating triangle, it cannot contain
	the edge $xy$. Moreover, $v$ is the only vertex in the interior of $\CurSeg$,
	as otherwise, we would find a free aligned edge. Finally, since $(G, \CurArr)$ is
	an aligned triangulation, the vertex $v$ is connected to all four vertices
	and thus $v$ is incident to two $1$-anchored aligned edges. 
	\end{claimproof}	

	Claim~1 proves that $(G, \CurArr)$ has Property~(\ref{item:intersection}).
	Claim~2 and Claim~3 together prove that Property~(\ref{item:cell}) is
	satisfied.  Since $(G, \CurArr)$ is an aligned triangulation,
	Property~(\ref{item:ps_cover}) immediately follows from
	Property~(\ref{item:cell}) and Claim 4. 
\end{proof}

\begin{lemma}
	\label{lemma:k_aligned:base_case_drawing}
	Let $(G, \CurArr)$ be a proper $k$-aligned
	triangulation of alignment complexity $(1, 0, \sentinel)$ that does neither
	contain a free edge, nor a $0$-anchored aligned edge, nor a separating
	triangle. Let $\Arr$ be a line arrangement homeomorphic to the pseudoline
	arrangement $\CurArr$.
	Then $(G, \CurArr)$ has an aligned drawing $(\Gamma, \Arr)$.
\end{lemma}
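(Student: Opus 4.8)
The plan is to read off the rigid combinatorial structure supplied by Lemma~\ref{lemma:k_aligned:base_case} and realize it directly on the given line arrangement $\Arr$. By that lemma every intersection of $\CurArr$ carries an intersection vertex, every cell $\Cell$ carries exactly one free vertex, and every pseudosegment is either covered by two $1$-anchored aligned edges meeting in a flexible aligned vertex, or crossed by a single $0$-anchored $1$-crossed edge. Since $\Arr$ is homeomorphic to $\CurArr$, I would first pin each intersection vertex to the corresponding vertex of $\Arr$; this placement is forced. Each flexible aligned vertex lies on a pseudosegment between two intersection vertices, so I place it in the relative interior of the corresponding segment of $\Arr$, respecting the order along the line. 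It then remains to place one free vertex inside each cell of $\Arr$ and to argue that drawing every edge as a straight segment yields an aligned drawing.

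Two of the three edge types are unproblematic. Aligned edges run along the lines of $\Arr$ and are realized correctly once the aligned vertices respect their order on each line. An edge joining a free vertex to a vertex on the boundary of its own cell is $1$-anchored and $0$-crossed; since every cell of a line arrangement is convex and the free vertex lies in its interior, the bundle of all such edges emanating from a free vertex forms a planar fan inside the convex cell. Hence the interior of each cell is drawn without crossings, \emph{independently of the precise position of its free vertex}.

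The crux, which I expect to be the main obstacle, is the single family of crossing edges. For each open pseudosegment with endpoints $a,b$, shared by cells $\Cell_1,\Cell_2$ on opposite sides of the supporting line $\ell$, the unique crossing edge joins the free vertices $f_1\in\Cell_1$ and $f_2\in\Cell_2$ and must meet $\ell$ inside the open segment $(a,b)$. Equivalently, the quadrilateral $a\,f_1\,b\,f_2$ must be convex, i.e.\ $a$ and $b$ must be separated by the line through $f_1$ and $f_2$; only then is $f_1f_2$ the interior diagonal meeting $ab$, and only then are the two incident faces $a f_1 f_2$ and $b f_1 f_2$ drawn as proper triangles. This is a \emph{joint} condition on the two free vertices, and a cell may be incident to many open pseudosegments at once, so naive choices such as the centroid, the incenter, or a fixed corner do not suffice; the placement of all free vertices must be coordinated globally.

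To carry this out I would establish a placement lemma stating that the free vertices can be chosen inside their cells so that all separation conditions hold simultaneously. A convenient route is to first fix, for every open pseudosegment, a crossing point $m\in(a,b)$, treat $m$ as a pinned boundary vertex of both incident cells, and draw each cell independently with a convex drawing via Tutte~\cite{doi:10.1112/plms/s3-13.1.743} on the now-closed convex boundary, exactly as in the proof of Theorem~\ref{lemma:k_aligned:simple}; this already yields a planar \emph{polyline} aligned drawing in which each crossing edge bends at its point $m$. I would then straighten the crossing edges, exploiting the remaining freedom in the positions of the points $m$ and of the free vertices to enforce collinearity of $f_1$, $m$, $f_2$ while keeping each free vertex in the interior of its convex cell and each quadrilateral $a f_1 b f_2$ convex; because these are open constraints over convex domains, a simultaneous feasible choice exists and can be reached by a continuity or barycentric argument. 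Finally, I would check that the resulting straight-line drawing $(\Gamma,\Arr)$ is homeomorphic to $(G,\CurArr)$: every line of $\Arr$ meets precisely the vertices and edges that the corresponding pseudoline meets in $G$, in the same order, and no two edges cross, so $(\Gamma,\Arr)$ is the desired aligned drawing.
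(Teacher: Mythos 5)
Your first two paragraphs essentially reconstruct the paper's proof: pin intersection vertices to the intersections of $\Arr$, put each flexible aligned vertex on its segment, put the unique free vertex of each cell anywhere in that cell's interior, and draw all edges straight; the within-cell edges are harmless by convexity of the cells. Where your proposal goes wrong is in the third paragraph, where you declare the crossing edges to be a genuinely global problem and then ``resolve'' it with an argument that is not actually given. The assertion that a simultaneous feasible choice of the points $m$ and the free vertices ``exists and can be reached by a continuity or barycentric argument'' is exactly the statement that needs proof; individually open and convex constraint sets need not have a common point, so nothing has been established. As written, the crux of your proof is a gap, and the detour through Tutte drawings with pinned crossing points is machinery in search of a problem.

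The missing observation --- and the single sentence the paper uses to close this case --- is that the union of two cells of a line arrangement that are adjacent across a segment of a line $L$ is convex: the two cells have the same sign vector with respect to every line except $L$, so their closed union is the intersection of the remaining half-planes. Consequently, for \emph{any} placement of $f_1$ and $f_2$ in the open cells $\Cell_1$ and $\Cell_2$, the segment $f_1f_2$ stays in the interior of this convex union, crosses $L$ exactly once, and does so at a point of $L$ interior to the union, which is precisely the open shared segment $(a,b)$. Since the two diagonals $ab$ and $f_1f_2$ of the quadrilateral $a\,f_1\,b\,f_2$ then cross in their relative interiors, that quadrilateral is automatically convex and the two incident faces are proper triangles. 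So the ``joint condition'' you identify is satisfied for free; your claim that centroids or other naive per-cell choices ``do not suffice'' and that the placement ``must be coordinated globally'' is false, and once this is seen your elaborate straightening step is unnecessary. I recommend replacing the third and fourth paragraphs of your proposal with this convexity observation.
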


\begin{proof}
	We obtain a drawing $(\Gamma, \Arr)$ by placing every free vertex in its cell,
	every aligned vertex on its pseudosegment and every intersection vertex on its
	intersection.  According to Lemma~\ref{lemma:k_aligned:base_case} every cell
	and every intersection contains exactly one vertex and each pseudosegment is
	either crossed by an edge or it is covered by two aligned edges.  Observe that
	the union of two adjacent cells of the arrangement $\Arr$ is convex.  Thus,
	this drawing of $G$ has an homeomorphic embedding to $(G, \CurArr)$ and every
	edge intersects in $(\Gamma, \Arr)$ the line $L \in \Arr$ corresponding to the
	pseudoline $\Cur \in \CurArr$ in $(G, \CurArr)$
\end{proof}

We prove the following theorem along the same lines as
Theorem~\ref{theorem:r_aligned:drawing}.

\newcommand{\theoremKAlignedDrawing}{
	Every $k$-aligned graph $(G, \CurArr)$ of alignment complexity $(1,
	0,\sentinel)$ with a stretchable pseudoline arrangement $\CurArr$
	has an aligned drawing.
}
\begin{theorem}
	\label{lemma:k_aligned:draw_triangulation}
	\theoremKAlignedDrawing
\end{theorem}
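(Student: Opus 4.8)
The plan is to mirror the inductive structure of Theorem~\ref{theorem:r_aligned:drawing}, using the stretched line arrangement $\Arr$ of $\CurArr$ as the common geometric scaffold throughout the induction. First I would apply Lemma~\ref{lemma:bi_connected}, Lemma~\ref{lemma:k_aligned:triangulation}, and Lemma~\ref{lemma:k_aligned:proper_triangulation} to reduce to the case where $(G, \CurArr)$ is a proper $k$-aligned triangulation of alignment complexity $(1,0,\sentinel)$; these augmentations only add edges of the same alignment complexity and never add aligned edges, so an aligned drawing of the augmented graph restricts to an aligned drawing of the original. Since $\CurArr$ is stretchable by hypothesis, fix a line arrangement $\Arr$ homeomorphic to $\CurArr$ once and for all, and carry out the induction with respect to this fixed $\Arr$.

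The induction is on the number of vertices of $G$. The base cases are exactly the graphs that are ``too simple'' to contain a reducible edge: by Lemma~\ref{lemma:k_aligned:base_case} and Lemma~\ref{lemma:k_aligned:base_case_drawing}, any proper $k$-aligned triangulation that contains neither a free edge, nor a $0$-anchored aligned edge, nor a separating triangle has an aligned drawing on $\Arr$ directly (place each free vertex in its cell, each flexible aligned vertex on its pseudosegment, and each intersection vertex on the corresponding intersection point of $\Arr$). For the inductive step I would split into the usual three cases. If $G$ has a separating triangle $T$, apply Lemma~\ref{lemma:k_aligned:separating_triangle}: the split components $\gout$ and $\gin$ are smaller $k$-aligned triangulations, so by induction $(\gout, \CurArr)$ has an aligned drawing $(\Dout, \Arr)$, and $(\gin, \CurArr)$ has an aligned drawing on the \emph{same} $\Arr$ with its outer face drawn as $\Dout[T]$; the lemma then merges them. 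If instead $G$ has no separating triangle but contains an interior $0$-anchored aligned edge or an interior free edge $e$ that is not a chord, then Lemma~\ref{lemma:k_aligned:contraction} guarantees $(G/e, \CurArr)$ is again a proper $k$-aligned triangulation of alignment complexity $(1,0,\sentinel)$; by induction it has an aligned drawing on $\Arr$, and the same lemma uncontracts $e$ to produce an aligned drawing of $(G, \CurArr)$.

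The delicate point, and the place where this proof differs from the one-line case, is ensuring that the case distinction is \emph{exhaustive}: every proper $k$-aligned triangulation is either one of the base cases handled by Lemma~\ref{lemma:k_aligned:base_case_drawing}, or contains a separating triangle, or contains an interior $0$-anchored aligned or interior free edge that is neither a chord nor part of a separating triangle. This is precisely the contrapositive content of Lemma~\ref{lemma:k_aligned:base_case}: if no separating triangle exists and no such reducible edge exists, we are forced into the base-case structure. I would also need to confirm that when a reducible edge exists, it can be chosen to avoid chords and separating triangles — chords can be handled as in Theorem~\ref{theorem:r_aligned:drawing} by splitting along them (though a proper triangulation after the augmentation steps may be assumed chordless, or chords split the instance into smaller aligned graphs), and separating triangles are dealt with first so that in the contraction case the edge is genuinely not on one. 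The main obstacle I anticipate is bookkeeping the interaction between the fixed arrangement $\Arr$ and the recursive subinstances: unlike the single-line theorem, the outer face here is not prescribed, so I must verify that the base-case drawing of Lemma~\ref{lemma:k_aligned:base_case_drawing} and every recursive invocation consistently use the \emph{one} stretched arrangement $\Arr$, which is exactly what makes the merge step of Lemma~\ref{lemma:k_aligned:separating_triangle} applicable.
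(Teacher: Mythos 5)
Your overall inductive skeleton matches the paper's, but there is a genuine gap in the separating-triangle case. You write that, by induction, $(\gin, \CurArr)$ has an aligned drawing on the same $\Arr$ \emph{with its outer face drawn as $\Dout[T]$}. The induction hypothesis is the theorem statement itself, which only guarantees \emph{some} aligned drawing with respect to $\Arr$ — it gives you no control over where the outer triangle of $\gin$ lands. Sharing the arrangement $\Arr$ is not enough to make the merge of Lemma~\ref{lemma:k_aligned:separating_triangle} work: that lemma explicitly requires the inner drawing's outer face to coincide with $\Dout[T]$, and nothing in your induction delivers that. You even flag in your last paragraph that ``the outer face here is not prescribed,'' but you then diagnose the difficulty as consistency of $\Arr$ across recursive calls, which is not where the problem lies.

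The paper closes this gap by exploiting the alignment complexity $(1,0,\sentinel)$: since no edge is $2$-anchored or more than $1$-crossed, the separating triangle $T$ is intersected by at most one pseudoline $\Cur$, so $(\gin, \Cur)$ is a $1$-aligned triangulation. One can therefore invoke Theorem~\ref{theorem:r_aligned:drawing} — the strengthened one-line result, which \emph{does} accept a prescribed convex aligned drawing of the outer face — to draw $\gin$ inside $\Dout[T]$; only $\gout$ is handled by the induction hypothesis of the present theorem. This is precisely why the paper needed the prescribed-outer-face version of the one-line theorem in the first place. Your treatment of the remaining cases (reduction to a proper triangulation, the base case via Lemmas~\ref{lemma:k_aligned:base_case} and~\ref{lemma:k_aligned:base_case_drawing}, and the contraction step, where properness forces every chord to be at least $1$-crossed and hence not free or $0$-anchored aligned) is consistent with the paper.
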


\begin{proof} Let $(G, \CurArr)$ be an arbitrary aligned graph, such that
	$\CurArr$ is a stretchable  pseudoline arrangement, let us denote by $A$ the corresponding line arrangement. By
	Lemma~\ref{lemma:k_aligned:proper_triangulation}, we obtain a proper
	$k$-aligned triangulation $(G_T, \CurArr)$ that contains a rigid subdivision
	of $G$ as a subgraph. Assume that $(G_T, \CurArr)$  has an aligned drawing
	$(\Gamma_T, A)$.  Let $(\Gamma', A)$ be the drawing
	obtained from $(\Gamma_T, A)$ by removing all subdivision vertices $v$ and
	merging the two edges incident to $v$ at the common endpoint. Recall that a
	subdivision vertex in a rigid subdivision of $(G, \CurArr)$ lies on an
	intersection in $\CurArr$.  Hence the drawing $(\Gamma', A)$ is a
	straight-line aligned drawing and contains an aligned drawing $(\Gamma, A)$
	of $(G, \CurArr)$. 
	
	We now show that $(G_T, \CurArr)$ indeed has an aligned drawing.
	We prove this by induction on the size of the instance $(G_T, \CurArr)$.
	If $(G_T, \CurArr)$  neither
	contains a 
	free edge, nor a $0$-anchored aligned edge, nor a separating
	triangle, then, by Lemma~\ref{lemma:k_aligned:base_case_drawing} there is
	an aligned drawing $(\Gamma_T, \Arr)$.

	If $G$ contains a separating triangle $T$, let $\gin$ and $\gout$ be the
	respective split components with $\gin \cap \gout = T$. Since the alignment complexity of  $(G, \CurArr)$  
	is $(1,	0,\sentinel)$,  triangle $T$
	is intersected by at most one pseudoline $\Cur$. It follows that
	$(\gout, \CurArr)$ is a $k$-aligned triangulation and that $(\gin, \Cur)$ is
	a $1$-aligned triangulation.
	By the induction hypothesis there exists an aligned drawing $(\Dout,
	\Arr)$ of  $(\gout, \CurArr)$.  Let $\Dout[T]$ be
	the drawing of $T$ in $\Dout$.  By
	Theorem~\ref{theorem:r_aligned:drawing}, we obtain an aligned drawing
	$(\Din, L)$ with $T$ drawn as $\Dout[T]$.  Moreover, since the drawing
	of $T$ is fixed and is intersected only by line $L$, $(\Din, \Arr)$ is an aligned drawing.  Thus,
	according to Lemma~\ref{lemma:k_aligned:separating_triangle}, there
	exists an aligned drawing of $(G, \CurArr)$.

	 If $G_T$ does not contain separating triangles but contains either a
	free edge or a $0$-anchored aligned edge $e$, let $G_T/e$
	be the graph after the contraction of $e$.  Observe that, since $(G_T,
	\CurArr)$ is proper,  every edge on the outer face is $1$-crossed,
	and therefore every chord is $\ell$-crossed, $\ell\geq1$. Thus, $e$ is an interior edge  
	of $(G_T, \CurArr)$ and is not a chord. Therefore, by Lemma~\ref{lemma:k_aligned:contraction},
	$(G_T/e, \CurArr)$ is a proper aligned triangulation. 
	By induction hypothesis, there exists an aligned
	drawing of $(G_T/e, \CurArr)$, and thus, by
	the same lemma, there exists an aligned drawing of
	$(G_T, \CurArr)$. 
\end{proof}

\begin{figure}[h]
	\centering
	\includegraphics[page=3]{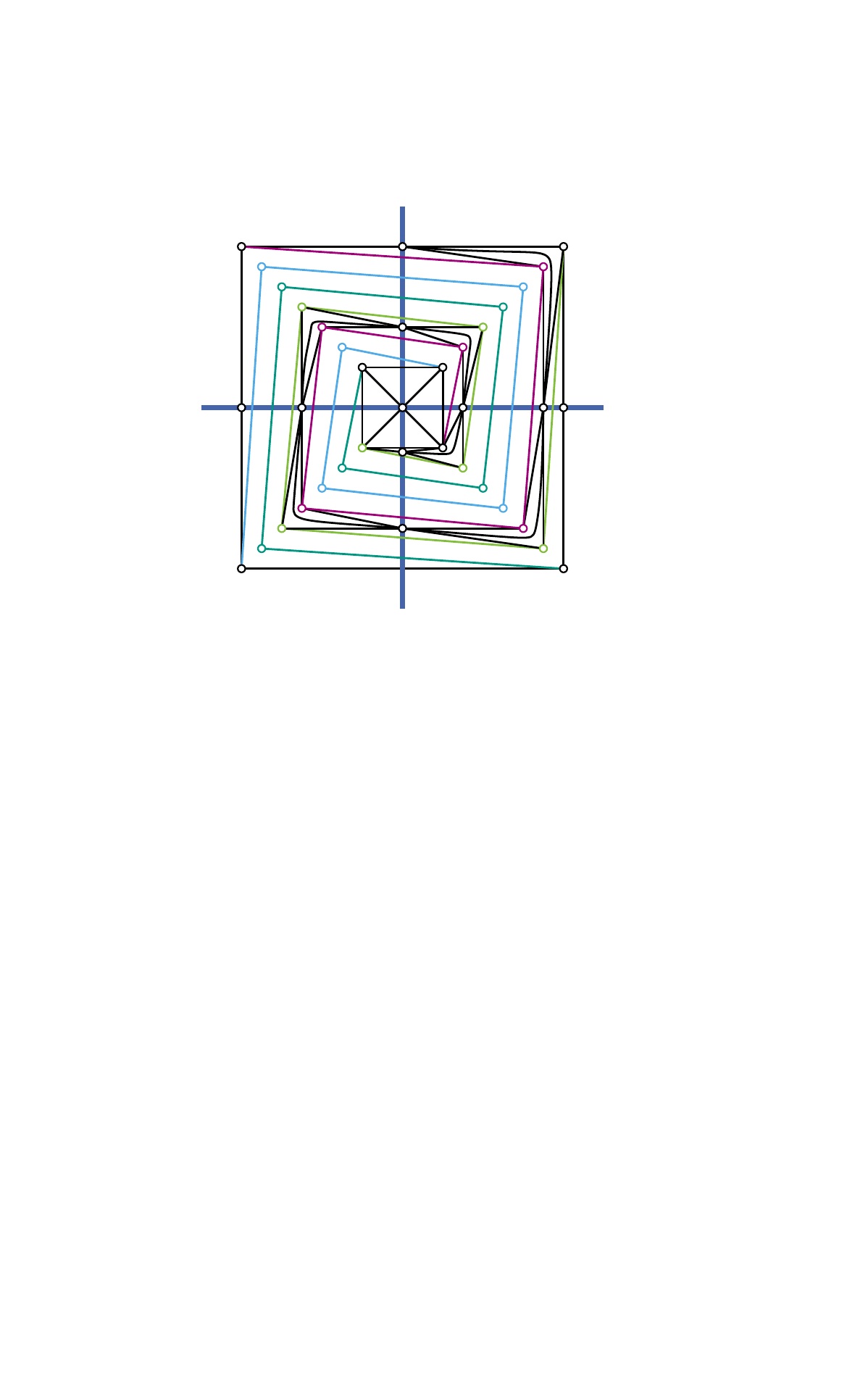}
	\caption{Placement of subdivision vertex to obtain a $2$-aligned graph
of alignment complexity $(1, 0, \sentinel)$.}
	\label{fig:2_aligned:subdivision}
\end{figure}

\begin{theorem}
	\label{theorem:2_aligned_one_bend}
	Every $2$-aligned graph has an aligned drawing with at most one bend per edge.
\end{theorem}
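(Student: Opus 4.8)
The plan is to reduce an arbitrary $2$-aligned graph to one of alignment complexity $(1,0,\sentinel)$ by subdividing each ``forbidden'' edge with a single subdivision vertex, and then to invoke Theorem~\ref{lemma:k_aligned:draw_triangulation}; every such subdivision vertex will later be realised as a bend. First I would record the possible edge types. Since a pseudoline with respect to $G$ meets each edge either in its entirety or in at most one point, each of the two pseudolines crosses a given non-aligned edge at most once. Hence a $2$-anchored edge (one endpoint on $\Cur_1$, one on $\Cur_2$) is $0$-crossed; a $1$-anchored edge is at most $1$-crossed; a $0$-anchored edge is at most $2$-crossed; and every aligned edge is either $0$-anchored and at most $1$-crossed, or $1$-anchored and $0$-crossed. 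Consequently the only edges that violate the requirements of alignment complexity $(1,0,\sentinel)$ are the $2$-anchored edges (forbidden since $l_2=\sentinel$), the $1$-anchored $1$-crossed edges (forbidden since $l_1=0$), and the $0$-anchored $2$-crossed edges (forbidden since $l_0=1$). In particular, aligned edges never need to be modified.

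Second, I would show that each such edge can be split into two admissible edges by a single subdivision vertex $w$. For a $2$-anchored edge $uv$ with $u$ on $\Cur_1$ and $v$ on $\Cur_2$, its interior lies in one cell $\Cell$; placing $w$ as a free vertex in the interior of $\Cell$ makes both $uw$ and $wv$ $1$-anchored and $0$-crossed. For a $1$-anchored $1$-crossed edge $uv$ with $u$ on $\Cur_i$, I would place $w$ as a free vertex in the cell incident to $u$, between $u$ and the unique crossing, so that $uw$ is $1$-anchored $0$-crossed and $wv$ is $0$-anchored $1$-crossed. For a $0$-anchored $2$-crossed edge I would place $w$ as a free vertex in the cell lying strictly between the two crossings, or, in the degenerate case where the edge runs through the intersection point of $\Cur_1$ and $\Cur_2$, place $w$ on that intersection; then each of $uw$ and $wv$ is $0$-anchored $1$-crossed (respectively $1$-anchored $0$-crossed). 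In every case both new edges satisfy $(1,0,\sentinel)$, and since subdivision preserves planarity, the embedding, and the property of being a pseudoline with respect to the graph, the result $(G',\CurArr)$ is again a $2$-aligned graph, now of alignment complexity $(1,0,\sentinel)$, and it is a subdivision of $(G,\CurArr)$ in which all new vertices have degree two.

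Third, since $\CurArr$ consists of only two pseudolines (meeting in a single point), it is a stretchable pseudoline arrangement, so Theorem~\ref{lemma:k_aligned:draw_triangulation} yields a straight-line aligned drawing $(\Gamma',A)$ of $(G',\CurArr)$. Finally I would suppress the degree-two subdivision vertices, reinterpreting each as a bend: the two straight segments $uw$ and $wv$ become a single polyline edge $uv$ with at most one bend, while unmodified edges remain straight. Because $(G',\CurArr)$ is a subdivision of $(G,\CurArr)$ and $(\Gamma',A)$ is homeomorphic to $(G',\CurArr)$, suppressing these degree-two vertices preserves the homeomorphism, so the resulting polyline drawing $(\Gamma,A)$ is an aligned drawing of $(G,\CurArr)$ with at most one bend per edge.

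I expect the main obstacle to lie in the second step: verifying that a single subdivision vertex always suffices and can be positioned so that both resulting edges genuinely fall into the permitted part of the arrangement. The subtle point is the degenerate case in which a $0$-anchored edge passes through the intersection of the two pseudolines, together with the careful bookkeeping needed to confirm that $(G',\CurArr)$ has alignment complexity exactly $(1,0,\sentinel)$ and that it remains a valid $2$-aligned graph; the remainder of the argument is then a direct appeal to Theorem~\ref{lemma:k_aligned:draw_triangulation} and the topological equivalence between subdivision vertices and bends.
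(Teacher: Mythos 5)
Your proposal is correct and follows essentially the same route as the paper: the paper's proof likewise subdivides the $2$-anchored, the $1$-anchored $1$-crossed, and the $2$-crossed edges once each (as shown in its Fig.~\ref{fig:2_aligned:subdivision}, including the placement on the intersection point in the degenerate case), obtains a $2$-aligned graph of alignment complexity $(1,0,\sentinel)$, and applies Theorem~\ref{lemma:k_aligned:draw_triangulation}, reading the subdivision vertices as bends. Your write-up merely spells out the case analysis and vertex placements that the paper delegates to the figure.
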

\begin{proof}
We subdivide $2$-crossed, $2$-anchored or $1$-crossed $1$-anchored edges
as depicted in Fig.~\ref{fig:2_aligned:subdivision}. Thus, we obtain a 2-aligned
graph $(G', \CurArr)$ of alignment complexity
$(1,0, \sentinel)$. Applying Theorem~\ref{lemma:k_aligned:draw_triangulation} to $(G',
\CurArr)$ yields a one bend drawing of $(G, \CurArr)$.
\end{proof}

\section{Conclusion}
\label{sec:conclusion}

\begin{figure}[t!]
	\centering	
	\includegraphics[page=2]{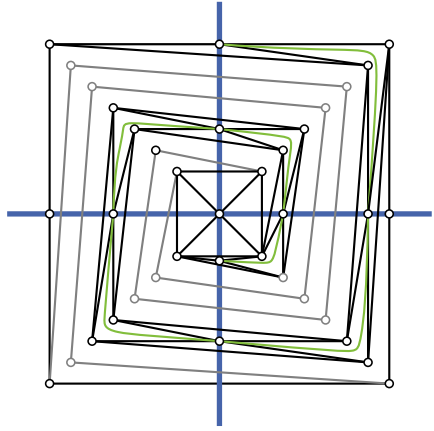}
	\caption{
		 Sketch of a $2$-aligned triangulation without
	aligned or free edges. The green edges are $2$-anchored. 
	The triangulation can be completed as indicated by the
	black edges.}

	\label{fig:2_aligned:2_anchored}
\end{figure}

In this paper, we showed that if $\CurArr$ is stretchable, then every $k$-aligned
graph $(G, \CurArr)$ of alignment complexity $(1, 0,\sentinel)$ has a
straight-line aligned drawing.  As an intermediate result, we showed that a
$1$-aligned graph $(G, \CurR)$ has an aligned drawing with a fixed convex drawing of
the outer face. We showed that the less restricted version of this problem,
where we are only given a set of vertices to be aligned, is $\cNP$-hard but
fixed-parameter tractable.

The case of more general alignment complexities is wide open; refer to
Table~\ref{tab:solved_cases}. Our techniques imply the existence of one-bend
aligned drawings of general 2-aligned graphs as
Theorem~\ref{theorem:2_aligned_one_bend} shows.  However, the existence of
straight-line aligned drawings are unknown even if in addition to $1$-crossed
edges, we only allow $2$-anchored edges, i.e., in the case of alignment
complexity $(1,0,0)$.  In particular, there exist $2$-aligned graphs that
neither contain a free edge nor an aligned edge but their size is unbounded in
the size of the arrangement; see Fig.~\ref{fig:2_aligned:2_anchored}.  It seems
that further reductions are necessary to arrive at a base case that can easily
be drawn.  This motivates the following questions.

\begin{compactenum}[1)]
\item What are all the combinations of line numbers $k$ and alignment
	complexities $C$ such that for every $k$-aligned graph $(G,\CurArr)$ of
	alignment complexity $C$ there exists a straight-line aligned drawing provided
	$\CurArr$ is stretchable?

\item Given a $k$-aligned graph $(G,\CurArr)$ and a line arrangement $A$
	homeomorphic to $\CurArr$, what is the computational complexity of deciding whether $(G,\CurArr)$ admits a straight-line aligned drawing $(\Gamma,A)$?
\end{compactenum}

\paragraph{Acknowledgements}  We thank the anonymous reviewers for thoroughly reading an earlier version of this
paper and for providing useful comments.

\bibliographystyle{abbrvurl}
\bibliography{strings,vol}

\end{document}